\tikzstyle{startstop} = [rectangle, rounded corners, minimum width=3cm, minimum height=1cm,text centered, draw=black, fill=red!30]
\tikzstyle{process} = [rectangle, minimum width=3cm, minimum height=0.8cm, text centered, draw=black, fill=orange!30]
\tikzstyle{rounded_process} = [rectangle, rounded corners, minimum width=3cm, minimum height=0.8cm, text centered, draw=black, fill=orange!30]
\tikzstyle{rounded_process_highlight} = [rectangle, rounded corners, minimum width=3cm, minimum height=0.8cm, text centered, draw=black, thick, fill=red!30]
\tikzstyle{arrow} = [thick,->,>=stealth]
\titleformat{\section}[block]{\large\scshape\centering}{\thesection.}{1em}{} % Change the look of the section titles
\titleformat{\subsection}[block]{\large}{\thesubsection.}{1em}{} % Change the look of the section titles
\newtheorem{theorem}{Theorem}[section]
\newtheorem{lemma}[theorem]{Lemma}
\newtheorem{corollary}[theorem]{Corollary}
\numberwithin{equation}{section}
\newcommand{\given}{\,|\,}
\newcommand{\T}{\top}
\newcommand{\bs}{\mathbf{s}}
\newcommand{\bS}{\mathbf{S}}
\newcommand{\calS}{\mathcal{S}}
\newcommand{\bY}{\mathbf{Y}}
\newcommand{\by}{\mathbf{y}}
\newcommand{\bX}{\mathbf{X}}
\newcommand{\bx}{\mathbf{x}}
\newcommand{\bz}{\mathbf{z}}
\newcommand{\bV}{\mathbf{V}}
\newcommand{\bA}{\mathbf{A}}
\newcommand{\bF}{\mathbf{F}}
\newcommand{\fb}{\mathbf{f}}
\newcommand{\bu}{\mathbf{u}}
\newcommand{\bU}{\mathbf{U}}
\newcommand{\bM}{\mathbf{M}}
\newcommand{\calH}{{\cal H}}
\newcommand{\bI}{\mathbf{I}}
\newcommand{\bD}{\mathbf{D}}
\newcommand{\bL}{\mathbf{L}}
\newcommand{\bv}{\mathbf{v}}
\newcommand{\bmu}{\boldsymbol{\mu}}
\newcommand{\bSigma}{\boldsymbol{\Sigma}}
\newcommand{\beps}{\boldsymbol{\epsilon}}
\newcommand{\bpsi}{\boldsymbol{\psi}}
\newcommand{\bbeta}{\boldsymbol{\beta}}
\newcommand{\bLambda}{\boldsymbol{\Lambda}}
\newcommand{\bgamma}{\boldsymbol{\gamma}}
\newcommand{\etab}{\boldsymbol{\eta}}
\newcommand{\bPsi}{\boldsymbol{\Psi}}
\newcommand{\brho}{\boldsymbol{\rho}}
\newcommand{\blambda}{\boldsymbol{\lambda}}
\newcommand{\bzero}{\mathbf{0}}
\title{\vspace{-15mm}\fontsize{24pt}{10pt}\selectfont\textbf{
Projected Bayesian Spatial Factor Models
}} % Article title
\date{}
\author{
\large
{\textsc{Lu Zhang}}\\[2mm]
{\textsc{USC PPHS Division of Biostatistics}}\\[2mm]
\normalsize \href{mailto:lzhang63@use.edu}{lzhang63@use.edu}\\[2mm] 
%\large
%{\textsc{Sudipto Banerjee}}\\[2mm]
%{\textsc{UCLA Department of Biostatistics }}\\[2mm]
%\normalsize \href{mailto:sudipto@ucla.edu}{sudipto@ucla.edu}\\[2mm] % Your email address
}
\providecommand{\keywords}[1]{\textbf{\textit{Key words:}} #1}
\begin{document}
\maketitle % Insert title
\thispagestyle{fancy} % All pages have headers and footers

\label{firstpage}

\begin{abstract}
Factor models balance flexibility, identifiability, and computational efficiency, with Bayesian spatial factor models particularly prone to identifiability challenges and scaling limitations. This work introduces Projected Bayesian Spatial Factor (PBSF) models, a new class of models designed to achieve scalability and robust identifiability for spatial factor analysis. PBSF models are defined through a novel Markov chain Monte Carlo construction, Projected MCMC (ProjMC$^2$), which leverages conditional conjugacy and projection to improve posterior stability and mixing by constraining factor sampling to a scaled Stiefel manifold. Theoretical results establish convergence of ProjMC$^2$ irrespective of initialisation. By integrating scalable univariate spatial modelling, PBSF provides a flexible and interpretable framework for low-dimensional spatial representation learning of massive spatial data. Simulation studies demonstrate substantial efficiency and robustness gains, and an application to human kidney spatial transcriptomics data highlights the practical utility of the proposed methodology for improving interpretability in spatial omics.
\end{abstract}

\keywords{Factor Analysis; Multivariate Spatial Modeling; High-Dimensional Spatial Data; Gaussian Process; Identifiability}
\newpage

\section{Introduction}\label{Intro}

Multivariate geo-indexed datasets are increasingly prevalent across disciplines, from environmental science to genomics. These datasets measure multiple variables across shared spatial domains, often exhibiting spatial autocorrelation—where nearby locations have similar values—and spatial cross-correlation—where variables demonstrate spatial interdependencies \citep{banerjee2014hierarchical, cressie2015statistics}. Such correlations typically reflect geographic proximity and common underlying drivers. Analysing these joint spatial patterns helps reveal latent dependencies and potential causal mechanisms. For instance, spatial omics involves quantifying various molecular measurements (e.g., gene expression, protein abundances, metabolites) within biological tissues, where patterns reflect the spatially structured biological microenvironment \citep{moses2022museum, bressan2023dawn, lee2025spatial}. Identifying spatially correlated molecular signatures can enhance understanding of tissue heterogeneity.  
Similarly, multivariate pollutant data in environmental science can indicate diverse sources, such as vehicular emissions or industrial discharge \citep{song2018multivariate, dai2024spatial}. Uncovering these latent spatial structures is crucial for source identification and risk assessment.

Spatial factor analysis offers a valuable framework for exploring underlying spatially correlated factors within multivariate datasets. Analogous to classical latent factor analysis, spatial factor models represent a latent multivariate spatial process as a linear combination of a small number of common spatially correlated factors. The foundational concepts and subsequent elaborations of spatial factor models have been explored in key works by \citet{wang2003generalized}, \cite{Lopes2008} and \cite{velten2022identifying}. Despite their established utility, a significant challenge in applying spatial factor models, especially those employing Gaussian Process (GP) distributions for the spatial factors, is their computational burden. The computational costs and storage requirements scale cubically and quadratically, respectively, with the number of spatial locations n, rendering them impractical for large-scale datasets \citep{stein99, banerjee2014hierarchical, cressie2015statistics}. To mitigate this scalability bottleneck, several approaches have focused on low-rank approximations of the spatial factors. These include methods based on Predictive Processes \citep{ban08, Finley_etall_2009} or Inducing points \citep{titsias2009variational}, as explored in studies by \cite{renbanerjee2013}, \cite{shang2022spatially}, and \cite{townes2023nonnegative}. Nevertheless, such approaches that employ a reduced rank representation of the desired spatial process cannot scale to modest large datasets (with tens of thousands of locations) and can result in over-smoothing the latent process from massive data sets \citep{stein2014}. 

More recently, the use of Nearest-Neighbor Gaussian Processes (NNGP) \citep{datta16}, which fall under the broader class of Vecchia approximations for scalable GP \citep[see, e.g.][]{katzfuss2021general}, have emerged as a promising alternative for modelling spatial factors, gaining traction in recent literature \citep{taylor2019spatial, zhang2022spatial}. NNGPs offer a scalable approximation to full GPs that can capture both global and local spatial pattern for large datasets \citep{zhang2024fixed}. However, while \citet{zhang2022spatial} emphasize predictive performance, their work does not address the issue of parameter identifiability, which is essential for recovering underlying spatial patterns. In contrast, \citet{taylor2019spatial} adopt constraints % on the loading matrix $\bLambda$ 
common in non-spatial factor analysis to achieve identifiability, which is unnecessarily stringent as argued in \citet{renbanerjee2013}.
%, forcing $\bLambda^\top$ to be lower unitriangular. As \citet{renbanerjee2013} argued, non-identifiability in spatial factor models is primarily due to invariance under permutation and sign changes of factors and corresponding loading matrix rows, suggesting \citet{taylor2019spatial}'s specific constraints may be unnecessarily stringent. 
Furthermore, a critical gap persists in the literature, including \citet{renbanerjee2013} and \citet{taylor2019spatial}, concerning the practical implications of weak identifiability. Specifically, the impact of identifiability issues related to intercept terms and GP hyperparameters (challenges well-documented in the general GP literature, e.g., \cite{stein99}, \cite{Zhang04},  %\cite{CSY00}, 
\cite{zhang2005towards}, \cite{DZM09}, \cite{tang2021identifiability}) on the implementation and posterior sampling efficiency of spatial factor models remains largely unaddressed.

This paper introduces a novel and flexible Bayesian spatial factor model designed for scalability with large spatial datasets, while imposing minimal yet effective restrictions to mitigate the slow convergence and poor mixing engendered by the inherent identifiability issues in factor models. A key contribution is a new model construction and sampling algorithm, termed Projected Markov Chain Monte Carlo (ProjMC\textsuperscript{2}). Specifically, within a blocked Gibbs sampling framework for a spatial factor model, a projection step is introduced to project the high-dimensional factor realizations to a subspace of the Stiefel manifold \citep{chakraborty2019statistics}, thereby enhancing identifiability and substantially improving sampling efficiency for parameters subject to weak or non-identifiability. Theoretical results establishing the existence, convergence, and properties of the posterior distribution induced by the ProjMC\textsuperscript{2} algorithm are presented in Section~\ref{sec: ProjMC2}. The induced model is termed the Projected Bayesian Spatial Factor (PBSF) model. Section~\ref{sec: model_implement} develops scalable variants of PBSF models in which spatially varying factors are modelled using NNGPs. These extensions address diverse inferential settings, including hyperparameter estimation and the accommodation of missing data, and detailed implementation algorithms are provided therein.

To illustrate the practical utility and advantages of the PBSF model,
%proposed Bayesian spatial factor model, 
this work features an application to a human kidney spatial transcriptomics dataset. Section~\ref{sec: real_data_analy} presents a comparative analysis against contemporary `black-box' auto-encoder-based algorithms, STAGATE \citep{dong2022deciphering} and GraphST \citep{long2023spatially}, specifically for the task of spatial domain identification. The results demonstrate that the proposed methodology achieves performance comparable to, and in some instances superior to, these leading competitors. More critically, it yields more interpretable insights into the underlying complex biological systems, an advantage of particular significance when analysing novel spatial omics data that lack pre-existing annotations or labels. Consequently, the proposed methodology provides a scalable, fully Bayesian, probability-based regression framework for large-scale spatial omics data, offering enhanced interpretability and uncertainty quantification.
%This framework matches the representational capacity of current auto-encoder-based approaches while distinctly offering enhanced interpretability and uncertainty quantification.

The rest is organised as follows. Section~\ref{sec:BSF_model} outlines notation and background on Bayesian spatial factor models. Section~\ref{sec: simulation} presents simulation studies with sensitivity analyses of algorithmic choices, parameter identifiability, and practical performance.
Section~\ref{sec: summary} concludes with remarks and future research directions. %offers concluding remarks and discusses avenues for future research.

\section{Bayesian Spatial factor model and sampling}\label{sec:BSF_model}
\textbf{Bayesian Spatial Factor (BSF) Model: }
Let $\by(\bs) = (y_1(\bs), \ldots, y_q(\bs))^\top \in \mathbb{R}^q$ denote the $q\times 1$ vector of dependent outcomes in location $\bs \in \mathcal{D} \subset \mathbb{R}^d$, $\bx(\bs) = (x_1(\bs), \ldots, x_p(\bs))^\top \in \mathbb{R}^p$ be the corresponding explanatory variables, and $\bbeta$ be a $p \times q$ regression coefficient matrix. A factor model can be denoted as
\begin{equation}\label{eq: spatial_factor}
    \by(\bs) = \bbeta^\top \bx(\bs) + \bLambda^\top \fb(\bs)+ \beps(\bs)\; , \; \bs \in \mathcal{D}\; ,
\end{equation}
where $\bLambda$ is a $K \times q$ loading matrix and $\fb(\bs)=(f_1(\bs), \ldots, f_K(\bs))'$ denotes a vector of $K$ components. Each component $f_k(\bs)$ is the $k$-th factor's realization at location $\bs$. $\blambda_k^{\top}$ is the $k$-th row of $\bLambda$.  The noise process $\beps(\bs) \overset{iid}{\sim} \mathrm{N}(\mathbf{0}, \bSigma)$ with covariance matrix $\bSigma$. When assigning the factors $\fb(\bs)$ with a prior with belief in spatial correlation, the hierarchical model \eqref{eq: spatial_factor} is referred to as a spatial factor model. Typical prior choices include GPs. Here, it is further assumed that each factor $f_k(\bs)$ independently follows a GP with correlation function $\rho_{\psi_k}(\cdot, \cdot)$ with hyperparameters $\psi_k$, i.e.,
\begin{equation}\label{eq:GPprior}
  f_k(\bs)\sim \mbox{GP}(0,\; \rho_{\psi_k}(\cdot, \cdot))\;, \; k = 1, \ldots, K \;.
\end{equation}
As illustrated in \citet{zhang2022spatial}, a convenient choice for the priors of parameters $\{\bbeta, \bLambda, \bSigma\}$ %in the spatial factor model 
is the Matrix-Normal-Inverse-Wishart(MNIW) family, i.e.,
\begin{equation}\label{eq: LMC_priors}
    \begin{aligned}
    \bbeta \given \bSigma &\sim \mbox{MN}(\bmu_{\bbeta}, \bV_{\bbeta}, \bSigma)\; ; \;
    \bLambda \given \bSigma \sim \mbox{MN}(\bmu_{\bLambda}, \bV_{\bLambda}, \bSigma)\; ; \; \bSigma \sim \mbox{IW}(\bPsi, \nu)
    %\mathrm{vec}(\boldsymbol{\omega}\mathbf{Z}) \given \mathbf{Z}, \{\phi_k\}_{k = 1}^r\sim \mathrm{N}(0, \mathrm{block}\{\sum_{k = 1}^r Z_{ki}Z_{kj}\rho_{ \phi_k}(S, S)\}_{i = 1, j = 1}^q)
    \end{aligned}\;,
\end{equation}
where $\mbox{MN}_{n,p}(\mathbf{M}, \bU, \bV)$ denotes a Matrix-Normal distribution \citep{dawid1981} with mean matrix $\bM$, the first $n \times n$ scale matrix $\bU$, and the second $p \times p$ scale matrix $\bV$, $\bmu_{\bLambda}$ is a $q \times K$ matrix and $\bV_{\bLambda}$ is a $K \times K$ positive definite matrix. %A random matrix $\mathbf{Z}_{n \times p}$ follows a Matrix-Normal distribution $\mbox{MN}_{n,p}(\mathbf{M}, \bU, \bV)$ has the probability density function 
%\begin{equation}\label{eq: MN_density}
%p(\bZ\mid\bM, \bU, \bV) = \frac{\exp\left[ -\frac{1}{2} \, \mbox{tr}\left\{ \bV^{-1} (\bZ - \bM)^{T} \bU^{-1} (\bZ - \bM) \right\} \right]}{(2\pi)^{np/2} |\bV|^{n/2} |\bU|^{p/2}}\; , 
%\end{equation}
%where $\mbox{tr}(\cdot)$ is the trace function, $\bM$ is the mean matrix, $\bU$ is the first scale matrix with dimension $n \times n$ and $\bV$ is the second scale matrix with dimension $p \times p$. 
%This distribution is equivalent to
%$
%  \mbox{vec}(\bZ) \sim \mbox{N}_{np}(\mbox{vec}(\bM), \bV \otimes \bU)\;,
%$
%where $\otimes$ is the Kronecker product and $\mbox{vec}(\bZ) = \left[\bz_1^\top, \ldots, \bz_p^\top\right]^\top$ is the vectorized $n \times p$ random matrix $\bZ = [\bz_1: \cdots : \bz_p]$. 
Denote $\bpsi = \{\psi_k\}_{k = 1}^K$, 
and let $p(\bpsi)$ be an arbitrary proper prior.
The model defined by
\eqref{eq: spatial_factor}--\eqref{eq: LMC_priors}, together with the prior $p(\bpsi)$, is referred to as the Bayesian spatial factor (BSF) model. %This model can be readily adapted for spatial data with misalignment, and for a more general formulation, see \cite{zhang2022spatial}. Given that the present application involves spatial transcriptomic data, the focus here is on the scenario where there is no missing outcome across different locations.

Without misalignment, %the observation 
model in \eqref{eq: spatial_factor} 
can be cast as $\bY = \bX\bbeta + \bF \bLambda + \beps$,
where $\bY = \by(\mathcal{S}) = [\by(\bs_1): \cdots : \by(\bs_n)]^\top$ is the $n \times q$ response matrix, $\bX = \bx(\mathcal{S}) = [\bx(\bs_1) : \cdots : \bx(\bs_n)]^\top$ is the %corresponding 
design matrix with full rank ($n > p$), $\beps = \beps(\calS) = [\beps(\bs_1): \cdots : \beps(\bs_n)]^\top$, and %$\bF = [\fb_1(\calS): \cdots: \fb_K(\calS)]$ 
$\bF$ is the $n\times K$ matrix with $j$-th column being the $n\times 1$ vector comprising $f_j(\bs_i)$'s for $i=1,2,\ldots,n$.
Define $\brho_{\psi_k}(\calS, \calS)$ to be the $n\times n$ spatial correlation matrix for $\fb_k = (f_k(\bs_1), f_k(\bs_2), \ldots, f_k(\bs_n))^{\top}$. 
Denote matrix vectorization by stacking its columns, i.e., $\mbox{vec}(\bF) = (f_1(\bs_1), \ldots, f_1(\bs_n), \ldots, f_K(\bs_1), \ldots, f_K(\bs_n))^\top$, we have $\mbox{vec}(\bF) \sim \mathrm{N}(\mathbf{0},\, \oplus_{k=1}^K \{\brho_{\psi_k}(\calS, \calS)\})$, and $\oplus_{i=1}^n$ represents the block diagonal operator stacking matrices along the diagonal. 

\textbf{Conditional Posterior of BSF Model and Posterior Sampling: } The BSF model yields conjugate conditional posteriors, enabling an efficient Gibbs sampling with block updates. When $\bF$ is fixed, under the MNIW prior \eqref{eq: LMC_priors} on $\{\bbeta, \bLambda, \bSigma\}$, the posterior for $\bgamma = [\bbeta^\top, \bLambda^\top]^\top$ and $\bSigma$ remains in the MNIW family. Specifically,
\begin{equation}\label{eq: SLMC_MNIW}
(\bgamma, \bSigma) \mid (\bF, \bY) \sim \mathrm{MNIW}(\bmu^\ast, \bV^\ast, \bPsi^\ast, \nu^\ast),
\end{equation}
where $\bV^\ast = [\bX^{\ast\top}\bX^{\ast}]^{-1}$, $\bmu^\ast = \bV^\ast[\bX^{\ast\top}\bY^{\ast}]$, $\bPsi^\ast = \bPsi + \bS^{\ast}$, $\nu^\ast = \nu + n$, and $\bS^{\ast} = (\bY^{\ast} - \bX^{\ast}\bmu^\ast)^{\top}(\bY^{\ast} - \bX^{\ast}\bmu^\ast)$. Here $\bX^{\ast}$ and $\bY^{\ast}$ arise from the argumented linear system
\begin{equation}\label{eq: augment_linear_LMC}
\begin{array}{c}
\underbrace{ \left[ \begin{array}{c} \bY\\ \bL_{\bbeta}^{-1} \bmu_{\bbeta} \\ \bL_{\bLambda}^{-1} \bmu_{\bLambda} \end{array} \right]}_{\bY^\ast}
= \underbrace{ \left[ \begin{array}{cc} \bX& \bF \\ \bL_{\bbeta}^{-1}& \mathbf{0} \\  \mathbf{0}& \bL_{\bLambda}^{-1} \end{array} \right] }_{\bX^{\ast}}  \underbrace{\left[ \begin{array}{c} \bbeta \\ \bLambda \end{array} \right]}_{\bgamma}+
\underbrace{ \left[ \begin{array}{c} \etab_1 \\ \etab_2 \\ \etab_3 \end{array} \right]}_{\etab^\ast}\; ,
\end{array}
\end{equation}
where $\bV_\Lambda = \bL_\Lambda\bL_\Lambda^\T$, and $\etab^\ast \sim \mbox{MN}(\mathbf{0}_{(n + p + K) \times q}, \bI_{n + p + K}, \bSigma)$.
In particular, when $\bSigma = \oplus_{i=1}^q \{\sigma_i^2\}$ with $\sigma_i^2 \sim \mathrm{IG}(a_i, b_i)$ prior, the conditional posterior of $(\bgamma, \bSigma) \given \bF, \bY$ is available in closed form even with misaligned data. Moreover, different mean $\bmu_{\bbeta_i}, \bmu_{\Lambda_i}$ and covariance matrix $\sigma_i^2\bV_{\bbeta_i}, \sigma_i^2\bV_{\bLambda_i}$ are allowed for each $\bgamma_i$, where $\bgamma = [\bgamma_1 : \cdots, \bgamma_q]$. Let $\bY_i$ denote the $i$-th outcome observed on the set of observed locations $\calS_i$. Reuse the note here but define $\bY^\ast_i$ and $\bX^\ast_i$ with corresponding $i$-th outcome, design matrix, $\bF$ on $\calS_i$ and prior parameters for $\bgamma_i$. Through linear algorithm we obtain that 
\begin{equation}\label{eq: SLMC_NIG_misaligned}
(\bgamma_i, \sigma_i^2) \mid (\bF(\calS_i), \bY_i) \sim \mathrm{NIG}(\bmu_i^\ast, \bV_i^\ast, a^\ast_i, b^\ast_i),
\end{equation}
where $\bV_i^\ast = [\bX_i^{\ast\top}\bX_i^{\ast}]^{-1}$, $\bmu_i^\ast = \bV_i^\ast[\bX_i^{\ast\top}\bY_i^{\ast}]$, $a_i^\ast = a_i + n_i/2$, $b_i^\ast = b + 0.5\bS_i^{\ast}$, and $\bS_i^{\ast} = (\bY_i^{\ast} - \bX_i^{\ast}\bmu_i^\ast)^{\top}(\bY_i^{\ast} - \bX_i^{\ast}\bmu_i^\ast)$. Next, given $\{\bbeta, \bLambda, \bSigma, \bpsi \}$ and let $\brho_{\psi_k}(\calS, \calS) = \bL_k\bL_k^\top$, %for each $k$, %the conditional posterior of $\bF$ follows
\begin{equation}\label{eq: SLMC_F_cond_post}
    \mbox{vec}(\bF) \given (\bgamma, \bSigma, \bpsi, \bY) \sim \mathrm{N}((\Tilde{\bX}^\T\Tilde{\bX})^{-1}\Tilde{\bX}^\T\Tilde{\bY}, \, (\Tilde{\bX}^\T\Tilde{\bX})^{-1})\;,
\end{equation}
where $\Tilde{\bY} = [\mbox{vec}((\bY - \bX\bbeta)\bSigma^{-1/2})^\top, 0_{n*k}^\top]^\top$ and $\Tilde{\bX} = [(\bLambda * \bSigma^{-1/2}) \otimes I_{n}:  \oplus_{k = 1}^K \bL_{k}^{-\top}]^\top$. Meanwhile, since hyperparameter $\psi_k$ is independent of $\bY$ conditional on $\bF_k$, full conditional distribution of $\psi_k$ depends only on $\bF_k$. These conditional conjugacy naturally lead to a Metropolis/Slice-within-Gibbs sampler. Note that \eqref{eq: SLMC_F_cond_post} also extends to misaligned observations. Technique details are omitted here as this is elaborated in \citet[][see Eq.~(7)-(8)]{zhang2022spatial}.

\textbf{Non-identifiability issue:} Although the blocked Gibbs sampler performs well for interpolation and prediction tasks as shown in \citet{zhang2022spatial}, it is not efficient for factor analysis due to inherent non-identifiability issues. The non-identifiability problem arises firstly from the construction $\bLambda^\top \fb(\bs)$ in the target factor model \eqref{eq: spatial_factor}, where scaling and permuting rows of $\bLambda$ and the corresponding elements in $\fb(\bs)$ can yield the same likelihood. This degeneracy persists regardless of sample size. In practise, this will cause poor convergence and mixing of the Markov chain Monte Carlo (MCMC) chains for $\bLambda$ and $\fb(\bs)$, though the MCMC chain for the product $\bLambda^\top \fb(\bs)$ may converge faster. Additionally, since only samples for $\bLambda^\top \fb(\bs)$ are reliable, we need to recover the posterior samples of $\bLambda^\top \fb(\bs)$. When $q$ is large and $k$ is small, the storage of the sample for $\bLambda^\top \fb(\bs)$ can be expensive. Moreover, spatial factors $\fb(\bs)$ and the intercepts for each outcome may not be identifiable from one another \citep{stein99}, further complicating inference. When each factor is modelled via a Gaussian process with a Mat\'ern covariance function \citep{Matern86}, additional non-identifiability arises among the hyperparameters $\bpsi$ \citep{Zhang04, %CSY00, 
zhang2005towards, DZM09, tang2021identifiability}. Consequently, the block-update approach converges slowly and exhibits mixing difficulties for posterior inference in factor analysis settings.

\section{Projected Bayesian spatial factor (PBSF) model}\label{sec: ProjMC2}

This section introduces a Projected Markov Chain Monte Carlo (ProjMC$^2$) algorithm specifically tailored to the BSF model. The core motivation is to improve sampling efficiency for factor analysis by reducing redundant or non-identifiable directions within the latent factor space. The resulting posterior distribution is referred to as the projected Bayesian spatial factor (PBSF) model. %Though presented in the BSF model context, the underlying theorems can be extended to broader hierarchical settings where similar latent variable identifiability issues occur.

\subsection{ProjMC$^2$ sampling for BSF Model}\label{subsec:BSF_PMCMC}
%Let $\bF\in\mathbb{R}^{n\times K}$ denote the latent factor matrix in the BSF model, as introduced in Section~\ref{sec:BSF_model}. Define the projection
Let $\bF\in\mathbb{R}^{n\times K}$ be the latent factor matrix in the BSF model. Define the projection
\begin{equation}\label{eq: proj_g}
  g\colon \mathbb{R}^{n\times K} \;\to\; \Omega^g,
  \quad
  \bF \;=\; [\fb_1:\cdots:\fb_K]
  \;\mapsto\;
  \tilde{\bF}
  \;=\;
  \sqrt{n-1}\,\mathrm{QR}\!\bigl\{(I_n - \tfrac{1}{n}1_n1_n^\top)\bF\bigr\},
\end{equation}
where $\mathrm{QR}\{A\}$ produces the $Q$-factor of the thin QR decomposition of a matrix $A$, $1_n$ is the $n$-dimensional vector of all ones, $I_n$ is the $n\times n$ identity matrix, and $\Omega^g \subset \mathbb{R}^{n \times K}$ denote the image (range) of $g(\cdot)$. This map $g(\bF)$ first \emph{centres} the columns of $\bF$ (by subtracting their sample means) and then \emph{projects} the centred matrix onto a scaled Stiefel manifold \citep{chakraborty2019statistics}.  Its inverse set is given by

\begin{equation}\label{eq:inv_g}
  g^{-1}(\tilde{\bF})
  \;=\;
  \{\bF : g(\bF) = \tilde{\bF}\}
  \;=\;
  \bigl\{\tilde{\bF}\,R \;+\; 1_n\,\mu_f^\top 
    \,\mid\, 
    R \in U_K^+,\, \mu_f \in \mathbb{R}^K\bigr\},
\end{equation}
where $U_K^+$ is the space of $K\times K$ upper-triangular matrices with positive diagonal entries. 
%Let $\mathcal{GL}^{n\times K}$ be the standard Gaussian measure on 
%\begin{comment}
Let $\mathcal{L}^{n\times K}$ be the Lebesgue measure on $\mathbb{R}^{n\times K}$.  We write $\varphi = \mathcal{L}^{n\times K} \circ g^{-1}$ for the induced (pushforward) measure of $\mathcal{L}^{n\times K}$ onto $\Omega^g$. 
%\end{comment}

%With this projection in place, a Markov chain Monte Carlo
Using $g$, an MCMC procedure is constructed based on block Gibbs updates for the BSF model. A simplified flowchart is shown in Figure~\ref{fig:PMCMC_chart}. 
\begin{figure}[h!]
\centering
\small
\begin{tikzpicture}[node distance=5.8cm]
  % Styles
  \tikzstyle{arrow} = [thick,->,>=stealth]
  % Nodes
  \node (step1) [rounded_process] {\small 1.\;Sample \(\bgamma,\bSigma\) by \eqref{eq: SLMC_MNIW}};
  \node (step2) [rounded_process, right of=step1] {\small 2.\;Sample \(\bF\) by \eqref{eq: SLMC_F_cond_post} and Sample $\bpsi$};
  \node (step3) [rounded_process_highlight, right of=step2] {\small 3.\;Project \(\bF\;\mapsto\;\tilde{\bF} = g(\bF)\)};
  
  % Arrows
  \draw [arrow] (step1) -- (step2);
  \draw [arrow] (step2) -- (step3);
  \draw [arrow] (step3.south) -- ++(0,-0.3) -| (step1.south);
\end{tikzpicture}
\caption{\centering Simplified flowchart of ProjMC$^2$ algorithm.}
\label{fig:PMCMC_chart}
\end{figure}
ProjMC$^2$ forces the sampling space of $\{\bF, \bpsi, \bgamma, \bSigma\}$ on $\Omega^g \;\times\; \calH^K \;\times\;  \mathbb{R}^{(p+K)\times q} \;\times\; \mathbb{S}_{+}^q$, where $\calH$ is the support of $\psi_k$, and $\mathbb{S}_{+}^q$ is the space of $q\times q$ symmetric positive-definite matrices. We further assume that $\calH$ is compact.  
%\begin{comment}
Define $\boldsymbol{\Psi}$ as the pushforward of the product measure $\mathcal{L}^{n\times K}\times \nu_{\calH^K} \times \mathcal{L}^{(p+K)\times q}\times \nu_{\mathbb{S}_{+}^q}$ through the map $(\bF, \, \bpsi, \,\bgamma,\,\bSigma)
  \;\mapsto\;
  \bigl(g(\bF), \, \bpsi, \, \bgamma, \, \bSigma\bigr)$,
where $\nu_{\calH^K}$ is a base measure on $\calH^K$ and $\nu_{\mathbb{S}_{+}^q}$ is a base measure %{\color{red} (often taken as a standard Riemannian measure), hausdoff?} 
on \(\mathbb{S}_{+}^q\).  Equivalently, one may write
$\boldsymbol{\Psi}
  \;=\;
  \varphi
  \;\times\;
  \nu_{\calH^K}
  \;\times\;
  \mathcal{L}^{(p+K)\times q}
  \;\times\;
  \nu_{\mathbb{S}_{++}^q}$.
%\end{comment}
A quantitative description of the sampling process follows. Suppose there are current draws $\{\tilde{\bF}^{(l)}, \bpsi^{(l)},  \bgamma^{(l)}, \bSigma^{(l)}\}$ at the $l$-th iteration. The steps for iteration $l+1$ proceed as follows:
\begin{enumerate}
\item[(i) ] Sample $(\bgamma^{(l+1)}, \bSigma^{(l+1)})$ using the conditional distribution given in \eqref{eq: SLMC_MNIW} with $\bF = \tilde{\bF}^{(l)}$.
\item[(ii) ] Update $\tilde{\bF}^{(l+1)}$ and $\bpsi^{(l+1)}$ from the distribution with density
\[
  \int_{g^{-1}\{\tilde{\bF}^{(l+1)}\}} p\bigl(\bpsi^{(l+1)} \mid \bF, \bgamma^{(l+1)}, \bSigma^{(l+1)}, \bY \bigr)
    p\bigl(\bF \,\mid\, \bgamma^{(l+1)}, \bSigma^{(l+1)}, \bpsi^{(l)}, \bY \bigr)
  \,d\bF,
\]
where $p\bigl(\bF \,\mid\, \bgamma^{(l+1)}, \bSigma^{(l+1)}, \bpsi^{(l+1)}, \bY\bigr)$ %denotes the density of the conditional posterior distribution in \eqref{eq: SLMC_F_cond_post} 
is the conditional posterior in \eqref{eq: SLMC_F_cond_post}, and $p\bigl(\bpsi^{(l+1)} \mid \bF, \bgamma^{(l+1)}, \bSigma^{(l+1)}, \bY \bigr)$ the full conditional posterior of $\bpsi$ in the BSF model.
\end{enumerate}
%We have the following Lemma.
\begin{lemma}[Transition Kernel of ProjMC$^2$ for the BSF model] 
\label{thm:Projected-MCMC-kernel}
Let $\theta_1 = \{\tilde{\bF}_1, \bpsi_1, \bgamma_1, \bSigma_1\}$, $\theta_2 = \{\tilde{\bF}_2,  \bpsi_2, \bgamma_2, \bSigma_2\}$ be two points in the sample space $\Theta := \Omega^g 
  \;\times\;
  \calH^K
  \;\times\; 
  \mathbb{R}^{(p+K)\times q} 
  \;\times\; 
  \mathbb{S}_{++}^q$, 
the transition kernel of the projected MCMC for the BSF model is
\begin{equation}
\label{eq: transit_kernel}
\small
  K(\theta_1,\theta_2)
  = \int_{g^{-1}(\tilde{\bF}_2)} 
        p\bigl(\bpsi_2 \mid \bF, \bgamma_2, \bSigma_2, \bY \bigr) p\bigl(\bF \mid \bgamma_2,\bSigma_2, \bpsi_1, \bY\bigr)
      d\bF \; p\bigl(\bgamma_2,\bSigma_2 \mid \tilde{\bF}_1,\bY\bigr),
\end{equation}
where densities $p\bigl(\bF \,\mid\, \bgamma,\bSigma, \bpsi, \bY\bigr)$ and $p\bigl(\bgamma,\bSigma \,\mid\, \bF,\bY\bigr)$ are given in \eqref{eq: SLMC_MNIW} and \eqref{eq: SLMC_F_cond_post}, respectively. The distribution $p(\bpsi \mid \bF, \bgamma, \bSigma, \bY)$ is the full conditional posterior of $\bpsi$ in the BSF model, identifiable up to a proportionality constant as $p\bigl(\bF \mid \bpsi \bigr) \times p(\bpsi)$, where $p\bigl(\bF \mid \bpsi \bigr)$ is given in \eqref{eq:GPprior} and $p\bigl(\bpsi \bigr)$ denotes the prior on the hyperparameter set $\bpsi$.
\end{lemma}
%\noindent The proof is deferred to Supplement~\ref{Supp:proof_holder}.
\begin{comment}
\begin{proof}
In view of step (i) and (ii), we obtain that for any Borel sets $A_1 \in \mathcal{B}
  \bigl( \Omega^g \bigr)$, $A_2 \in \mathcal{B}
  \bigl( \calH^K \bigr)$, $A_3 \in \mathcal{B}
  \bigl( \mathbb{R}^{(p+K)\times q} \bigr)$, $A_4 \in\mathcal{B}
  \bigl( \mathbb{S}_{+}^{q} \bigr)$, when $A = A_1 \times A_2 \times A_3 \times A_4$, the conditional probability of transiting from $\theta_1$ to $A$ is
$$K(\theta_1,\;A)
  \;=\;
    \int_{A_3\times A_4}\Bigl[
      \int_{g^{-1}(A_1)} \Bigl\{\int_{A_2} p\bigl(\bpsi \,\mid\, \bF, \bgamma,\bSigma, \bY\bigr)  d\bpsi \Bigr\}
        p\bigl(\bF \,\mid\, \bgamma,\bSigma, \bpsi_1, \bY\bigr)\,
      d\bF
    \Bigr] 
    \;p\bigl(\bgamma,\bSigma \,\mid\, \tilde{\bF}_1,\bY\bigr)
  \;\,d\bgamma\,d\bSigma\;.$$ Further, by taking $A$ to be the whole space, $\int_{g^{-1}\{\Omega^g\}} \Bigl\{\int_{\calH^K} p\bigl(\bpsi \,\mid\, \bF, \bgamma,\bSigma, \bY\bigr)  d\bpsi \Bigr\} 
    p\bigl(\bF \,\mid\, \bgamma, \bSigma, \bpsi_1, \bY\bigr)
  \,d\bF  = \int_{\mathbb{R}^{n \times K}}p\bigl(\bF \,\mid\, \bgamma, \bSigma, \bpsi_1 \bY\bigr) d\bF = 1$. Hence, $K(\theta_1, \cdot)$ is a probability measure for any $\theta_1$. Meanwhile, for any $A \in \mathcal{B}(\Theta)$,
$K(\cdot, A)$ is continuous, and thus measurable. Through the definition 4.2.1 of transition kernel in \cite{robert1999monte}, \eqref{eq: transit_kernel} is a valid transition kernel.
\end{proof}
\end{comment}

\subsection{Theoretical Properties of the ProjMC$^2$ Algorithm}
In this subsection, it is formally demonstrated that the Markov chain \((\theta_l)\) defined by the transition kernel \eqref{eq: transit_kernel} admits a unique stationary distribution, and that, for any initial point in its state space, the chain converges to this target density. These results ensure the validity and consistency of our ProjMC$^2$ approach for the BSF model.

\begin{theorem}[Convergence]\label{thm:converge}
    Let \((\theta_\ell)\) be the Markov chain on \(\Theta\) with transition kernel \(K\) given in \eqref{eq: transit_kernel}, \((\theta_\ell)\) converges in total variation to its unique stationary distribution $\pi(\cdot)$.  Hence, for any initial state $\theta \in \Theta$, % (or more generally for any initial distribution~$\mu$),
$\lim_{l \to \infty}
    \bigl\|
      K^l(\theta,\cdot) \;-\; \pi(\cdot)
    \bigr\|_{\mathrm{TV}}
  \;=\;
  0$, where for two probability measures $\mu_1$ and $\mu_2$ on $\Theta$, the total variation distance is defined as
$\|\mu_1 - \mu_2\|_{\mathrm{TV}}
  \;=\;
  \sup_{A \subseteq \Theta}
    \bigl|\mu_1(A) \;-\; \mu_2(A)\bigr|.$
\end{theorem}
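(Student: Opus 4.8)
\emph{Proof plan.} The plan is to verify that $K$ satisfies a one-step Doeblin minorization on all of $\Theta$ and then invoke the classical uniform-ergodicity theorem, which simultaneously yields existence and uniqueness of $\pi$ together with the asserted total-variation convergence (in fact at a geometric rate). The structural feature that makes a \emph{global} minorization attainable, even though $\Theta$ is non-compact, can be read off directly from \eqref{eq: transit_kernel}: the kernel $K(\theta_1,\cdot)$ depends on $\theta_1=(\tilde\bF_1,\bgamma_1,\bSigma_1)$ only through $\tilde\bF_1$, and $\tilde\bF_1$ is confined to $\Omega^g$. I would first record that $\Omega^g$ is compact: the computation underlying \eqref{eq:inv_g} shows that $g$ leaves fixed every matrix with $\bF^\T\bF=nI_K$ and $\bF^\T 1_n=0$, so $\Omega^g=\{\bF\in\mathbb R^{n\times K}:\bF^\T\bF=nI_K,\ \bF^\T 1_n=0\}$, a closed and bounded (scaled Stiefel-type) submanifold of $\mathbb R^{n\times K}$. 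Hence the entire influence of the present state on the next one is exerted through a variable living in a compact set.

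I would then obtain uniform lower bounds on the two factors of the kernel density. Fix a compact set $C\subset\mathbb R^{(p+K)\times q}\times\mathbb S_{++}^q$ with nonempty interior. For the MNIW factor: the parameters $\bmu^\ast,\bV^\ast,\bPsi^\ast,\nu^\ast$ in \eqref{eq: SLMC_MNIW} depend continuously on $\bF$, and the invertible blocks $\bL_{\bbeta}^{-1},\bL_{\bLambda}^{-1}$ in \eqref{eq: augment_linear_LMC} force $\bX^{\ast\T}\bX^\ast\succ0$ and $\bPsi^\ast\succeq\bPsi\succ0$ for \emph{every} $\bF$; consequently $p(\bgamma,\bSigma\mid\bF,\bY)$ is jointly continuous and strictly positive, hence bounded below by some $c_1>0$ on the compact set $\Omega^g\times C$. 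For the projected-factor factor: given $(\bgamma,\bSigma)$, $\bF$ is drawn from the nondegenerate Gaussian \eqref{eq: SLMC_F_cond_post}, whose mean and covariance depend continuously on $(\bgamma,\bSigma)$, and then mapped by $g$; since $g$ is smooth on the full-$\mathcal L^{n\times K}$-measure set of centered matrices of full column rank, with range $\Omega^g$, the pushforward of this Gaussian has a density $\tilde p_{\bgamma,\bSigma}$ with respect to the finite volume measure $\mathrm{vol}_{\Omega^g}$ on the compact manifold $\Omega^g$; a co-area argument shows $\tilde p_{\bgamma,\bSigma}(\tilde\bF)$ is jointly continuous in $(\bgamma,\bSigma,\tilde\bF)$ and strictly positive at every $\tilde\bF\in\Omega^g$ (the Gaussian density is positive on each positive-dimensional fiber $g^{-1}(\tilde\bF)=\{\tilde\bF R+1_n\mu_f^\top:R\in U_K^+,\ \mu_f\in\mathbb R^K\}$), hence bounded below by some $c_2>0$ on $\Omega^g\times C$.

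Representing $K(\theta_1,\cdot)$ through its density $\tilde p_{\bgamma_2,\bSigma_2}(\tilde\bF_2)\,p(\bgamma_2,\bSigma_2\mid\bF=\tilde\bF_1,\bY)$ against $\mathrm{vol}_{\Omega^g}\otimes\mathcal L^{(p+K)\times q}\otimes\nu_{\mathbb S_{++}^q}$, the two bounds give, for every $\theta_1\in\Theta$ and every Borel $A\subseteq\Theta$,
\[
  K(\theta_1,A)\ \ge\ c_1c_2\,\bigl(\mathrm{vol}_{\Omega^g}\otimes\mathcal L^{(p+K)\times q}\otimes\nu_{\mathbb S_{++}^q}\bigr)\bigl(A\cap(\Omega^g\times C)\bigr)\ =\ \varepsilon\,\nu(A),
\]
where $\nu$ is the normalization of $\mathrm{vol}_{\Omega^g}\otimes\mathcal L^{(p+K)\times q}\otimes\nu_{\mathbb S_{++}^q}$ restricted to $\Omega^g\times C$ (a probability measure, since $C$ has finite measure) and $\varepsilon=c_1c_2\,(\mathrm{vol}_{\Omega^g}\otimes\mathcal L^{(p+K)\times q}\otimes\nu_{\mathbb S_{++}^q})(\Omega^g\times C)\in(0,1]$. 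This is Doeblin's condition; by the standard uniform-ergodicity theorem (see, e.g., \citealp{robert1999monte}), $K$ then admits a unique stationary distribution $\pi$ with $\|K^l(\theta,\cdot)-\pi\|_{\mathrm{TV}}\le(1-\varepsilon)^l$ for all $\theta\in\Theta$, which in particular yields $\lim_{l\to\infty}\|K^l(\theta,\cdot)-\pi\|_{\mathrm{TV}}=0$.

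The step I expect to demand the most care is the analysis of $\tilde p_{\bgamma,\bSigma}$: one must push the Gaussian \eqref{eq: SLMC_F_cond_post} through the QR-based map $g$ and verify that the induced density on the manifold $\Omega^g$ is continuous in the conditioning parameters and---crucially---strictly positive at \emph{every} point of $\Omega^g$ (not merely $\varphi$-almost everywhere), since the uniform constant $c_2>0$ over the compact set $\Omega^g\times C$ rests on this pointwise positivity. The needed bookkeeping is a co-area/Jacobian computation over the fibers $g^{-1}(\tilde\bF)$; with that in hand, the remaining steps are routine. An alternative would be to establish $\psi$-irreducibility, aperiodicity, and Harris recurrence of $K$ and then apply the ergodic theorem for aperiodic positive Harris chains, but the Doeblin route above is shorter and additionally furnishes a geometric rate.
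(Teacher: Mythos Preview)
Your approach is correct and genuinely different from the paper's. The paper proceeds by separately establishing strong $\Psi$-irreducibility (Lemma~\ref{lem:strong_psi_irred}), aperiodicity (Lemma~\ref{lem:aperiodic}), Harris recurrence (Lemma~\ref{thm:recurr}), and existence of a finite invariant measure (Lemma~\ref{thm:Harris_positive}), then invokes the ergodic theorem for aperiodic positive Harris chains. You instead go directly for a one-step Doeblin minorization and apply the uniform-ergodicity theorem. Both routes rest on the same two structural observations---that $K(\theta_1,\cdot)$ depends on $\theta_1$ only through $\tilde\bF_1$, and that $\Omega^g$ is compact---and both need continuity/positivity of the relevant densities to turn compactness into a uniform positive lower bound. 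In fact the paper's recurrence argument (Lemma~\ref{thm:recurr}) already contains, implicitly, the estimate $\inf_\theta K(\theta,A)=\rho>0$ that is your Doeblin condition; the paper simply does not package it that way. Your route is shorter, avoids the separate construction of the invariant measure, and yields the geometric rate $(1-\varepsilon)^l$ as a free byproduct, whereas the paper's statement gives only the qualitative limit. The modular presentation in the paper has the compensating advantage of isolating each Markov-chain property for potential reuse.

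Your own assessment of where the work lies is accurate: the delicate step is the co-area/Jacobian analysis showing that the pushforward density $\tilde p_{\bgamma,\bSigma}$ is jointly continuous and strictly positive at \emph{every} point of $\Omega^g$, not merely $\varphi$-a.e. The paper sidesteps this by working with the measure $\varphi=\mathcal L^{n\times K}\circ g^{-1}$ directly and asserting continuity of $\tilde\bF_1\mapsto K(\tilde\bF_1,A)$ without spelling out the details, so the level of rigor is comparable. If you carry out the co-area computation over the fibers $g^{-1}(\tilde\bF)=\{\tilde\bF R+1_n\mu_f^\top:R\in U_K^+,\ \mu_f\in\mathbb R^K\}$ as you propose, the argument will be complete.
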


%\begin{proof}
\noindent The proof proceeds via four lemmas establishing irreducibility, aperiodicity, recurrence, and the existence of an invariant finite measure for the Markov chain $(\theta_l)$. These results imply that $(\theta_l)$ is Harris positive and aperiodic, and hence Theorem~\ref{thm:converge} follows by \cite[Theorem~4.6.5]{robert1999monte}. The main technical difficulties lie in establishing recurrence and the existence of an invariant finite measure, both of which exploit the compactness of $\Omega^g \times \mathcal{H}^K$. Complete details are deferred to Appendix~\ref{appendix:proof_holder}.
%Hence, once irreducibility and aperiodicity are established, convergence to the unique stationary distribution is guaranteed.
%\end{proof}
The arguments used in Lemma~\ref{thm:Projected-MCMC-kernel} and Theorem~\ref{thm:converge} also apply to the case with a fixed hyperparameter set~$\bpsi$. The following corollary summarises the corresponding result.

\begin{corollary}[PBSF model with fixed hyperparameters]
\label{thm:PBSF_fixpsi}
With $\bpsi$ fixed, let $\theta_1 = \{\tilde{\bF}_1, \bgamma_1, \bSigma_1\}$ and $\theta_2 = \{\tilde{\bF}_2, \bgamma_2, \bSigma_2\}$ be two points in the sample space
\(\Theta := \Omega^g 
  \;\times\; 
  \mathbb{R}^{(p+K)\times q} 
  \;\times\; 
  \mathbb{S}_{++}^q.
\)
Then the transition kernel of the PBSF model is
\begin{equation}
\label{eq: transit_kernel_fixpsi}
  K(\theta_1,\theta_2)
  \;=\; \int_{g^{-1}(\tilde{\bF}_2)}  
        p\bigl(\bF \mid \bgamma_2,\bSigma_2, \bY\bigr)\, d\bF \;
        p\bigl(\bgamma_2,\bSigma_2 \mid \tilde{\bF}_1,\bY\bigr),
\end{equation}
where the densities $p\bigl(\bF \mid \bgamma,\bSigma,\bY\bigr)$ and $p\bigl(\bgamma,\bSigma \mid \bF,\bY\bigr)$ are given in \eqref{eq: SLMC_MNIW} and \eqref{eq: SLMC_F_cond_post}, respectively.  
Moreover, if $(\theta_\ell)$ denotes the Markov chain on $\Theta$ with transition kernel $K$ in \eqref{eq: transit_kernel_fixpsi}, then $(\theta_\ell)$ converges in total variation to its unique stationary distribution~$\pi(\cdot)$. 
\end{corollary}

\subsection{Connection and Difference between PBSF and BSF models}

We have established that the chains generated for PBSF models will converge to a valid target distribution $\pi(\cdot)$, which we refer to as the posterior distribution of the PBSF model. Based on the implementation of projection and it's connection to the BSF model in the algorithm construction, a natural question arises:

\noindent\textit{Is $\pi(\cdot)$ essentially the original BSF model's posterior with $\bF$ projected onto $\Omega^g$ via $g(\cdot)$?} 

The short answer is no. To understand why, note that ProjMC$^2$ for the BSF model can be interpreted as a Gibbs sampler. Specifically, the density \(\pi(\bgamma, \bSigma \mid \tilde{\bF} = \tilde{\bF}_0, \bpsi, \bY)\) coincides with \(p(\bgamma, \bSigma \mid \bF = \tilde{\bF}_0, \bpsi, \bY)\) for the original BSF model. 
Here, \(p(\cdot)\) denotes the conditional distributions in the original BSF model. %When would the ProjMC\textsuperscript{2} target match the direct projection? 
Let $p(\bF, \bpsi, \bgamma, \bSigma \mid \bY)$ be the target posterior of the original BSF model. The direct projection of this posterior onto $\Theta$ is
$\int_{g^{-1}(\tilde{\bF})} p(\bF, \bpsi, \bgamma, \bSigma \mid \bY) \, d\bF$.
If we conjecture that
$\pi(\tilde{\bF}, \bpsi, \bgamma, \bSigma \mid \bY)
\;=\; \int_{g^{-1}(\tilde{\bF})} 
p(\bF, \bpsi, \bgamma, \bSigma \mid \bY) 
\, d\bF$,
then one would require
\[
\footnotesize 
\frac{p\bigl( \bF = \tilde{\bF}_0, \bpsi, \bgamma, \bSigma
\,\mid\, \bY\bigr)}{p\bigl( \bF= \tilde{\bF}_0, \bpsi 
\,\mid\, \bY\bigr) }=p\bigl(\bgamma, \bSigma \,\mid\,\bF= \tilde{\bF}_0, \bpsi, \bY\bigr) = \pi(\bgamma, \bSigma \mid \tilde{\bF}= \tilde{\bF}_0, \bpsi, \bY) = \frac{\pi(\tilde{\bF}= \tilde{\bF}_0, \bpsi,  \bgamma, \bSigma \mid \bY)}{ \pi(\tilde{\bF}= \tilde{\bF}_0, \bpsi \mid \bY)} \;,
\]
%\[
%\pi(\boldsymbol{\gamma}, \boldsymbol{\Sigma} \mid \tilde{\mathbf{F}}, \mathbf{Y})
%\;\propto\; 
%\frac{\pi(\tilde{\mathbf{F}} \mid \boldsymbol{\gamma}, \boldsymbol{\Sigma}, \mathbf{Y}) \pi(\bgamma, \bSigma \mid \bY)}{
%\pi(\tilde{\mathbf{F}} = \tilde{\mathbf{F}}_0 \mid \boldsymbol{\gamma}, \boldsymbol{\Sigma}, \mathbf{Y})\pi(\bgamma, \bSigma \mid \bY)}
%\;\times\;
%\pi(\boldsymbol{\gamma}, \boldsymbol{\Sigma} \mid \tilde{\mathbf{F}} = \tilde{\mathbf{F}}_0, \mathbf{Y}).
%\]
%A more detailed manipulation shows that this condition 
which implies
$p\bigl(\bF= \tilde{\bF}_0, \bpsi \mid \bgamma, \bSigma
, \bY\bigr) \bigl/ 
\int_{g^{-1}(\tilde{\bF}_0)} p\bigl(\bF, \bpsi \mid \bgamma, \bSigma 
, \bY\bigr)\, d\bF \bigr.$
must be independent of \(\bgamma, \bSigma\). This is nontrivial and typically does not hold, even under simple transformations. %(e.g., shifting columns of \(\mathbf{F}\)). %In fact, obtaining a closed-form expression for $\pi(\cdot)$, % the ProjMC$^2$ target, 
%even up to a proportionality constant, is itself quite challenging.

%\textbf{A ``Squeezed'' Posterior:} 
Although the PBSF posterior is not a direct projection of the BSF posterior, it acts as a ``squeezed'' posterior that combines the original likelihood with the original priors, but under certain integrative constraints. To see this, observe that %the conditional distribution 
$\pi(\tilde{\bF}, \bpsi \given \bgamma, \bSigma, \bY)$ %of the PBSF model 
%is proportional to
\begin{equation}\label{eq:cond_F}
%\[
\propto\int_{g^{-1}(\tilde{\bF})} \underbrace{\mbox{MN}(\bY|\bX\beta+\bF\bLambda, I_n, \bSigma)}_{\mbox{likelihood}} \times  \underbrace{\prod_{k = 1}^K N(\fb_k \given 0, \brho_{\psi_k}) p(\psi_k)}_{\mbox{prior of } \bF, \bpsi} d\bF\;\;.%\]
\end{equation}
On the other hand, if we condition on $\tilde{\bF}$, the distribution $ \pi( \bgamma, \bSigma \mid \tilde{\bF}, \bpsi, \mathbf{Y})$ %is proportional to 
\begin{equation}\label{eq:cond_other}
%\[
\propto \underbrace{\mbox{MN}(\bY \mid \bX\bbeta + \tilde{\bF}\bLambda, I_n, \bSigma)}_{\mbox{likelihood}} \times  \underbrace{\mbox{MNIW}(\bgamma, \bSigma \mid \bmu_{\bgamma}, \bV_{\bgamma}, \bPsi, \nu)}_{\mbox{prior of } \bgamma, \bSigma} \;,%\]
\end{equation}
where $\bmu_{\bgamma} = [\bmu_{\bbeta}^\top : \bmu_{\bLambda}^\top]^\top$, $\bV_{\bgamma} = [\bV_{\bbeta}^\top : \bV_{\bLambda}^\top]^\top$. Hence, while the PBSF posterior is not the same as the original BSF posterior (with a simple projection), the samples are drawn from a distribution that incorporates both the likelihood and the original priors—albeit modified or ``squeezed''. This modification stems from the way $\tilde{\bF}$ is integrated over or restricted to the manifold defined by $g(\cdot)$, rather than being drawn directly from the support of $\bF$. 
Moreover, Brook’s Lemma \citep{Brook_1964} implies that the joint posterior distribution is determined up to a normalizing constant. Specifically, we may write 
\begin{equation}\label{eq:brook}
\small
\pi(\tilde{\bF}, \bpsi, \bgamma, \bSigma \mid \bY) \;\propto\; \frac{\pi(\bgamma, \bSigma \mid \tilde{\bF}, \bpsi, \bY)}{\pi(\bgamma=\bgamma_0,\,\bSigma=\bSigma_0 \mid \tilde{\bF}, \bpsi, \bY)} \, \pi(\tilde{\bF}, \bpsi \mid \bgamma=\bgamma_0, \bSigma=\bSigma_0, \bY)\;,  
\end{equation} 
where the reference values \((\bgamma_0,\bSigma_0)\) can be chosen arbitrarily. Here, \(\pi(\tilde{\bF}, \bpsi \mid \bgamma, \bSigma, \bY)\) and \(\pi(\bgamma, \bSigma \mid \tilde{\bF}, \bpsi, \bY)\) are proportional to \eqref{eq:cond_F} and \eqref{eq:cond_other}, respectively. 
Similar to BSF, the PBSF model allows prediction of missing and unobserved outcomes, but with an additional assumption. As it offers no improvement over BSF in this regard, further discussion is deferred to Appendix~\ref{supp: predict} for brevity.

%{\textbf{Prediction: } 
%The PBSF model allows prediction of missing outcomes and outcomes at unobserved locations under an additional assumption. Let \(\mathcal{U}\) denote a set of unobserved locations. Extending the latent factor matrix \(\bF\) to include rows corresponding to \(\mathcal{U}\), the ProjMC\(^2\) algorithm can be directly adapted to generate posterior samples of latent factors at unobserved sites. Denote by \(\bY_u\) the missing or unobserved outcomes to be predicted, and assume that \(\bY_u\) is conditionally independent of the observed outcomes \(\bY\) given \((\bF, \bgamma, \bSigma)\). Under this assumption, the posterior predictive distribution of \(\bY_u\) is obtained by integrating model~\eqref{eq: spatial_factor} with respect to the posterior distribution of \((\bF, \bgamma, \bSigma)\). Accordingly, posterior predictive samples of \(\bY_u\) can be generated from~\eqref{eq: spatial_factor} for each draw \((\bgamma^{(l)}, \bSigma^{(l)}, \bF^{(l)})\). Nevertheless, as emphasized in the motivation for the PBSF model, prediction is more naturally carried out under the original BSF formulation, which affords greater flexibility and can yield improved predictive performance. For this reason, detailed implementation and illustration of prediction are omitted here for brevity.}

\subsection{Comparison to Existing Manifold Sampling Methods:}
It is noteworthy that instances of forced sampling of factors constrained to the Stiefel manifold have appeared in the Bayesian literature. \citet{brubaker2012family}, \citet{Byrne_2013}, and \citet{holbrook2016bayesian} developed Riemannian Manifold Hamiltonian Monte Carlo methods that can be employed in Bayesian factor models where factors are restricted to the Stiefel manifold. However, in addition to requiring derivatives of the log-posterior density, these approaches either necessitate solving a nonlinear system at each iteration via Newton’s method, or are not generalizable to certain subsets of the manifold, rendering them more vulnerable to multimodality and non-identifiability issues. Moreover, none of these developments have been applied or extended to factors capturing spatial structures. There also exists a related literature on sampling from distributions defined directly on the Stiefel manifold \citep[e.g.,][]{Hoff_2009, Chakraborty_2019}. Yet, these methods either do not accommodate the matrix Bingham–von Mises–Fisher distribution—which is critical for incorporating spatial correlation—or rely on computationally expensive Gibbs sampling schemes, where each iteration requires constructing an orthonormal basis for the null space of the current draw.  In light of these limitations, the PBSF model is, to the best of the author’s knowledge, the first Bayesian spatial factor model that samples spatial factors on the Stiefel manifold while retaining scalable and computationally efficient sampling algorithms. Furthermore, although ProjMC\textsuperscript{2} is introduced here in tandem with its application to scalable spatial factor modelling, its underlying construction—based on conditional distributions and transition kernels—represents a general methodological contribution, potentially applicable to other classes of models defined on constrained parameter spaces.

\section{Scalable PBSF models and sampling algorithms}\label{sec: model_implement}
\subsection{Scalable modelling}
This subsection addresses scalable extensions of PBSF and its implementation. %through the ProjMC$^2$ sampling algorithm.

\textbf{Sampling of $\bF$:} The primary computational challenge in the proposed algorithm lies in efficiently sampling the high-dimensional $n\times K$ factor matrix $\bF$. To ensure scalability with respect to the number of spatial locations $n$, NNGP \citep{datta16} is employed. The NNGP provides a sparse, full-rank approximation of a full GP, effectively capturing localized and global spatial dependence while ensuring linear computational and storage complexity in $n$. Specifically, each latent factor $f_k(\bs)$ for $\bs \in \mathcal{D}$ is endowed with an NNGP prior, $\mbox{NNGP}(0, \rho_{\psi_k}(\cdot, \cdot))$, implying that $\fb_k \sim \mbox{N}(\bzero, \brho_k)$, where $\brho_k = (\bI - \bA{\rho_k})^{-1}\bD{\rho_k}(\bI - \bA_{\rho_k})^{-\top}$. Here, the matrices $\bA_{\rho_k}$ (sparse, lower-triangular) and $\bD_{\rho_k}$ (diagonal) are constructed based on conditional expectations and variances derived from the covariance function $\rho_{\psi_k}(\bs,\bs')$ \citep[see details in][]{finley2019efficient}. The number of neighbors $m$ for NNGP is typically set below $20$ for computational efficiency. A maximin ordering of spatial locations is adopted, which has been shown to be robust and efficient for Vecchia approximations and can be computed in quasilinear time in $n$\citep{katzfuss2021general, guinness2018permutation, schäfer2021sparse}. 
%Consider a predetermined ordering of spatial locations $\calS = {s_1, \ldots, s_n}$. For each location $\bs_i$, define $N_m(\bs_i)$ as the set containing up to $m$ nearest neighbors among locations $\bs_j\in \calS$ with index of order $j<i$. The $(i,j)$-th entry of $\bA_{\rho_k}$ is $0$ whenever $\bs_j \notin N_m(\bs_i)$. If $j_1 < j_2 < \cdots < j_m$ are the $m$ column indices for the nonzero entries in the $i$-th row of $\bA_{\rho_k}$, then the $(i,j_k)$-th element of $\bA_{\rho_k}$ is the $k$-th element of the $1\times m$ vector $\ba_i^{\top} = \brho_{\psi_k}(\bs_i, N_m(\bs_i))\brho_{\psi_k}(N_m(\bs_i),N_m(\bs_i))^{-1}$. The $(i,i)$-th diagonal element of $\bD_{\rho_k}$ is given by $\rho_{\psi_k}(\bs_i,\bs_i) - \ba_i^{\top}\brho_{\psi_k}(N_m(\bs_i),\bs_i)$. Repeating these calculations for each row completes the construction of $\bA_{\rho_k}$ and $\bD_{\rho_k}$. 
%For ordering the locations, the maximin ordering has been shown to be robust and efficient for Vecchia approximations \citep{katzfuss2021general, guinness2018permutation}. For implementation, the maximin ordering can be computed in quasilinear time in $n$ \citep{schäfer2021sparse}. %, schafer2021compression}. %We also recommend random ordering as practical alternatives for extremely large datasets \cite{katzfuss2021general, guinness2018permutation}.

{In the PBSF model where $\bpsi$ is included in the parameter space, the matrices $\bD_{\rho_k}$ and $\bA_{\rho_k}$ must be recomputed at each update of $\psi_k$, with complexity $\mathcal{O}(n\cdot m^3 \cdot K) = \mathcal{O}(n)$ when $m << n$. Although these updates are scalable and can, in principle, be accelerated through parallelization, they may still represent a relatively significant computational burden—particularly when $\bpsi$ is updated via algorithms like slice sampling, which requires multiple evaluations of $\bD_{\rho_k}$ and $\bA_{\rho_k}$ within each iteration. In contrast, for the PBSF model with prefixed $\bpsi$, the matrices $\bD_{\rho_k}$ and $\bA_{\rho_k}$ can be precomputed. This eliminates the need for repeated updates, making the fixed-$\bpsi$ formulation particularly attractive in practise. Moreover, fixing $\bpsi$ is common in applications, and simulations further demonstrate that doing so can improve sampling efficiency for loadings $\bLambda$.}
%Following the construction of ProjMC$^2$ in the preceding section, we prefix $\psi_k$, which allows us to expedite MCMC sampling by precomputing the matrices $\bD_{\rho_k}$ and $\bA_{\rho_k}$. This step, amenable to parallelization, has complexity $\mathcal{O}(n\cdot m^3 \cdot K) = \mathcal{O}(n)$ when $m << n$. 
 
By modelling $f_k(\bs)$ using NNGP, the cost of a direct sample of $\bF$ from the high-dimensional Gaussian distribution \eqref{eq: SLMC_F_cond_post} can be reduced from $\mathcal{O}(n^3)$ to $\mathcal{O}(n)$. Recall that the $\bL_k^{-1}$ in $\tilde{\bX}$ is now $\bL_k^{-1} = \bD_{\rho_k}^{-1/2} (\bI - \bA_{\rho_k})$, which implies that the number of nonzero elements in $\tilde{\bX}$ grows linearly with $n$. Such inherent sparsity facilitates efficient conjugate gradient algorithms \citep{zdb2019, nishimura2023prior, zhang2022applications}. More concretely, obtaining a sample from \eqref{eq: SLMC_F_cond_post} reduces to solving the linear system $\tilde{\bX}x = (\tilde{\bY} + v)$, where the elements of $v$ are drawn independently from a standard normal distribution. In our implementation, we employ the iterative LSMR method, a robust solver for sparse linear systems that avoids explicitly forming the matrix $\tilde{\bX}^\top \tilde{\bX}$ \citep{fong2011lsmr}. When generating $\tilde{\bF} = g(\bF)$, we employ the modified Gram-Schmidt to perform a thin QR decomposition. This approach is particularly efficient when $n >> K$, with total cost in $\mathcal{O}(nK^2)$ \citep{golub2012}. Detailed algorithmic procedures are provided in Appendix~\ref{SM: BSLMC_NNGP_alg}. Finally, while the discussion focuses on the NNGP approach, some of the computational strategies presented for sampling $\bF$ can be adapted or modified for use with other scalable spatial modelling approach. See \citet{zhang2022spatial} for related discussions.

\textbf{Sampling of $\bpsi$:} Since the conditional posterior of $\bpsi$ in BSF models is not available in closed form, one would need sampler such as Metropolis–Hastings (M–H), slice sampling, or Hamiltonian Monte Carlo (HMC) to update $\bpsi$. As pointed out in the construction of ProjMC$^2$, the target distribution for $\bpsi_k$ updates is proportional to $p(\bF_k \mid \psi_k) p(\psi_k)$ and the evaluation is scalable when $f_k(s)$ is modelled using NNGP. In the first simulation study in Section~\ref{sec: simulation}, a componentwise univariate slice sampler with an adaptive bracket width is implemented. The bounded parameter $\psi_k \in [a_{\psi_k}, b_{\psi_k}]$ is reparameterized into $\xi = \operatorname{log}\!\Big\{({\psi_k-a_{\psi_k}})/({b_{\psi_k}-\psi_k})\Big\}$. Sample of $\xi$ is through  Neal’s stepping–out and shrinkage scheme \citep{neal2003slice}. During a short warm–up the slice width $w$ is adapted by a diminishing–stepsize Robbins–Monro update that targets roughly $1$–$2$ step–out expansions per iteration (with a small penalty for excessive shrinkage), %$\log w \leftarrow \log w+\eta_t\big[(e_t-\tau_e)+\lambda(s_t-\tau_s)\big]$, 
and then freeze $w$ to preserve stationarity and ergodicity \citep{roberts2007coupling}. 

\begin{comment}
\textbf{modelling of $\bSigma$:} When considering an application for spatial transcriptomic data, the number of outcomes (genes) $q$ typically ranges from hundreds to tens of thousands. To ensure computational scalability, it is assumed that the dominant correlation structure is captured by predictors ($\bX\beta$) and low-dimensional embeddings ($\bF \bLambda$), and a diagonal structure is adopted for the covariance matrix $\bSigma$. %, ensuring that computational complexity scales linearly with the number of outcomes. 
Specifically, let $\bSigma = \oplus_{i=1}^q \{\sigma_i^2\}$ with $\sigma_i^2 \sim \mathrm{IG}(a, b_i)$ prior, the conditional posterior for each $\sigma_i^2$ in the BSF model remains inverse-gamma
\begin{equation}\label{eq: Diag_Sigma_LMC_post}
\sigma_i^2 \mid (\bY, \bF) \sim \mathrm{IG}\bigl(a^\ast, b_i^\ast\bigr),   
\end{equation}
where $a^\ast = a + \frac{n}{2}$, $b_i^\ast = b_i + \frac{1}{2} (\bY^\ast - \bX^\ast\bmu^\ast)_i^\top (\bY^\ast - \bX^\ast\bmu^\ast)_i$. Here $(\bY^\ast - \bX^\ast\bmu^\ast)_i$ denotes the $i$-th column of $\bY^\ast - \bX^\ast\bmu^\ast$. From \eqref{eq: augment_linear_LMC}, $ 
\bgamma \given 
\bSigma, \bF, \bY \sim \mbox{MN}(\bmu^\ast, \bV^\ast, \bSigma)$.
\end{comment}

\textbf{Initialisation:} To facilitate convergence, the regression parameters $\bbeta$ are initialized using ordinary least squares estimation $(\bX^\top \bX)^{-1}\bX^\top \bY$. The latent factor matrix $\bF$ and the loading matrix $\bLambda$ are initialized via principal component analysis (PCA) applied to residuals obtained from the initial regression fit, $\bY - \bX\bbeta$. Considering that PCA is employed solely for initialisation and will subsequently be updated through MCMC iterations—and given the potentially large dimensions of $n$ and $q$—randomized SVD is adopted to ensure computational complexity scales linearly with respect to $n$ and $q$ \citep{halko2011algorithm, martinsson2020randomized}. It is recommended to arrange the latent factors and loading matrices such that the factors are ordered by decreasing empirical spatial range (i.e., decreasing smoothness), enhancing convergence stability as discussed in detail in the following subsection. Finally, diagonal entries of $\bSigma$ can be initialized using residual variances from the initial regression fit, $\bY - \bX\bbeta$. In the case of misalignment, $\bbeta$ can be initialized column-wise using the observations for each outcome, while $\bLambda$ and $\bSigma$ can be initialized from the data at locations without misalignment. %A simplified algorithmic summary is presented in equation \eqref{code: alg1_brief}, with comprehensive details available in Supplement~\ref{SM: BSLMC_NNGP_alg}.

\begin{algorithm}[h]
\setstretch{1.0}  % locally set to single spacing
%\caption{ProjMC$^2$ for NNGP based BSF model with diagonal $\bSigma$}
\caption{Sampling algorithm for NNGP based PBSF model}
\label{code: alg1_brief}
\begin{algorithmic}[1]

\State \textbf{Input:} Design matrix $\bX$, outcomes $\bY$, set of spots $\chi$, prior parameters, prefixed parameters, and number of MCMC iterations $L$.

\State \textbf{initialisation, precomputation, and preallocation} \newline
\textit{\small (Includes preallocation of necessary matrices, construction of the maximin ordering of $\chi$, precomputation of matrices $\bD_{\rho_k}$ and $\bA_{\rho_k}$, and initialisation of parameters for MCMC.)}

\For{$l = 1, \dots, L$}

\State Sample $\bF^{(l)}$ by solving $\tilde{\bX}x = (\tilde{\bY} + v)$, with $v \sim N(0, \bI)$, using LSMR.

\State Update $\psi_k$ for each $k$ via M--H, slice sampling or HMC, with target density proportional to $p(\bF_k \mid \psi_k)\, p(\psi_k)$. \textit{\small (Skip for fixed $\bpsi$ )}

\State Generate projected embeddings $\tilde{\bF}^{(l)} = g(\bF^{(l)})$. \textit{\scriptsize (Update $\bX^\ast$ in \eqref{eq: augment_linear_LMC} by replacing $\bF$ with $\tilde{\bF}^{(l)}$.)}

\State Sample parameters $(\bbeta^{(l)}, \bLambda^{(l)}, \bSigma^{(l)})$ using \eqref{eq: SLMC_MNIW} (or \eqref{eq: SLMC_NIG_misaligned} with diagonal $\bSigma$ and misaligned data).

%\State Sample covariance matrix $\bSigma^{(l)}$ using \eqref{eq: Diag_Sigma_LMC_post}.

%\State Sample parameters $(\bbeta^{(l)}, \bLambda^{(l)})$ from $\mbox{MN}(\bmu^\ast, \bV^\ast, \bSigma^{(l)})$.

\EndFor

\State \textbf{Output:} Posterior samples $\{\tilde{\bF}^{(l)}, \bpsi^{(l)}, \bbeta^{(l)}, \bLambda^{(l)}\}_{l= 1, \ldots, L}$ of low-dimensional embeddings, regression coefficients, and loading matrix.
\newline
\textit{\small (Retain posterior samples after the warm-up period; thinning, i.e., keeping one iteration per several iterations, can be used to reduce storage.)}
\end{algorithmic}
\end{algorithm}

\textbf{Variants:} The proposed PBSF model admits several variants%, as discussed throughout the paper
. A brief summary is provided here. The most computationally efficient formulation arises when there is no misalignment and the hyperparameters $\bpsi$ are fixed. In this case, $\bSigma$ may take any form, provided it is positive definite. When misalignment is present, conditional conjugacy is exploited to ensure efficient sampling, which requires $\bSigma$ to be diagonal. This restriction is standard and particularly advantageous when the number of outcomes $q$ is large. Under misalignment, Step 7 of Algorithm~\ref{code: alg1_brief} incurs additional cost, as computations must be performed separately for each outcome using only the locations where that outcome is observed.
In all of the above variants, the hyperparameters $\bpsi$ may also be updated. Since hyperparameters in commonly used spatial kernels are notoriously difficult to sample, slice sampling or HMC may be preferable to M–H for improving chain mixing. Notably, the extension to misaligned data through \eqref{eq: SLMC_NIG_misaligned}, the implementation of slice sampling, and initialisation are new algorithmic contributions relative to \cite{zhang2022spatial}, where missing outcomes had to be imputed within the MCMC and $\bpsi$ were handled only through adaptive M-H. The formulation in \eqref{eq: SLMC_NIG_misaligned} also introduces greater flexibility in specifying priors for $\bgamma$. A simplified algorithmic summary is presented in Algorithm~\ref{code: alg1_brief}, with comprehensive details available in Appendix~\ref{SM: BSLMC_NNGP_alg}.

%Though the dimension of the posterior space is linear to the number of observations, 
%We expect a top convergence rate comparing to other Bayesian spatial models with MCMC algorithm sampling over such a high dimension posterior space. %that is comparable to a Gibbs sampling algorithm with a low dimension of parameters. 
%Since $\bF$ is sensitive to the value of the intercept, we recommend using an intercept-centred latent process to obtain inference for the latent spatial pattern. %We also advise against initializing $\bLambda = \bO$ in the MCMC, which may result in $\bF$ getting an extreme initial value and slowing down convergence. 

%Analogous to the conjugate multivariate latent model \textcolor{blue}{\citet{zhang2020high}}, 
 %Accelerating MCMC sampling through a conjugate gradient method has an elaborate implementation in \citet{nishimura2023prior}. %We develop a Bayesian framework to implement this sampling scheme in massive multivariate spatial data modelling. 
 
\subsection{Implementation and Practical Considerations}\label{subsec: Imple_consider}

\textbf{Convergence and Mixing in Practise:} While theoretical results guarantee that ProjMC$^2$ for the BSF model converges to its stationary distribution regardless of initialisation, posterior multimodality remains possible despite restricting factor matrix samples to the space $\Omega^g$. This concern arises naturally because $\Omega^g$ is closed under permutations and sign changes of factors. Such inherent symmetry induces identifiability issues within the likelihood function, complicating the posterior landscape and potentially leading to slower MCMC mixing and more complex convergence rate analyses.

Empirically, however, ProjMC$^2$ demonstrates greater stability than initially anticipated. Simulation studies consistently indicate that recovered factors tend to organise themselves by decreasing smoothness. Columns of the projected factor matrix $\tilde{F}$ naturally order themselves, placing smoother factors in initial columns and noisier factors in subsequent columns. The empirical investigation attributes this observed stability primarily to the QR decomposition step within the projection $g(\cdot)$. 

To intuitively illustrate this phenomenon, consider a simplified scenario with two factors, and we represent the factor matrix prior to projection as $\bF = [\fb_1, \fb_2]$. Suppose that $\fb_1 \sim \mathcal{N}(\mu_1, C_1)$ and $\fb_2 \sim \mathcal{N}(\mu_2, C_2)$. The projected factor matrix via QR decomposition, $\tilde{\bF} = [\tilde{\fb}_1, \tilde{\fb}_2]$, is given by $\tilde{\fb}_1 = \fb_1/\|\fb_1\|$, followed by $\tilde{\fb}_2 = \fb_2 - (\tilde{\fb}_1^\top \fb_2)\tilde{\fb}_1$ and normalization $\tilde{\fb}_2 = \tilde{\fb}_2/\|\tilde{\fb}_2\|$. 
When the initial factor $\fb_1$ exhibits strong correlations (smoothness), QR projection introduces minimal perturbations in subsequent factors. Consequently, posterior distributions concentrate in regions of $\Omega^g$ favoring this smooth-to-noisy factor ordering. Conversely, placing less correlated (noisy) factors first increases perturbations, dispersing posterior mass and reducing local density.

Hence, the QR projection implicitly induces a preferred ordering of factor embeddings based on the underlying correlation structure, thereby enhancing sampling stability. Although it remains theoretically challenging to fully capture all posterior modes using finite-length MCMC chains, the empirical findings suggest that the QR-induced ordering substantially enhances practical stability and sampling efficiency. 

\textbf{Postprocessing for Label Switching Issue:} Although the QR decomposition can enhance identifiability, it does not resolve the so-called label-switching issues inherent in the proposed PBSF models. In practise, MCMC samples for rows of the loading matrix $\bLambda$ often oscillate between two symmetric modes, where the signs of all elements within a row simultaneously switch. Direct posterior summarization of these draws without correcting for label switching can lead to misleading inference \citep{stephens2000dealing}. To mitigate this issue, we employ a post-processing approach to align samples from the minor mode with the dominant mode \citep[Section 22.3]{stephens2000dealing, gelman2013}. Specifically, the posterior mean of each row of the loading matrix is first computed from the MCMC chains after warm-up. Then, at each iteration, check the sign of its inner product with the current draw. If negative, flip the signs of the row and corresponding latent factor. This realignment ensures coherent posterior summaries.

%the inner product between the posterior mean and the current MCMC sample of each row is evaluated. If this inner product is negative, indicating a sign flip relative to the dominant mode, the signs of the corresponding row of the loading matrix and the associated latent factor are inverted. This alignment procedure ensures that posterior summaries more accurately reflect the underlying structure of the factor model.

\section{Simulation}\label{sec: simulation}
Two simulation studies were designed to evaluate variants of the PBSF models. For both studies, the response $\by(\bs)$ was simulated from the spatial factor model in \eqref{eq: spatial_factor} with $q = 10$, $p = 2$, $K = 2$, and a diagonal $\bSigma$ over $n = 2000$ randomly generated locations over a unit square. The explanatory variable $\bx(\bs)$ consists of an intercept and a single predictor generated from a standard normal. 
Each $f_k(\bs)$ was generated using an exponential covariance function, i.e.,
$
\rho_{\psi_k}(\bs, \bs') = \exp{(-\phi_k\|\bs - \bs'\|)}, \text{ for } \bs, \bs' \in {\mathcal{D}}\;,
$
where $\|\bs - \bs'\|$ is the Euclidean distance between $\bs$ and $\bs'$, and $\psi_k = \phi_k$ is the decay for each $k$. The true decay parameters used to generate  $f_1(\bs)$ and $f_2(\bs)$ were 6.0 and 9.0, respectively. The factors were then centred to mean zero and scaled to have norm $\sqrt{n-1}$ for data generation. The exact values for the remaining parameters are provided in Appendix~\ref{sm: values_examples}. A flat prior was assigned for $\bbeta$ and $\bLambda$ and $\mbox{IG}(2, 1.0)$ priors were assigned for the diagonal elements of $\bSigma$. The priors for the latent spatial factors are modelled through NNGP with the number of neighbors $m = 15$. Simulation studies compared the proposed ProjMC$^2$ algorithm with its MCMC counterpart without the projection step (hereafter referred to as ``Gibbs'') to quantify the effect of projection. The posterior inference for each model was based on MCMC chains with 15,000 iterations after a burn-in of 5,000 iterations. All models were run on an Apple MacBook Pro with an M2 Max chip, 12-core CPU (8 performance and 4 efficiency cores), and 96 GB of unified memory, running macOS 14.6.1 (Sonoma). Convergence diagnostics and other posterior summaries were implemented within the Julia statistical environment.

\subsection{Simulation Results}
In the first study, 25\% of the observations for each outcome were randomly withheld, resulting in 94.85\% of locations having at least one missing outcome. For this dataset, the PBSF model for misaligned data was fitted with $\bpsi$ updated, assigning a uniform prior on $[0.1,20]$ to each decay parameter. 

\textbf{Convergence and Mixing Rate Evaluation:} This study first investigated ProjMC$^2$ on convergence and mixing rate by comparing three MCMC approaches. The baseline approach is the ``Gibbs'' sampler. As previously discussed, certain non-identifiability issues can be mitigated through appropriate post-processing techniques. Among several post-processing methods tested, it was found that recentering spatial factors at zero yielded the most substantial improvement in convergence and mixing rate. Since the proposed ProjMC$^2$ algorithm also requires post-processing, comparisons were made between the ``Gibbs'' sampler with post-processing (Gibbs+Post) and ProjMC$^2$. %the proposed algorithms (ProjMC$^2$). To ensure fairness, the blocked Gibbs sampler was adapted so that both algorithms employed identical slice sampling code for hyperparameter updates. 
For conciseness, the improvement achieved by post-processing for the baseline algorithm is provided along with the second simulation results in Appendix~\ref{sm: sim2_results}.

Table~\ref{tab:ess_comparison} summarises the effective sample sizes (ESS) for all model parameters, and Figure~\ref{fig:trace_compare} presents trace plots for the decay parameters ($\phi_1$ and $\phi_2$) and the loadings (elements of $\bLambda$). As shown in Table~\ref{tab:ess_comparison}, the intercepts ($\bbeta_0$, the first row of $\bbeta$), regression coefficients ($\bbeta_1$, the second row of $\bbeta$), and the noise covariance matrix $\bSigma$---were sampled efficiently, each yielding minimum ESS values exceeding 2,000.

%\begin{comment}
\begin{table}[t]
\centering
\footnotesize
\begin{tabular}{lccccc}
\toprule
 & 
\multicolumn{2}{c}{\textbf{Gibbs + Post}} & 
\multicolumn{2}{c}{\textbf{ProjMC\textsuperscript{2}}} \\
\cmidrule(lr){2-3} \cmidrule(lr){4-5}  \textbf{ESS}
&(min/mean/med)& $<100$ 
&(min/mean/med)& $<100$ \\
\midrule
$\bbeta_0$ &  7272/11732/12359 & 0\% & 7351 / 11851 /12414	& 0\% \\
$\bbeta_1$ & 6446 /11268/ 12173 & 0\% & 6891/11484/12096 & 0\% \\
$\bLambda$  & 32/35/34  & 100\% & 178/1173/241 & 0\% \\
$\bF$& 35/872/103 & 49\% & 382/7766/7114 & 0\% \\
$\bSigma$ & 2479/9809/11766 & 0\% & 2668/9704/11131	& 0\% \\
$\bpsi$ & $(\phi_1, \phi_2):$ (36, 39) & 100\% & $(\phi_1, \phi_2):$ (370, 1060) & 0\%\\
\bottomrule
\end{tabular}
\caption{ \footnotesize Comparison of effective sample size (ESS)—reported as minimum, mean, median, and the proportion of variables with low ESS values (ESS $<100$)— across the ``Gibbs'' sampler with post-processing (Gibbs + Post), and the proposed ProjMC$^2$. Results are shown for the intercepts $\bbeta_0$, regression coefficients $\bbeta_1$, loading matrix $\bLambda$, matrix of latent factors $\bF$, the noise covariance matrix $\bSigma$, and the decay parameters $\bpsi =(\phi_1, \phi_2)$, based on MCMC chains of 20{,}000 iterations with the first 5{,}000 iterations discarded as warm-up. }
\label{tab:ess_comparison}
\end{table}
%\end{comment}

For parameters involving the loading matrix ($\bLambda$), latent spatial factors ($\bF$) and decays $(\phi_1, \phi_2)$, stable and efficient MCMC chains were only obtained through the ProjMC$^2$ method. Specifically, ProjMC$^2$ significantly increased the minimum and median ESS for spatial factors from 35 and 103 to 382 and 7,114, respectively. All loading matrix elements, initially with ESS values below 100, attained ESS values exceeding 150. And the ESS for $\phi_1$ and $\phi_2$ increases from 36 and 39 to 370 and 1060, respectively. Moreover, trace plots in Figure~\ref{fig:trace_compare} demonstrate rapid convergence of MCMC chains produced by ProjMC$^2$ for decays and loadings, reaching high-probability regions within a small number of iterations. Although loading parameters exhibit relatively slower mixing, convergence occurs reliably within a few hundred iterations, underscoring the effectiveness of the proposed approach. In addition, Figure~\ref{fig:trace_compare} reveals several noteworthy phenomena. The MCMC chains for decays and loadings exhibit similar patterns, suggesting mutual influence and correlation. %Interestingly, the relatively poor mixing in these parameters had limited impact on identifiable parameters. 
The trace plots for decays further indicate that the PBSF model tended to slightly underestimate $\phi_1$ and overestimate $\phi_2$ in this simulation study, consistent with the preference for decreasing smoothing discussed in Section~\ref{subsec: Imple_consider}. As observed previously, this bias may propagate to $\bLambda$, resulting in slight bias in the loading estimates. %As observed previously, such bias may propagate to the estimation of $\bLambda$. Finally, the bias in the decay parameters for the latent factors coincides with the preference for decreasing smoothing discussed in Section~\ref{subsec: Imple_consider}.

\begin{figure}[!ht]
  \centering
  \subfloat[Decays (Gibbs+Post) \label{subfig:trace_post_phi}]{%
    \includegraphics[width=0.32\textwidth]{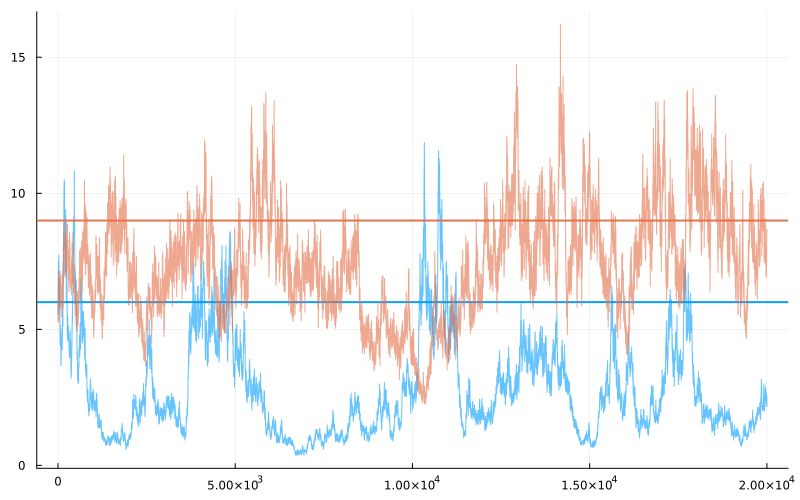}
  }
  \subfloat[Factor 1 Loadings (Gibbs+Post)\label{subfig:trace_post_Lambda1}]{%
    \includegraphics[width=0.335\textwidth]{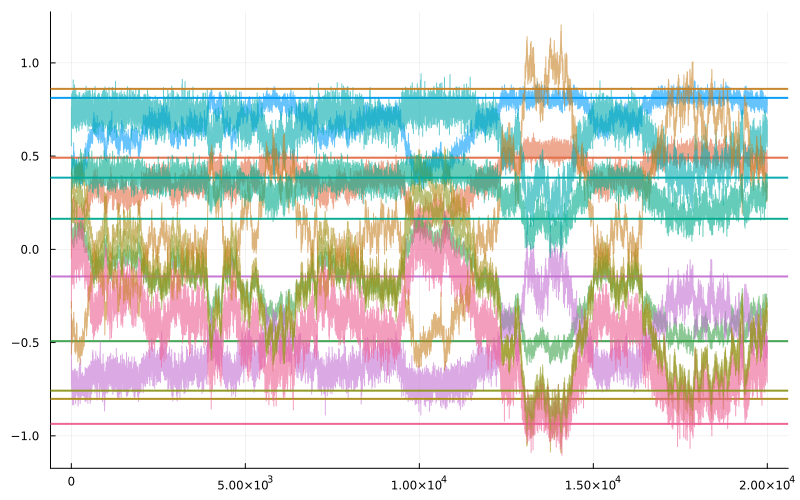}
  }
  \subfloat[Factor 2 Loadings (Gibbs+Post)\label{subfig:trace_post_Lambda2}]{%
    \includegraphics[width=0.335\textwidth]{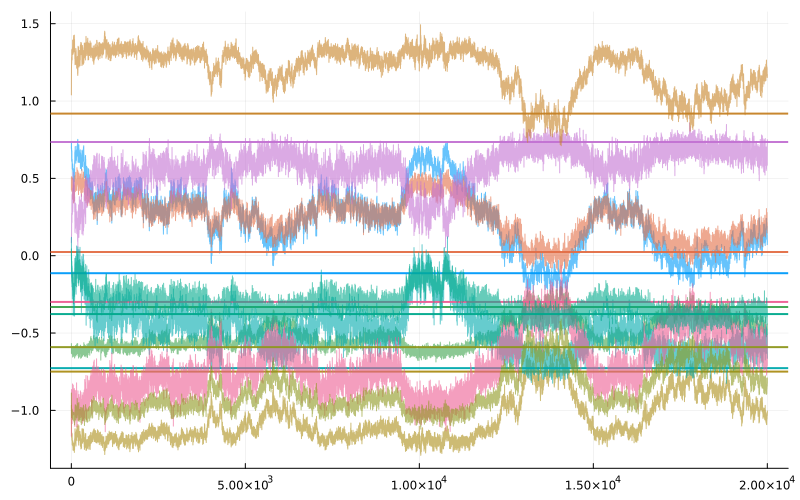}
  }\\
  \subfloat[Decays (ProjMC\textsuperscript{2}) \label{subfig:trace_projmc_phi}]{%
    \includegraphics[width=0.32\textwidth]{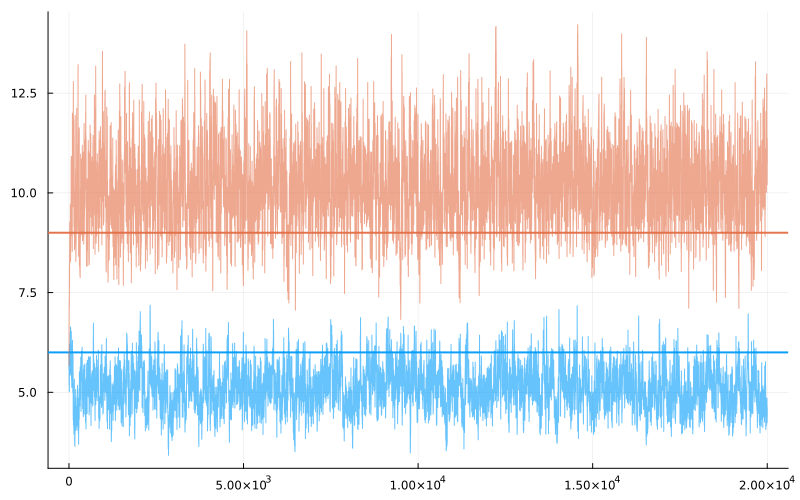}
  }
  \subfloat[Factor 1 Loadings (ProjMC\textsuperscript{2}) 
  \label{subfig:trace_projmc_Lambda1}]{%
    \includegraphics[width=0.33\textwidth]{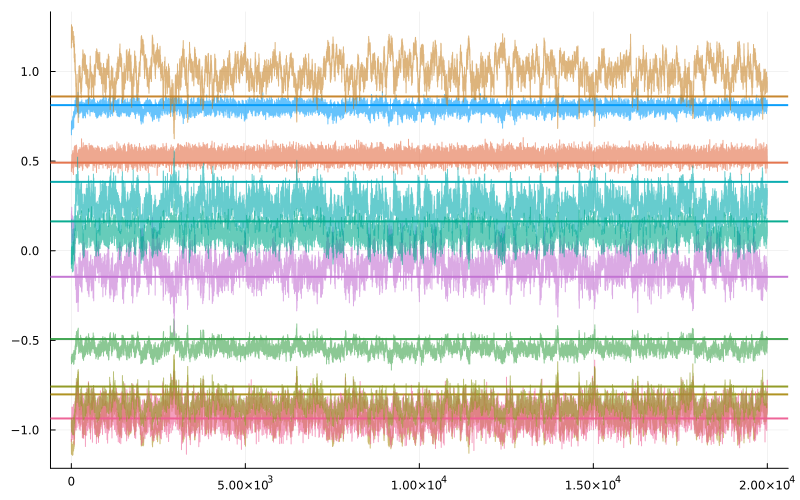}
  }
  \subfloat[Factor 2 Loadings (ProjMC\textsuperscript{2}) \label{subfig:trace_projmc_Lambda2}]{%
    \includegraphics[width=0.33\textwidth]{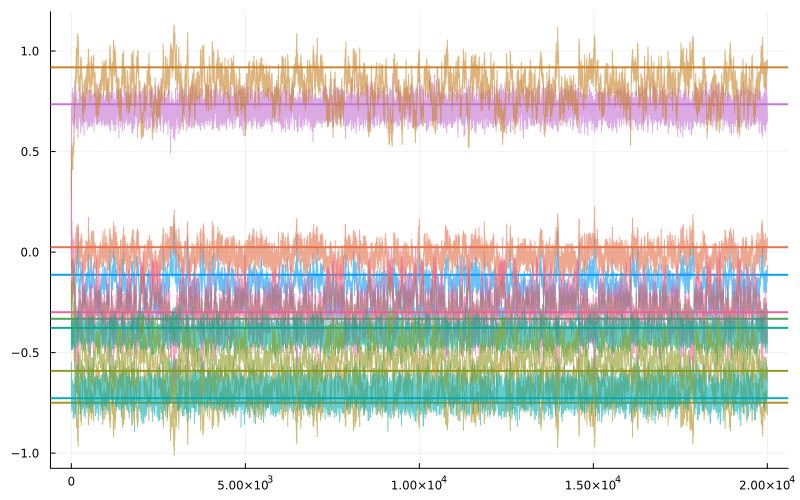}
  }
  \caption{\small Trace plots of MCMC chains for weakly identifiable parameters: the decays (first column) and loading matrix $\bLambda$ (second and third columns). Rows correspond to results from the blocked Gibbs sampler with post-processing (top row), and ProjMC\textsuperscript{2} (bottom row). Horizontal lines indicate the parameter values used to generate data.}
  \label{fig:trace_compare}
\end{figure}

\textbf{Inference Accuracy: }
%Following the convergence and mixing rate study, we now evaluate how accurately each sampler recovers the underlying spatial‑factor patterns. 
To compare the inference accuracy of the high-dimensional spatial factors across different algorithms, both the true factor and all posterior samples were projected onto the scaled sphere $\sqrt{n-1} \cdot\mathcal{S}^{n-1}$. For each factor, the point estimate was obtained using the Fr\'echet mean (or mean direction) of its posterior samples \cite{mardia2009directional}. Figure~\ref{fig:f_compare} provides a visual comparison between the true latent spatial factors and their point estimates obtained from each method. Visual inspection of Figure~\ref{fig:f_compare} indicates that both Gibbs+Post and ProjMC$^2$ successfully capture the dominant spatial patterns present in both factors ($\fb_1$ and $\fb_2$). However, a closer examination reveals differences in the recovery of more subtle features. The factors estimated via ProjMC$^2$ appear to accentuate patterns that are distinct to each factor. Conversely, Gibbs+Post sometimes yields estimates where subtle patterns exhibit similarity across different factors. This tendency in ProjMC$^2$ is likely attributable to the inherent constraints imposed by sampling on the Stiefel manifold, which enforces stricter orthogonality.

\begin{figure}[!ht]
  \centering
  \captionsetup[subfloat]{width=0.3\textwidth, justification = centering}
  \subfloat[\centering True $f_1$ \label{subfig:MCMC_true_f1}]{%
    \includegraphics[width=0.27\textwidth, height=0.22\textwidth]{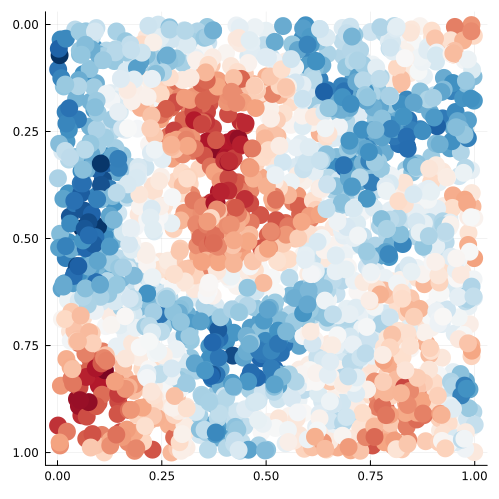}
  }
  \subfloat[\centering Est $f_1$ (Gibbs + Post) \label{subfig:MCMC_est_f1}]{%
    \includegraphics[width=0.27\textwidth, height=0.22\textwidth]{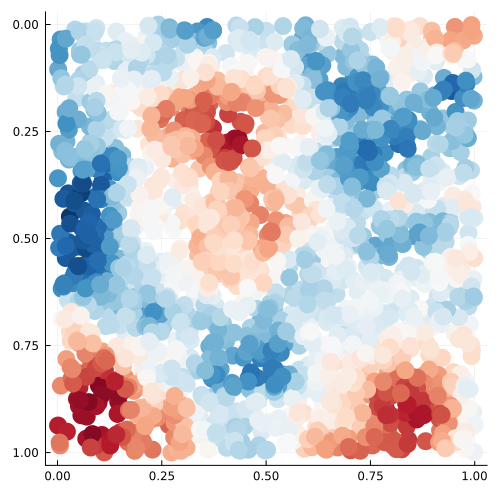}
  }
  \subfloat[\centering Est $f_1$ (ProjMC\textsuperscript{2})\label{subfig:PMCMC_est_f1}]{%
    \includegraphics[width=0.27\textwidth, height=0.22\textwidth]{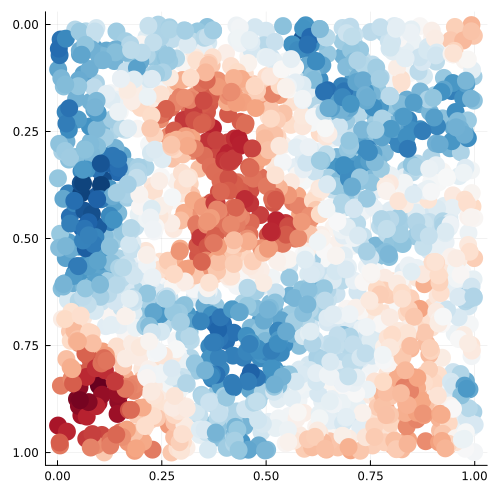}
  }\\
  \subfloat[\centering True $f_2$ \label{subfig:MCMC_true_f2}]{%
    \includegraphics[width=0.27\textwidth, height=0.22\textwidth]{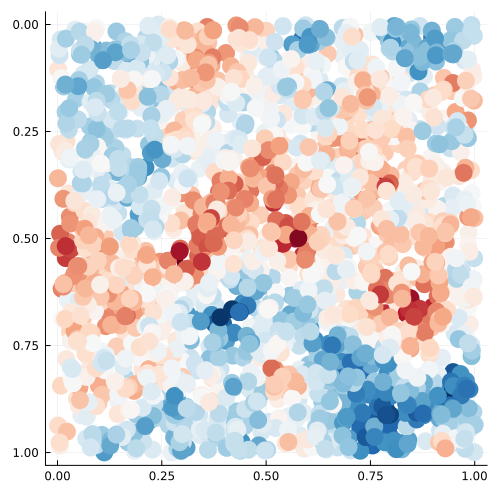}
  }
  \subfloat[\centering Est $f_2$ (Gibbs + Post)\label{subfig:MCMC_est_f2}]{%
    \includegraphics[width=0.27\textwidth, height=0.22\textwidth]{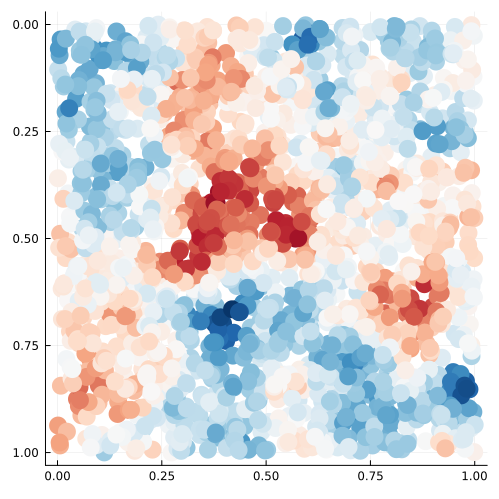}
  }
  \subfloat[\centering Est $f_2$ (ProjMC\textsuperscript{2})\label{subfig:PMCMC_est_f2}]{%
    \includegraphics[width=0.27\textwidth, height=0.21\textwidth]{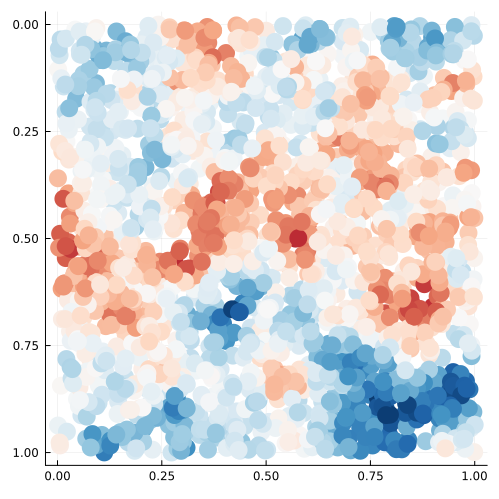}
  }
  \caption{ \footnotesize Scatter plots of the true and estimated posterior means for the two latent spatial factors, $\fb_1$ (top row) and $\fb_2$ (bottom row), from the first simulation study. Dot locations indicate spatial positions, and colours represent latent factor values. Columns correspond to the true factors, Gibbs sampler with post-processing, and the proposed ProjMC\textsuperscript{2} method. All factor estimates were centred at zero and rescaled to have norm $\sqrt{n-1}$, and results for the same factor share a common colour scale for visual comparison. }
  \label{fig:f_compare}
\end{figure}

Quantitative assessment of inference accuracy was based on two metrics, summarised in Table~\ref{tab:latent_diag_compare}. First, the fidelity of the point estimate was measured by the Euclidean distance between the estimated Fr\'echet mean and the true factor. Second, the concentration or stability of the posterior distribution for each factor was quantified by the spherical variance, calculated as $n-1 - \bar{R}^2$, where $\bar{R}$ denotes the Euclidean norm of the sample mean vector, computed using the posterior samples projected onto the sphere of radius $\sqrt{n-1}$. Overall, ProjMC$^2$ yields more accurate point estimates in this simulation study, as indicated by lower Euclidean distances and spherical variances. %It is important to note that the posterior distributions approximated by the two methods differ inherently. The observed accuracy and variance metrics therefore reflect a complex interplay between the flexibility of the posterior representation, the algorithm's convergence behaviour, and its mixing efficiency during posterior sampling.

%\begin{comment}
\begin{table}[htbp]
\centering
\footnotesize
\begin{tabular}{lcccc}
\toprule
\textbf{Latent Factor} & 
\multicolumn{2}{c}{\textbf{Gibbs + Post}} & 
\multicolumn{2}{c}{\textbf{ProjMC\textsuperscript{2}}} \\
\cmidrule(lr){2-3} \cmidrule(lr){4-5}
& Euclidean Dist. & Sphere Var. & Euclidean Dist. & Sphere Var. \\
\midrule
$f_1$ & 27.24 & 359.2 & 13.82 & 172.7 \\
$f_2$ & 21.48 & 377.6 & 18.46 & 336.7 \\
\bottomrule
\end{tabular}
\caption{\footnotesize Comparison of posterior summaries for the two latent spatial factors ($\fb_1$ and $\fb_2$) across the Gibbs sampler with post-processing (Gibbs + Post) and ProjMC\textsuperscript{2}. Each method is evaluated using two diagnostics: Euclidean distance between the true and estimated factor, and the spherical variance of the posterior samples.}
\label{tab:latent_diag_compare}
\end{table}
%\end{comment}

Finally, the posterior inference for the identifiable model parameters were assessed, specifically the regression coefficient matrix $\bbeta$ (encompassing intercepts $\bbeta_0$ and slope coefficients $\bbeta_1$) and the diagonal elements of the noise covariance matrix $\bSigma$. Detailed posterior summaries, including posterior means and 95\% credible intervals obtained from both the Gibbs+Post sampler and the ProjMC\textsuperscript{2} algorithm, are presented alongside the true parameter values in Table~\ref{tab: sim_infer_sum}. Examination of Table~\ref{tab: sim_infer_sum} reveals that the posterior inferences for the regression coefficients $\bbeta$ and the noise variance parameters (diagonal elements of $\bSigma$) are almost indistinguishable between the two methods. Both approaches yield posterior means that closely approximate the true values, and the corresponding 95\% credible intervals demonstrate almost the same width and coverage. %For the noise variance parameters (diagonal elements of $\bSigma$), the estimates derived from both algorithms are similarly accurate and closely aligned with the true values. In summary, both algorithms demonstrate robust and highly comparable performance in estimating the identifiable parameters of the model.

%\begin{comment}
\begin{table}[htbp]
\centering
\scriptsize
\setlength{\tabcolsep}{3pt}  % <-- this reduces the space between columns
\begin{tabular}{cccccccccccc}
\toprule
 &  & 
\multicolumn{2}{c}{\textbf{Gibbs + Post}} & 
\multicolumn{2}{c}{\textbf{ProjMC\textsuperscript{2}}} &
 &  & 
\multicolumn{2}{c}{\textbf{Gibbs + Post}} & 
\multicolumn{2}{c}{\textbf{ProjMC\textsuperscript{2}}}\\
\cmidrule(lr){3-4} \cmidrule(lr){5-6} \cmidrule(lr){9-10} \cmidrule(lr){11-12}
& & mean & 95\%CI &  mean & 95\%CI & & & mean & 95\%CI &  mean & 95\%CI\\
\midrule
\hline
$\bbeta_{[1, 1]}$ & 1.0 & 0.91 & (0.84, 0.99) & 0.91 & (0.84, 0.99) &$\bbeta_{[1, 6]}$ & -1.5 & -1.6 & (-1.76, -1.43) & -1.6 & (-1.76, -1.43) \\
$\bbeta_{[1, 2]}$ & -1.0 & -1.08 & (-1.19, -0.98) & -1.08 & (-1.19, -0.98) &$\bbeta_{[1, 7]}$ & 0.5 & 0.47 & (0.28, 0.66) & 0.47 & (0.28, 0.66) \\
$\bbeta_{[1, 3]}$& 1.0 & 1.0 & (0.94, 1.07) & 1.0 & (0.94, 1.07) &$\bbeta_{[1, 8]}$ & 0.3 & 0.27 & (0.2, 0.35) & 0.27 & (0.2, 0.35) \\
$\bbeta_{[1, 4]}$ & -0.5 & -0.47 & (-0.61, -0.33) & -0.47 & (-0.61, -0.32) &$\bbeta_{[1, 9]}$ & -2.0 & -2.07 & (-2.2, -1.94) & -2.07 & (-2.2, -1.94) \\
$\bbeta_{[1, 5]}$ & 2.0 & 2.0 & (1.93, 2.06) & 1.99 & (1.93, 2.06) &$\bbeta_{[1, 10]}$ & 1.5 & 1.49 & (1.41, 1.57) & 1.49 & (1.42, 1.56) \\
\hline
$\bbeta_{[2, 1]}$ & -3.0 & -2.82 & (-2.96, -2.69) & -2.82 & (-2.96, -2.68) &$\bbeta_{[2, 6]}$ & 3.0 & 3.12 & (2.83, 3.41) & 3.12 & (2.84, 3.42) \\
$\bbeta_{[2, 2]}$ & 2.0 & 2.08 & (1.9, 2.26) & 2.08 & (1.9, 2.26) &$\bbeta_{[2, 7]}$ & 4.0 & 4.1 & (3.76, 4.43) & 4.1 & (3.77, 4.43) \\
$\bbeta_{[2, 3]}$ & 2.0 & 2.04 & (1.93, 2.16) & 2.04 & (1.93, 2.16) &$\bbeta_{[2, 8]}$ & -2.5 & -2.5 & (-2.64, -2.36) & -2.5 & (-2.64, -2.36) \\
$\bbeta_{[2, 4]}$ & -1.0 & -1.12 & (-1.37, -0.86) & -1.12 & (-1.37, -0.86) &$\bbeta_{[2, 9]}$ & 5.0 & 5.15 & (4.93, 5.38) & 5.15 & (4.93, 5.37) \\
$\bbeta_{[2, 5]}$ & -4.0 & -4.0 & (-4.12, -3.88) & -4.0 & (-4.12, -3.88) &$\bbeta_{[2, 10]}$ & -3.0 & -2.95 & (-3.09, -2.82) & -2.95 & (-3.09, -2.82) \\
\hline
$\bSigma_{[1, 1]}$ & 0.5 & 0.52 & (0.47, 0.56) & 0.51 & (0.47, 0.56) &$\bSigma_{[6, 6]}$ & 2.5 & 2.59 & (2.4, 2.79) & 2.59 & (2.4, 2.79) \\
$\bSigma_{[2, 2]}$ & 1.0 & 1.04 & (0.97, 1.12) & 1.04 & (0.97, 1.12) &$\bSigma_{[7, 7]}$ & 3.5 & 3.54 & (3.29, 3.81) & 3.54 & (3.29, 3.8) \\
$\bSigma_{[3, 3]}$ & 0.4 & 0.41 & (0.38, 0.45) & 0.41 & (0.38, 0.45) &$\bSigma_{[8, 8]}$ & 0.45 & 0.47 & (0.43, 0.52) & 0.47 & (0.43, 0.52) \\
$\bSigma_{[4, 4]}$ & 2.0 & 2.07 & (1.91, 2.23) & 2.06 & (1.91, 2.22) &$\bSigma_{[9, 9]}$ & 1.5 & 1.59 & (1.47, 1.71) & 1.58 & (1.47, 1.7) \\
$\bSigma_{[5, 5]}$ & 0.3 & 0.31 & (0.28, 0.34) & 0.31 & (0.28, 0.34) &$\bSigma_{[10, 10]}$ & 0.5 & 0.5 & (0.46, 0.54) & 0.5 & (0.46, 0.54) \\
\bottomrule
\end{tabular}
\caption{\footnotesize Posterior inference for identifiable model parameters. Comparison of posterior means and 95\% credible intervals for regression coefficients ($\bbeta$) and noise variances (diagonal elements of $\bSigma$) obtained using the Gibbs sampler with post-processing (Gibbs+Post) and the proposed ProjMC\textsuperscript{2} algorithm, referenced against the true parameter values.}
\label{tab: sim_infer_sum}
\end{table}
%\end{comment}

\textbf{Simulation II Results:} In the second study, no missingness was assumed and the most sampling-efficient algorithm was examined, namely the PBSF model without misalignment and with a fixed value of $\bpsi$. Within the specified priors, the decay parameters for $f_1(\bs)$ and $f_2(\bs)$ were deliberately set to 4.0 and 6.0, respectively, in order to simulate a scenario in which the decay rates are misspecified or underestimated—a situation commonly encountered in practise. To avoid redundancy, repeated evaluations for the second simulation study (e.g., ESS tables and posterior inference) are reported in Appendix~\ref{sm: sim2_results}. A brief summary of the main observations is provided here. First, post-processing substantially increased the sampling efficiency of the intercepts $\bbeta_0$ for the ``Gibbs'' sampler. Second, fixing the hyperparameters led to a 50\%--100\% increase in ESS for weakly identifiable parameters. Overall, the remaining results were broadly consistent with those obtained in the first simulation study. In terms of computational cost, switching from Simulation I to Simulation II reduced the running time from approximately 36 minutes to under 6 minutes. Although misspecifying the hyperparameters introduced slightly larger bias in factor learning, the algorithm continued to perform well for spatially aware low-dimensional representation learning. These results suggest that fixing hyperparameters can provide a practical and computationally efficient strategy for probability-based pattern recognition.

\subsection{Sensitivity Analysis}\label{subsec: sens}

Initial explorations revealed that ProjMC\textsuperscript{2} consistently orders estimated spatial factors by decreasing smoothness. This ordering appears driven by the algorithm's QR decomposition, as it vanishes when this step is excluded. To systematically assess the sensitivity to initial conditions and hyperparameter prior, and to further clarify the QR decomposition's role in this smoothness-based ordering, additional experiments for simulation II were conducted. These experiments employed initial values designed to challenge the algorithm's inherent ordering. Specifically, regression coefficients ($\bbeta$) and noise variances (diagonal of $\bSigma$) were initialized at their true values. However, the loading matrix $\bLambda$ was initialized using its true value but with its two rows permuted, encouraging an initial factor ordering contrary to the expected smoothness hierarchy ($\fb_1$ smoother than $\fb_2$). Furthermore, we varied the prefixed spatial decay parameters ($\phi_1, \phi_2$) across three scenarios: 1) \textbf{Test 1:} $\phi_1=6.0$, $\phi_2=9.0$. These values match those used in the data generation process, reflecting moderate prior smoothness beliefs consistent with the true factors. 2) \textbf{Test 2:} $\phi_1=9.0$, $\phi_2=3.0$. This setting imposes stronger prior smoothness on the second factor relative to the first, contradicting the true smoothness relationship. 3) \textbf{Test 3:} $\phi_1=18.0$, $\phi_2=18.0$. These larger decay values correspond to weaker, identical smoothness priors for both factors, representing a less informative scenario.

Visual inspection of the estimated factors (Fig~\ref{fig:f_compare_sensitive}) confirms the robustness of the algorithm's ordering. Across all scenarios, ProjMC\textsuperscript{2} consistently recovered factors by decreasing smoothness ($\fb_1$ smoother than $\fb_2$), overriding permuted initialisations of $\bLambda$ and the misaligned priors in Test 2. This strongly indicates the QR decomposition's dominance in enforcing smoothness-based ordering.

\begin{figure}[!ht]
  \centering
  \subfloat[\centering \scriptsize True $\fb_1$ (ProjMC\textsuperscript{2})\label{subfig:PMCMC_true_f1_2}]{%
    \includegraphics[width=0.24\textwidth, height=0.21\textwidth]{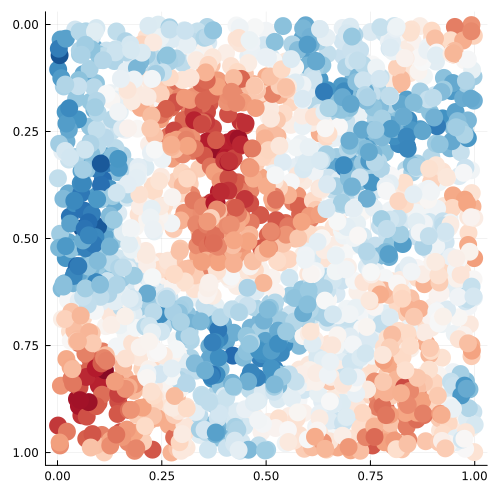}
  }
  \subfloat[\centering \scriptsize Est $\fb_1$ test 1\label{subfig:PMCMC_est_f1_t1}]{%
    \includegraphics[width=0.24\textwidth, height=0.21\textwidth]{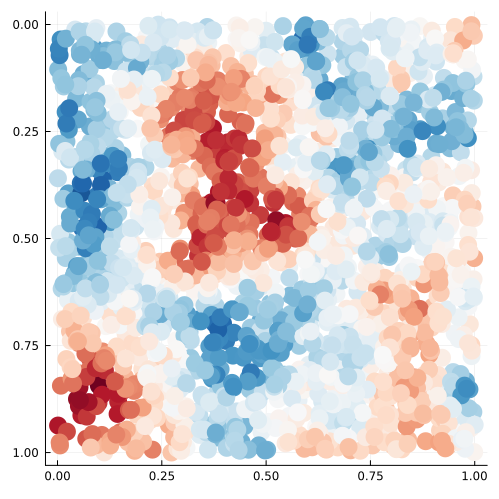}
  }
  \subfloat[\centering \scriptsize Est $\fb_1$ test 2\label{subfig:PMCMC_est_f1_t2}]{%
    \includegraphics[width=0.24\textwidth, height=0.21\textwidth]{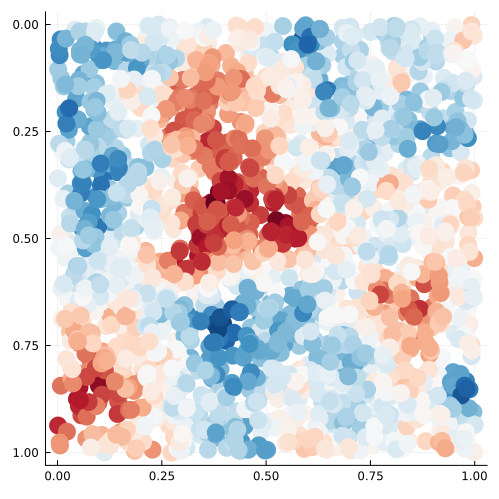}
  }
  \subfloat[\centering \scriptsize Est $\fb_1$ test 3 \label{subfig:PMCMC_est_f1_t3}]{%
    \includegraphics[width=0.24\textwidth, height=0.21\textwidth]{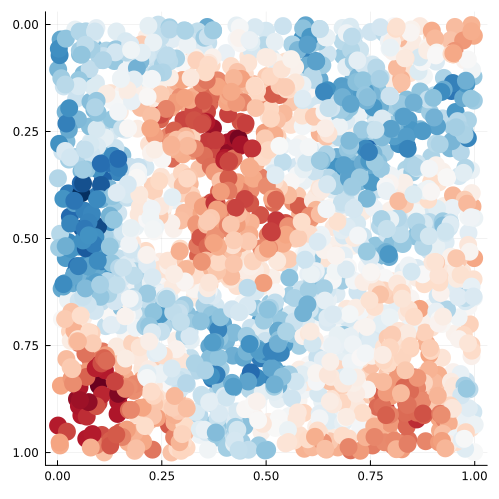}
  }
  \\
  \subfloat[\centering \scriptsize True $\fb_2$ \label{subfig:PMCMC_true_f2_2}]{%
    \includegraphics[width=0.24\textwidth, height=0.21\textwidth]{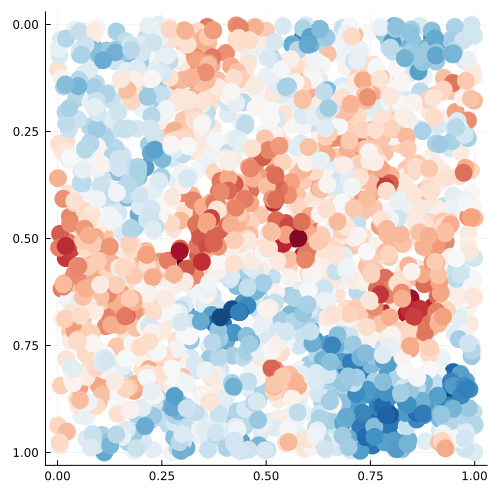}
  }
  \subfloat[\centering \scriptsize Est $\fb_2$ test 1 \label{subfig:PMCMC_est_f2_t1}]{%
    \includegraphics[width=0.24\textwidth, height=0.21\textwidth]{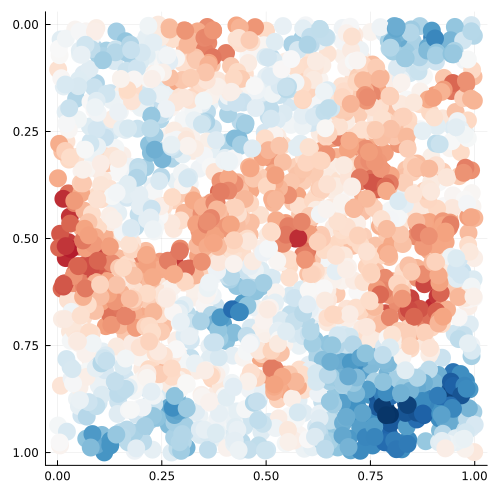}
  }
  \subfloat[\centering \scriptsize Est $\fb_2$ test 2\label{subfig:PMCMC_est_f2_t2}]{%
    \includegraphics[width=0.24\textwidth, height=0.21\textwidth]{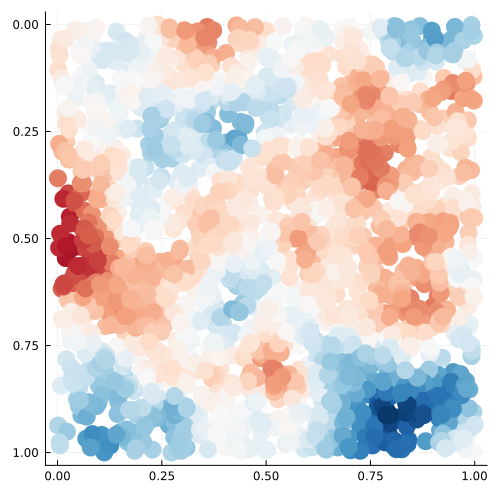}
  }
  \subfloat[\centering \scriptsize Est $\fb_2$ test 3\label{subfig:PMCMC_est_f2_t3}]{%
    \includegraphics[width=0.24\textwidth, height=0.21\textwidth]{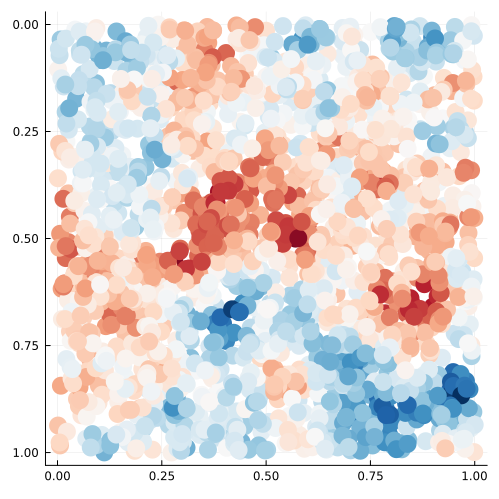}
  }

  \caption{\footnotesize Scatter plots comparing the true and estimated posterior mean of the two latent spatial factors, $\fb_1$ (top row) and $\fb_2$ (bottom row). Dot locations indicate spatial positions, and colours represent latent factor values. The first column shows the true factors projected onto the Stiefel manifold. The second to fourth columns show the estimated factors for three sensitive tests.}
  \label{fig:f_compare_sensitive}
\end{figure}

Quantitative MCMC efficiency (ESS) and latent factors inference accuracy (Euclidean distance to true factors, Spherical Variance of posterior samples) are summarised in Table~\ref{tab:ess_comparison_sens}~and~\ref{tab:latent_diag_compare_sens}, respectively. Analysis of these metrics highlights an anticipated trade-off mediated by the prior specifications. Stronger or more informative smoothness priors (Tests 1 and 2) generally lead to more concentrated posterior distributions, as evidenced by higher minimum and median ESS values (Table~\ref{tab:ess_comparison_sens}) and lower spherical variances (Table~\ref{tab:latent_diag_compare_sens}) for the challenging parameters in $\bLambda$ and $\bF$.

%\begin{comment}
\begin{table}[htbp]
\centering
\footnotesize
\begin{tabular}{lccccccc}
\toprule
 & 
\multicolumn{2}{c}{\textbf{Test 1}} & 
\multicolumn{2}{c}{\textbf{Test 2}} & 
\multicolumn{2}{c}{\textbf{Test 3}} \\
\cmidrule(lr){2-3} \cmidrule(lr){4-5} \cmidrule(lr){6-7} \textbf{ESS}
&(min/mean/med)& $<100$ 
&(min/mean/med)& $<100$ 
&(min/mean/med)& $<100$ \\
\midrule
$\bbeta_0$ & 7214/11597/12272	 & 0\% & 6087/11233/12159 & 0\% & 3893/9211/9659	 & 0\% \\
$\bbeta_1$ & 6444/11110/11869 & 0\% & 5367/10817/11750 & 0\% & 3374/8669/9048 & 0\% \\
$\bLambda$ & 155/1654/234		 & 0\%  & 493/4553/2648	  & 0\% & 142/347/177 & 0\% \\
$\bF$ & 375/8899/9874 & 0\% & 3183/13399/14261	 & 0\% & 374/7528/6386 & 0\% \\
$\bSigma$ & 6247/10932/12249 & 0\%  & 5961/11316/12725	 & 0\% & 4262/9626/11125 & 0\% \\
\bottomrule
\end{tabular}
\caption{ \footnotesize Comparison of effective sample size (ESS)—reported as minimum, mean, median, and the proportion of variables with low ESS values (ESS $<100$)— across three sensitivity tests. Results are shown for the intercepts $\bbeta_0$, regression coefficients $\bbeta_0$, loading matrix $\bLambda$, matrix of latent factors $\bF$, and the noise covariance matrix $\bSigma$, based on MCMC chains of 20{,}000 iterations with the first 5{,}000 iterations discarded as warm-up. }
\label{tab:ess_comparison_sens}
\end{table}
%\end{comment}

 %\begin{comment}
\begin{table}[htbp]
\centering
\footnotesize
\begin{tabular}{lcccccc}
\toprule
 & 
\multicolumn{2}{c}{\textbf{Test 1}} & 
\multicolumn{2}{c} {\textbf{Test 2}} & 
\multicolumn{2}{c} {\textbf{Test 3}}  \\
\cmidrule(lr){2-3} \cmidrule(lr){4-5} \cmidrule(lr){6-7}
& Eucl. Dist. & Sphere Var. & Eucl. Dist. & Sphere Var. & Eucl. Dist. & Sphere Var. \\
\midrule
$\fb_1$ & 15.75 & 125.72 & 28.29 & 101.23 & 14.17 & 258.53\\
$\fb_2$ & 20.93 & 303.74 & 35.28 & 216.96 & 16.04 & 290.72 \\
\bottomrule
\end{tabular}
\caption{ \footnotesize Comparison of posterior summaries for the two latent spatial factors ($\fb_1$ and $\fb_2$) across three sensitivity tests. Each test is evaluated using two diagnostics: Euclidean distance between the true and estimated factor, and the spherical variance of the posterior samples.}
\label{tab:latent_diag_compare_sens}
\end{table}
%\end{comment}

Specifically, Test 2, which imposed priors misaligned with the true factor smoothness, yielded the most stable MCMC chains for $\bLambda$ and $\bF$ (highest ESS overall) and the lowest spherical variances.%, indicating highly concentrated posterior samples. 
 However, this stability was achieved at the expense of accuracy; Test 2 exhibited the largest Euclidean distances between the estimated and true factors. The visual results in Figure~\ref{fig:f_compare_sensitive} corroborate this, showing noticeable over-smoothing in the estimated $f_2$ for Test 2 compared to the true pattern and the estimates from the other tests. Conversely, Test 3, employing weaker priors, resulted in generally lower ESS values but achieved Euclidean distances comparable to or better than Test 1. This suggests that relaxed prior constraints can improve factor recovery accuracy by mitigating prior-induced bias, despite potentially lower sampling efficiency.

\section{Real Data Analysis}\label{sec: real_data_analy}
The application and utility of the proposed PBSF model are illustrated using a spatial transcriptomics (ST) dataset. As a probabilistic regression framework, the model facilitates comprehensive posterior inference for all parameters, enabling an interpretable analysis of spatial transcriptomics data. This is particularly valuable for exploring spatial patterns in novel datasets lacking annotations or benchmarks. This analysis used a public 10x Genomics Xenium healthy human kidney dataset \citep{10xgenomics_kidney_preview}, profiling the expression of the Xenium Human Multi-Tissue and Cancer Panel (377 genes) across 97,560 cells. An initial filter retained cells with detectable expression for at least 20 genes, resulting in a final dataset comprising 69,490 cells for analysis. The Hematoxylin and Eosin stained tissue image is in Figure~\ref{fig:jasa_fivepanel}(a). Gene expression counts were normalized using the \texttt{SCTransform} function in the \texttt{Seurat} R package.

The PBSF model was fitted with $K=6$ latent factors (embeddings). Priors for regression coefficients ($\bbeta$), factor loadings ($\bLambda$), and error covariance ($\bSigma$) were specified identically to those for the simulation studies. Spatial decay parameters for the six factors were selected from a discrete grid, allowing effective spatial ranges to vary between 2000 and 600 micrometers. A single MCMC chain was run for 1000 iterations, discarding the initial 500 iterations as burn-in. ESS diagnostics (500 post-burn-in samples) indicated satisfactory convergence for spatial factors $\bF$ (only 209 of 416,940 elements had ESS $<100$). %Convergence diagnostics were assessed using ESS. Of the 416,940 elements comprising the spatial factors $\bF$, only 209 exhibited an ESS below 100 (calculated using the 500 post-burn-in samples), indicating satisfactory convergence and reliable posterior inference for the factors. 
Posterior mean of the latent factors, scaled by the norm of the corresponding row in the estimated loading matrix ($\hat{\bLambda}$), were extracted as low-dimensional spatial embeddings for downstream analyses. The posterior mean of the product $\bF \bLambda$ was computed to represent the estimated spatially varying component of gene expression, effectively denoised and centred.

For comparison, alternative dimension reduction and clustering techniques were applied to the same filtered and normalized dataset. These included PCA using the top $18$ principal components (selected via elbow plot) for subsequent clustering, and two contemporary graph-based methods for spatial transcriptomics: STAGATE \citep{dong2022deciphering} and GraphST \citep{long2023spatially}, identified as representative in recent benchmarks \citep{kang2025benchmarking}.
%where the top $18$ principal components were selected based on the elbow plot criterion for subsequent clustering. Also included were two contemporary graph-based methods specifically designed for spatial transcriptomics data: STAGATE \cite{dong2022deciphering} and GraphST \cite{long2023spatially}, identified as representative algorithms in recent benchmarking studies \cite{kang2025benchmarking}. 
Default parameter settings were used to obtain low-dimensional embeddings from STAGATE and GraphST.

To identify distinct spatial domains, Gaussian mixture models (\texttt{mclust} R package) were applied to low-dimensional embeddings from all methods, identifying $8$ distinct clusters. The PBSF result is shown in Figure~\ref{fig:jasa_fivepanel}~(b). For closer comparison of clustering performance, focus was placed on a subregion exhibiting rich spatial heterogeneity (indicated by the black circle in Figure~\ref{fig:jasa_fivepanel}~(b)). Clustering results within this subregion for PCA, STAGATE, GraphST, and PBSF are displayed in Figure~\ref{fig:jasa_fivepanel}~(c-f). Qualitatively, PCA yielded the most fragmented clustering, albeit capturing some large-scale spatial trends. STAGATE produced clusters with well-defined boundaries and relatively simple structures. The spatial domains identified by GraphST and PBSF appear broadly similar in structure; however, PBSF produces smoother, more continuous boundaries, reducing noise at subtle interfaces while still preserving fine localized structures.

However, definitive evaluation is challenging due to the absence of ground-truth annotations. To elucidate the biological basis of the PBSF-identified clusters, the dominant gene expression patterns contributing to the estimated spatial random effects (posterior mean of $\bF\bLambda$) were investigated. For each of the 8 clusters, genes were ranked by their mean estimated spatial effect, and the top five contributors were selected. Figure~\ref{fig:jasa_fivepanel}~(g) presents the distributions (mean and 90\% credible intervals) of these estimated gene-specific spatial effects for the leading genes per cluster. This approach isolates spatially structured gene expression signals from non-spatial variation and noise, yielding more interpretable molecular signatures for each spatial domain and informing the selection of representative marker genes visualized in Figure~\ref{fig:jasa_fivepanel}~(h–m).

\begin{figure}[htbp] %change to H in the final version
\centering
\vspace{-0.5cm}
\subfloat[\centering \scriptsize H\&E ]{%
    \includegraphics[width=0.4\textwidth]{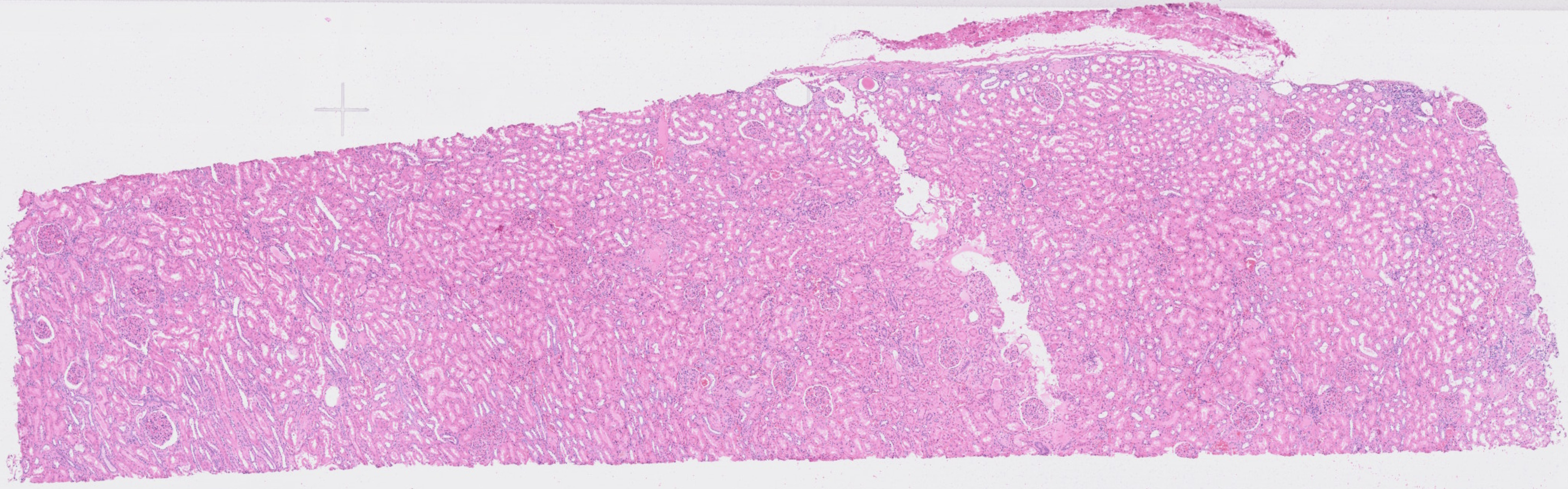}
  }
  \hfill
\subfloat[\centering \scriptsize PBSF clustering result ]{%
    \includegraphics[width=0.55\textwidth]{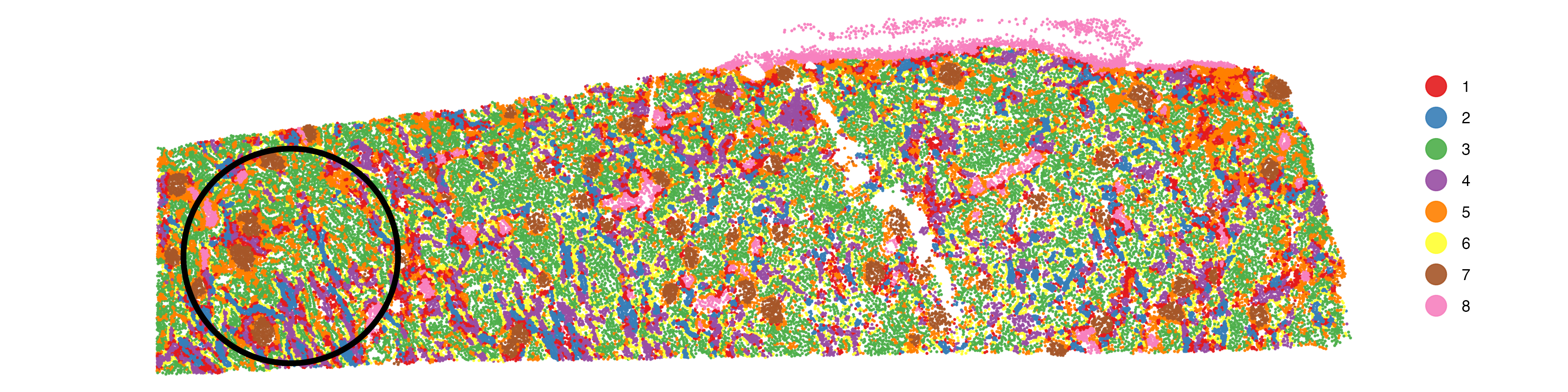}
  } \\
\subfloat[\centering \scriptsize PCA ]{%
    \includegraphics[width=0.22\textwidth]{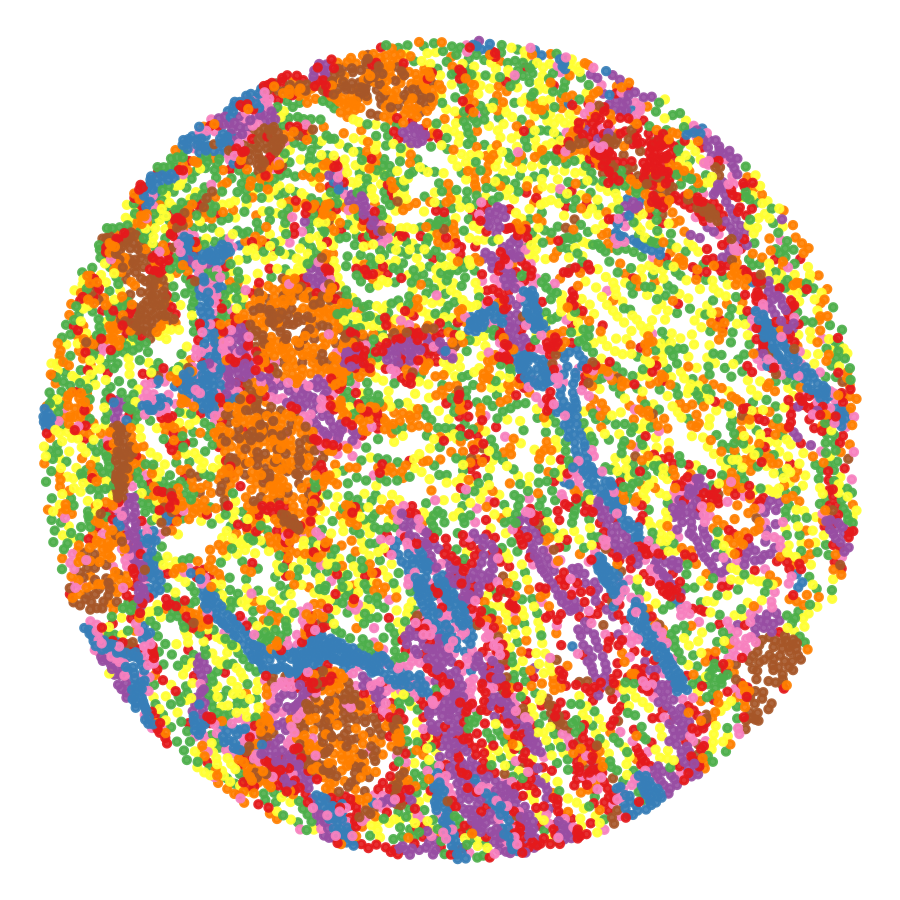}
  }
  \hfill
\subfloat[\centering \scriptsize STAGATE ]{%
    \includegraphics[width=0.22\textwidth]{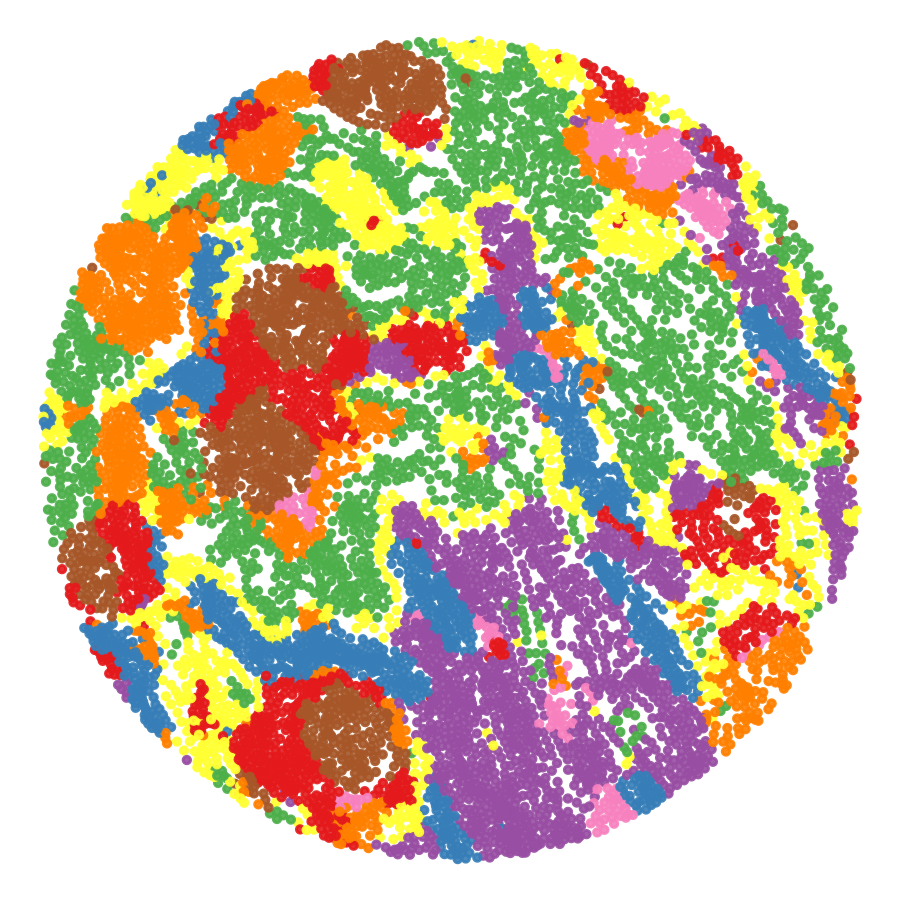}
  }
  \hfill
\subfloat[\centering \scriptsize GraphST ]{%
    \includegraphics[width=0.22\textwidth]{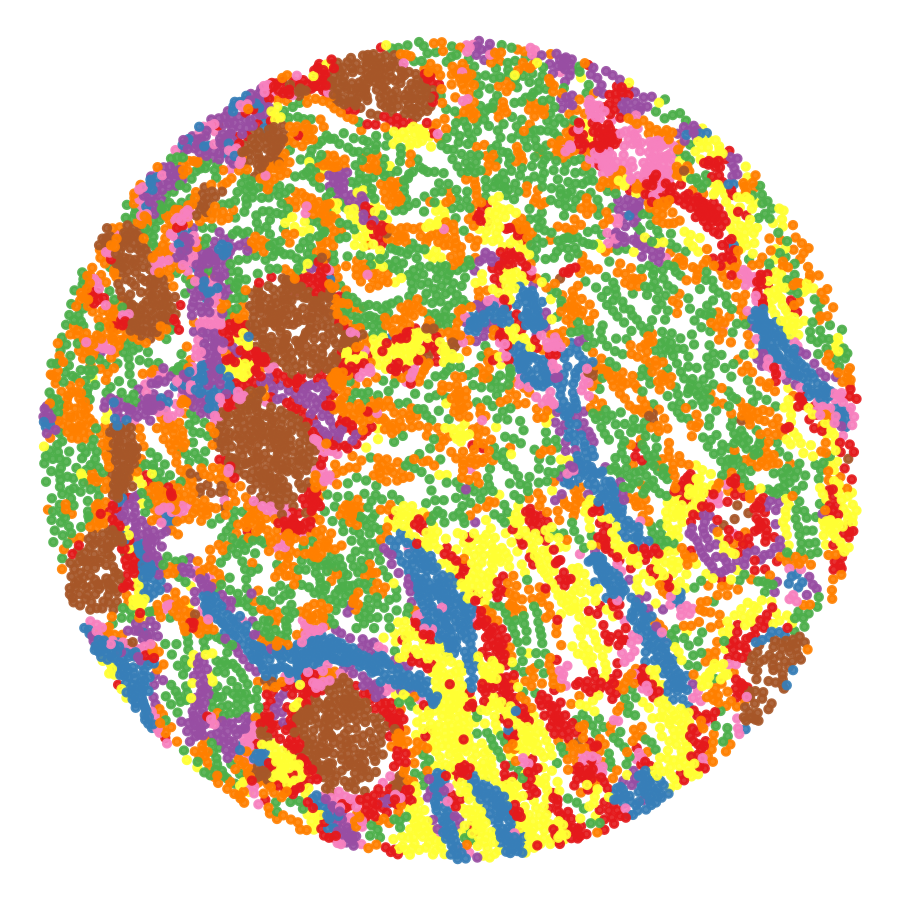}
  }
  \hfill
\subfloat[\centering \scriptsize PBSF ]{%
    \includegraphics[width=0.293\textwidth]{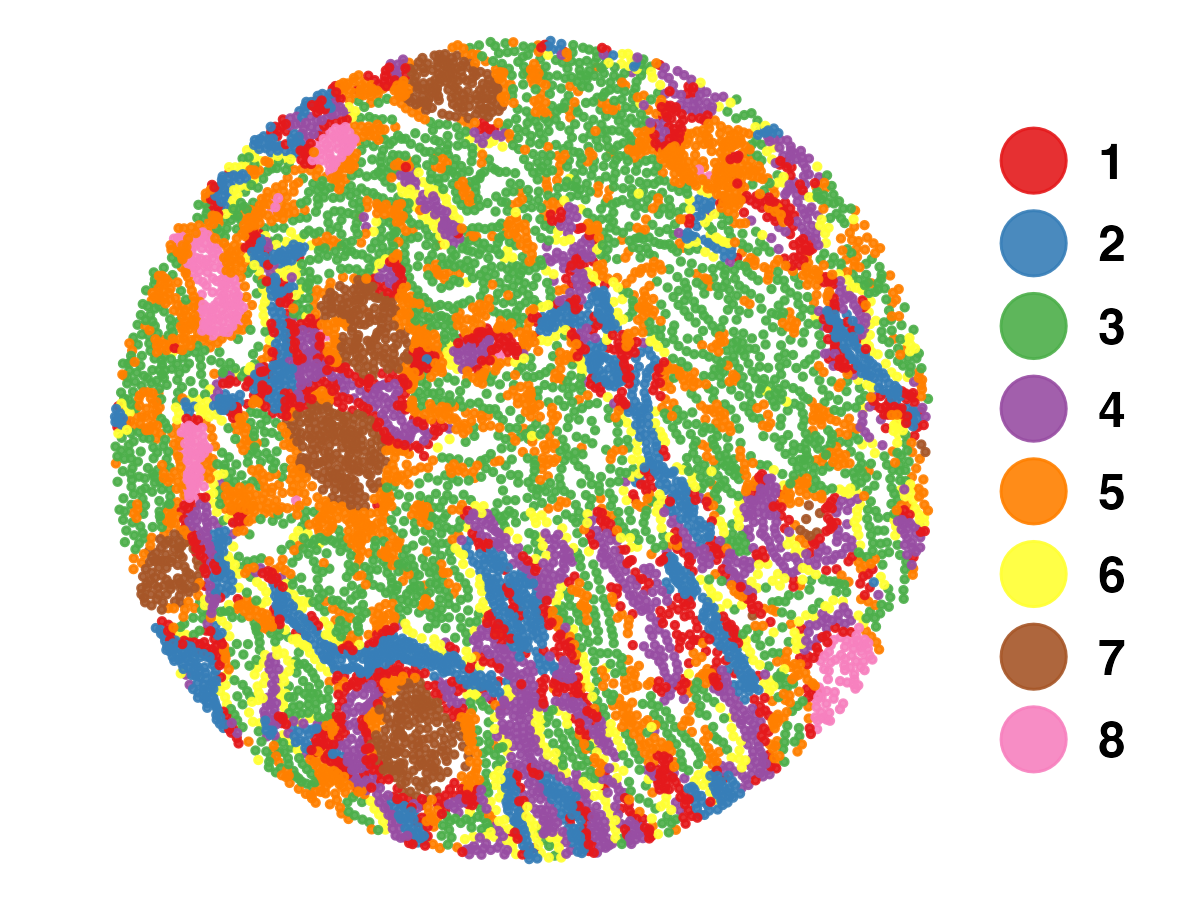}
  } \\
  \begin{minipage}[c]{0.46\textwidth}
    \centering
    \vspace*{\fill}
    \subfloat[\centering Spatially leading genes by cluster]{%
    \includegraphics[width=\textwidth]{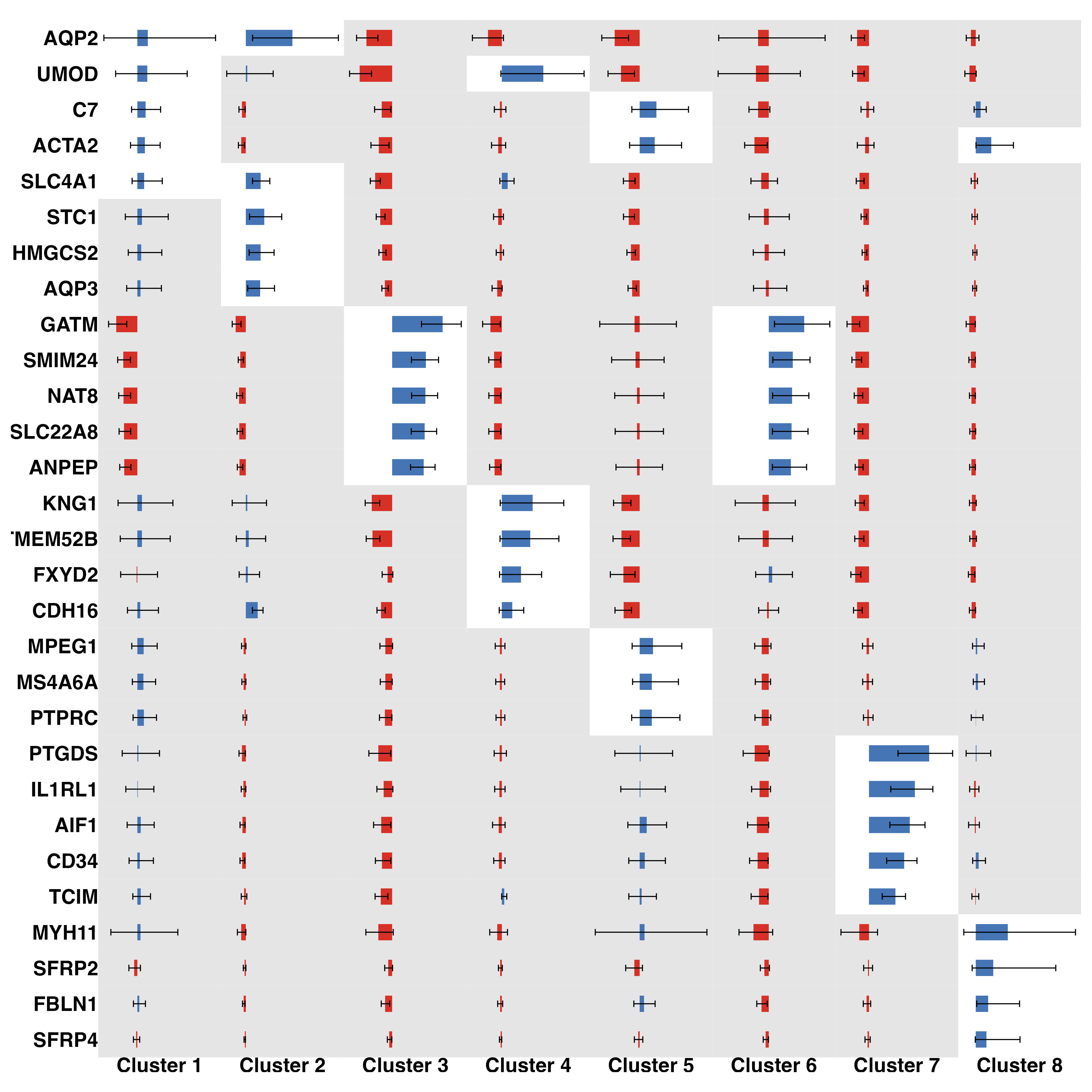}}  
    \vspace*{\fill}
  \end{minipage}
  \hfill
\begin{minipage}[c]{0.48\textwidth}
\vspace*{\fill}
    \subfloat[\centering \scriptsize AQP2]{%
    \includegraphics[width=0.33\textwidth]{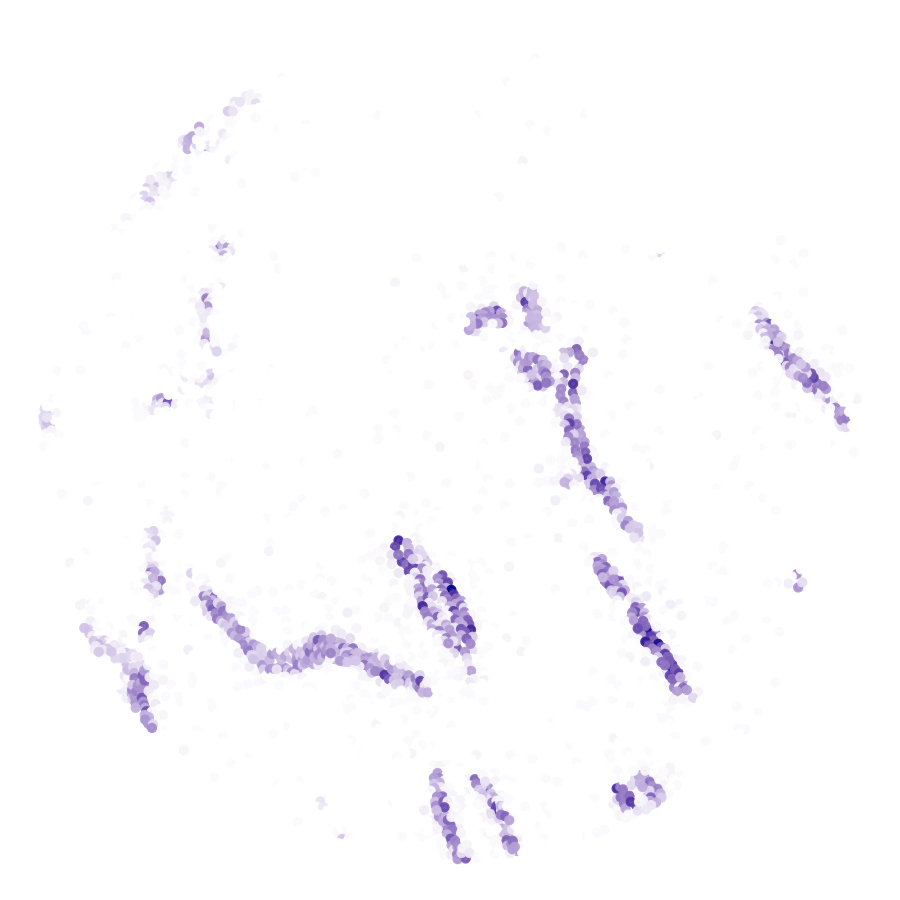}}
    \hfill
     \subfloat[\centering \scriptsize GATM]{%
    \includegraphics[width=0.33\textwidth]{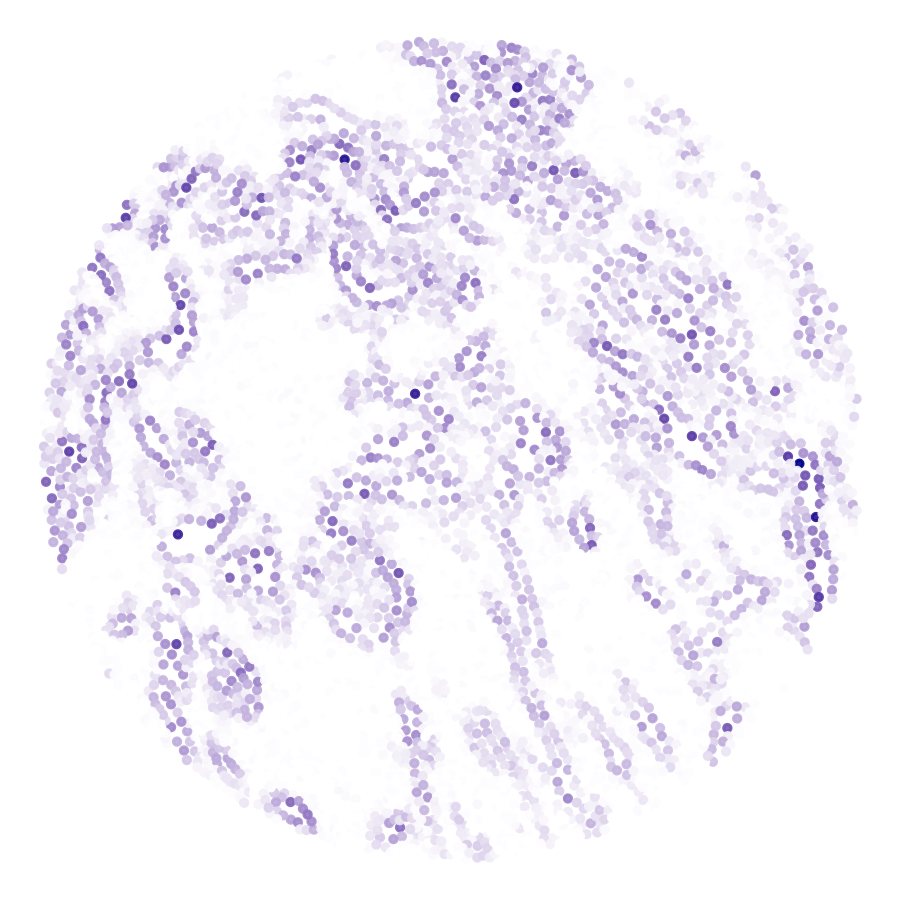}}
    \hfill
    \subfloat[\centering \scriptsize UMOD]{%
    \includegraphics[width=0.33\textwidth]{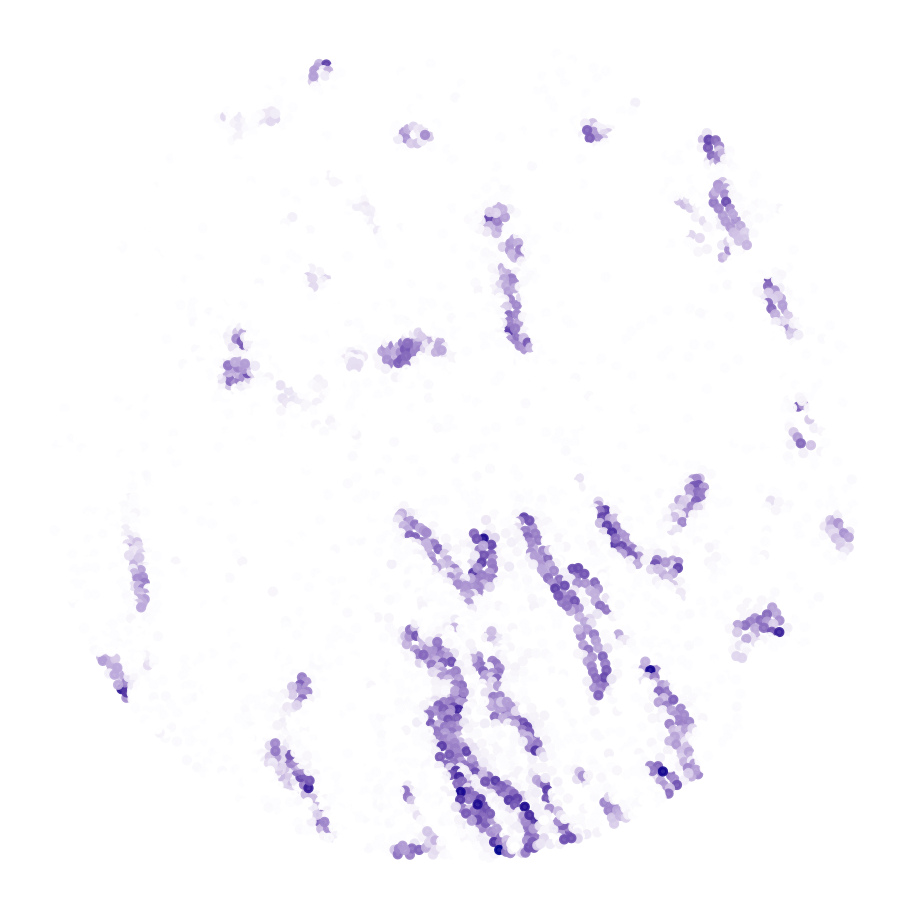}} \\
    \subfloat[\centering \scriptsize C7]{%
    \includegraphics[width=0.33\textwidth]{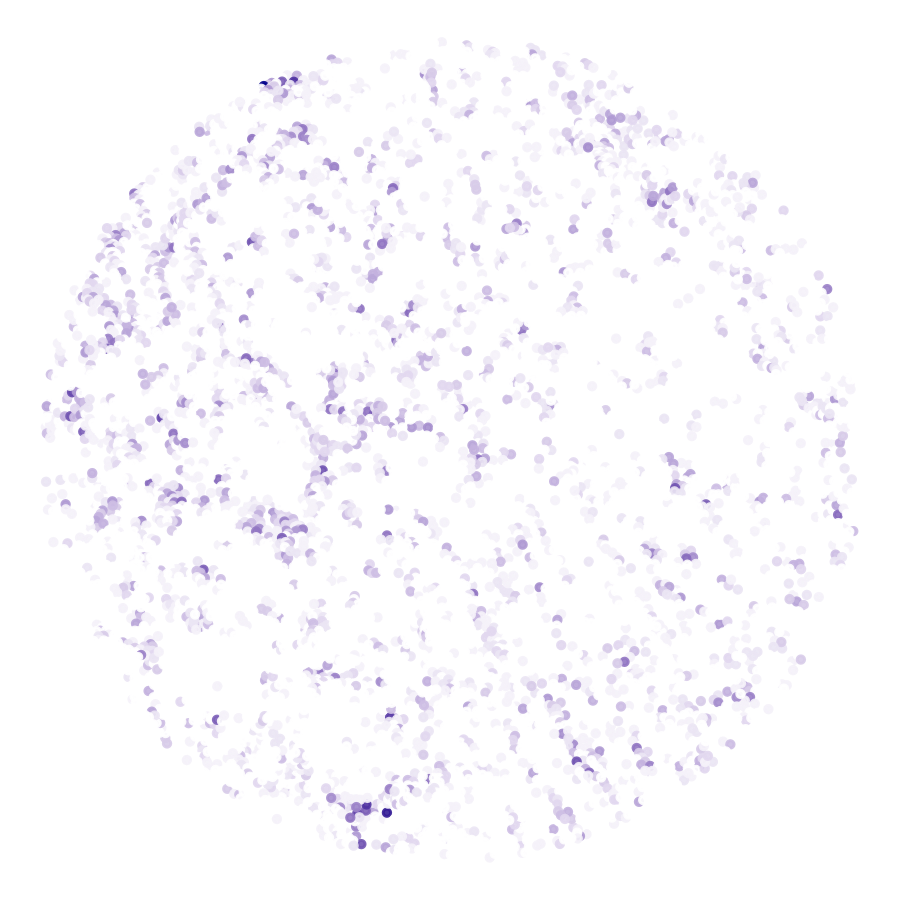}} 
    \hfill
    \subfloat[\centering \scriptsize PTGDS]{%
    \includegraphics[width=0.33\textwidth]{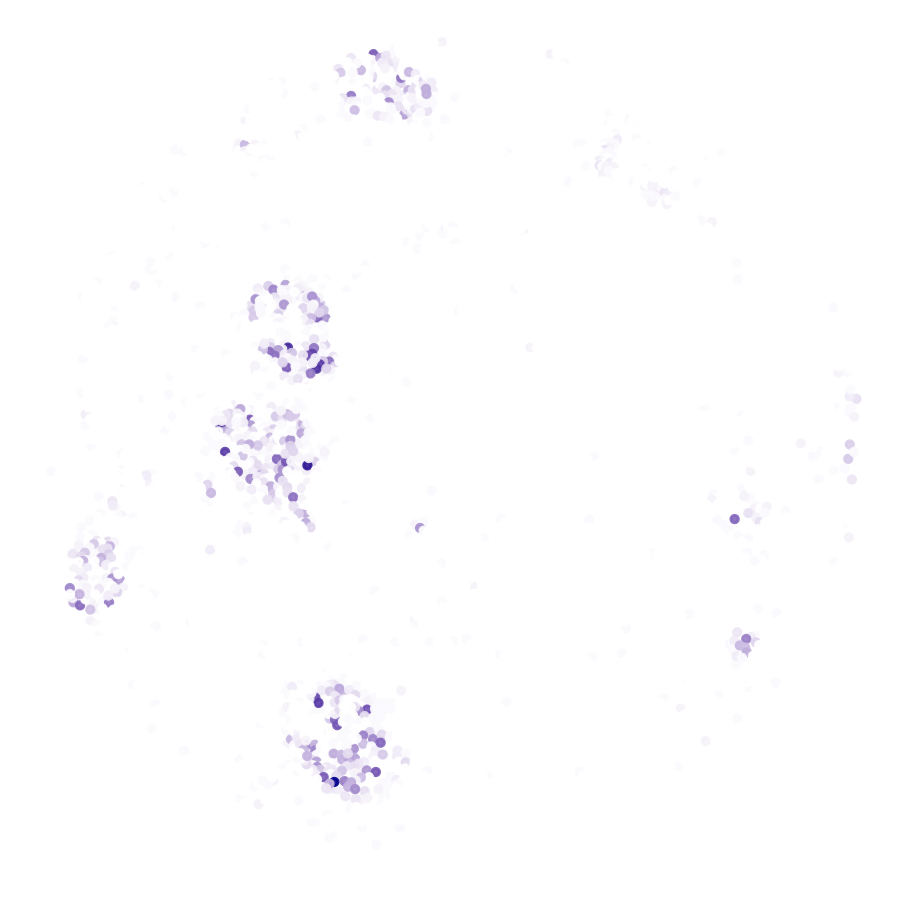}} 
    \hfill
    \subfloat[\centering \scriptsize MYH11]{%
    \includegraphics[width=0.33\textwidth]{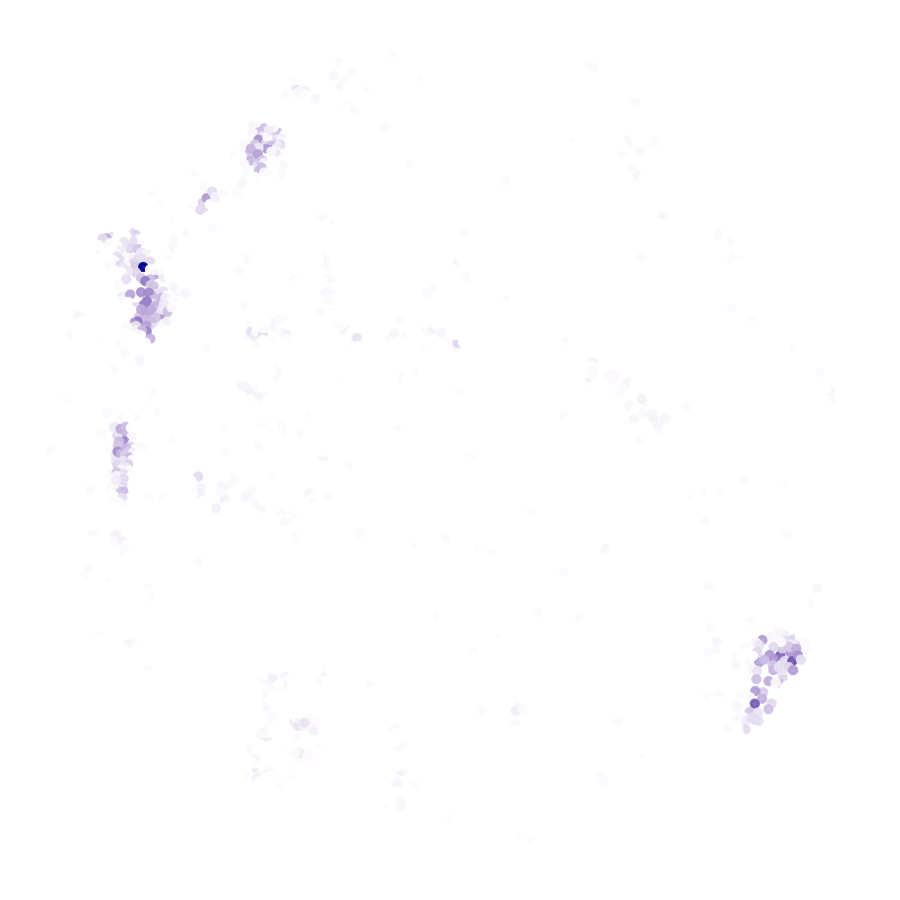}}  \\
\vspace*{\fill}
\end{minipage}\\
\subfloat[\centering \scriptsize \scriptsize $\fb_1$]{%
    \includegraphics[width=0.16\textwidth]{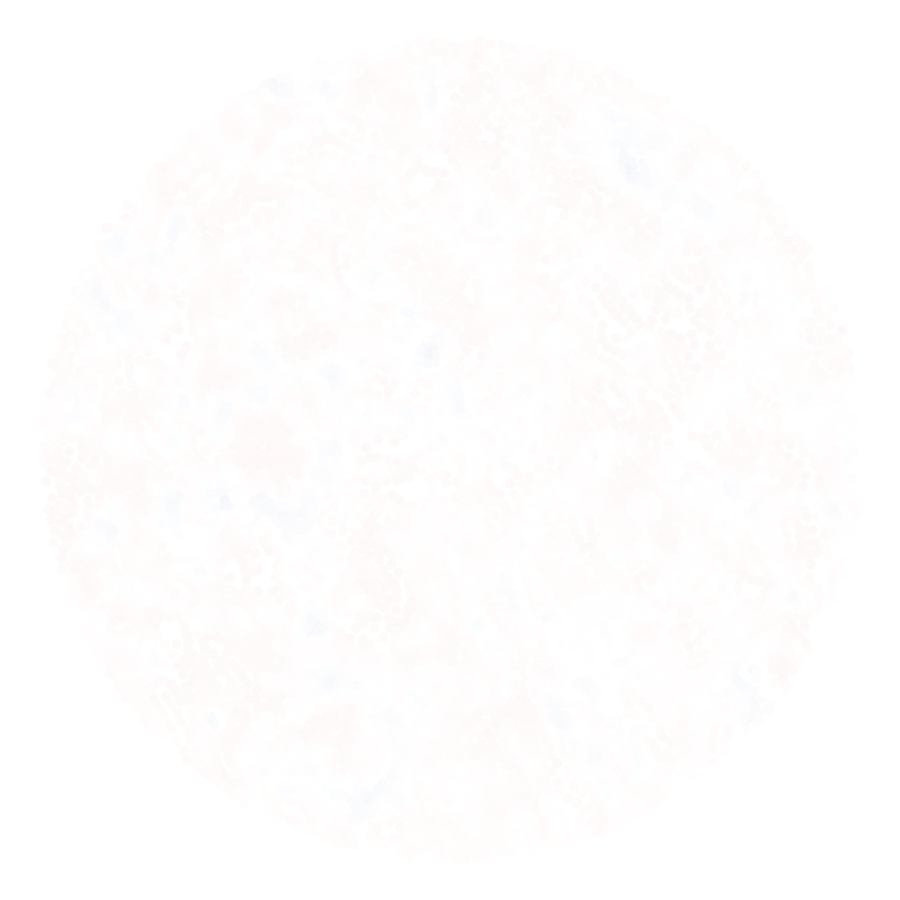}}
\hfill
\subfloat[\centering \scriptsize $\fb_2$]{%
    \includegraphics[width=0.16\textwidth]{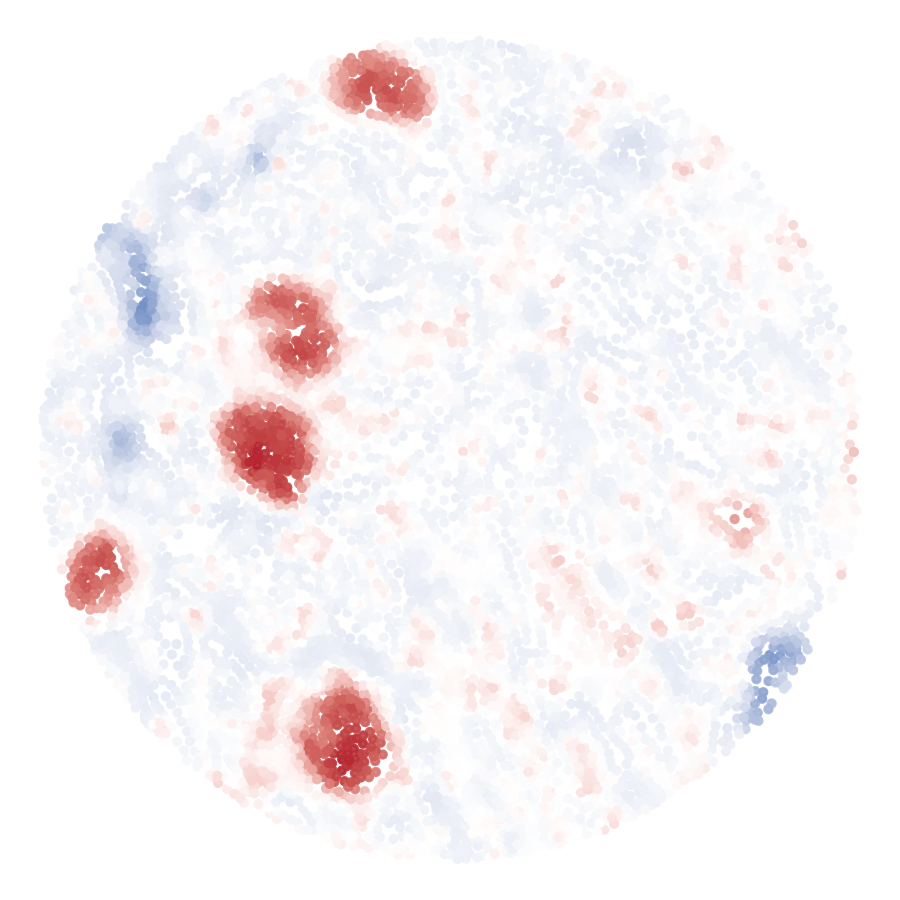}}
\hfill
\subfloat[\centering \scriptsize $\fb_3$]{%
    \includegraphics[width=0.16\textwidth]{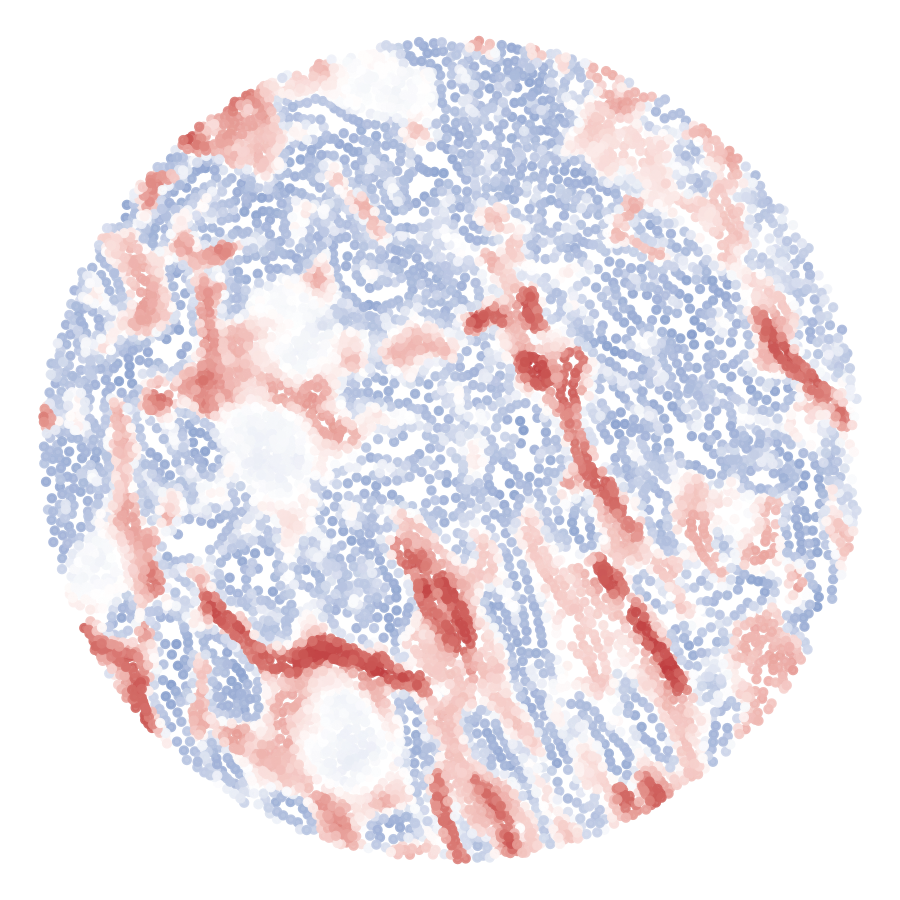}}
\hfill
\subfloat[\centering \scriptsize $\fb_4$]{%
    \includegraphics[width=0.16\textwidth]{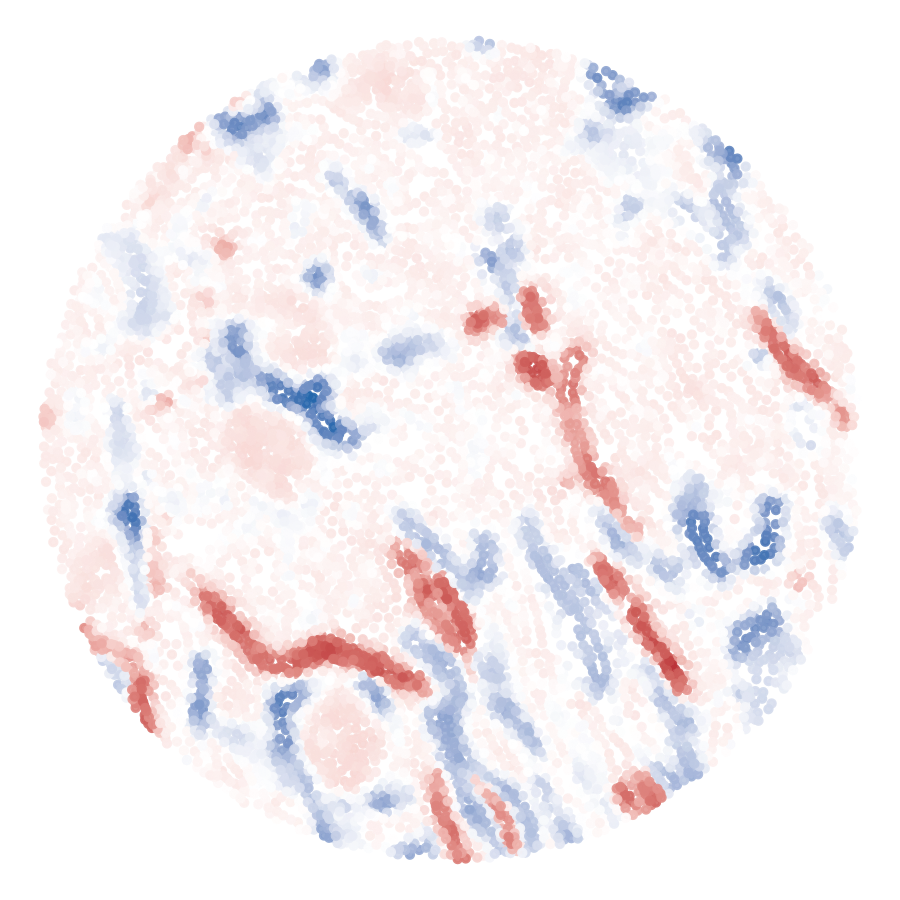}}
\hfill
\subfloat[\centering \scriptsize $\fb_5$]{%
    \includegraphics[width=0.16\textwidth]{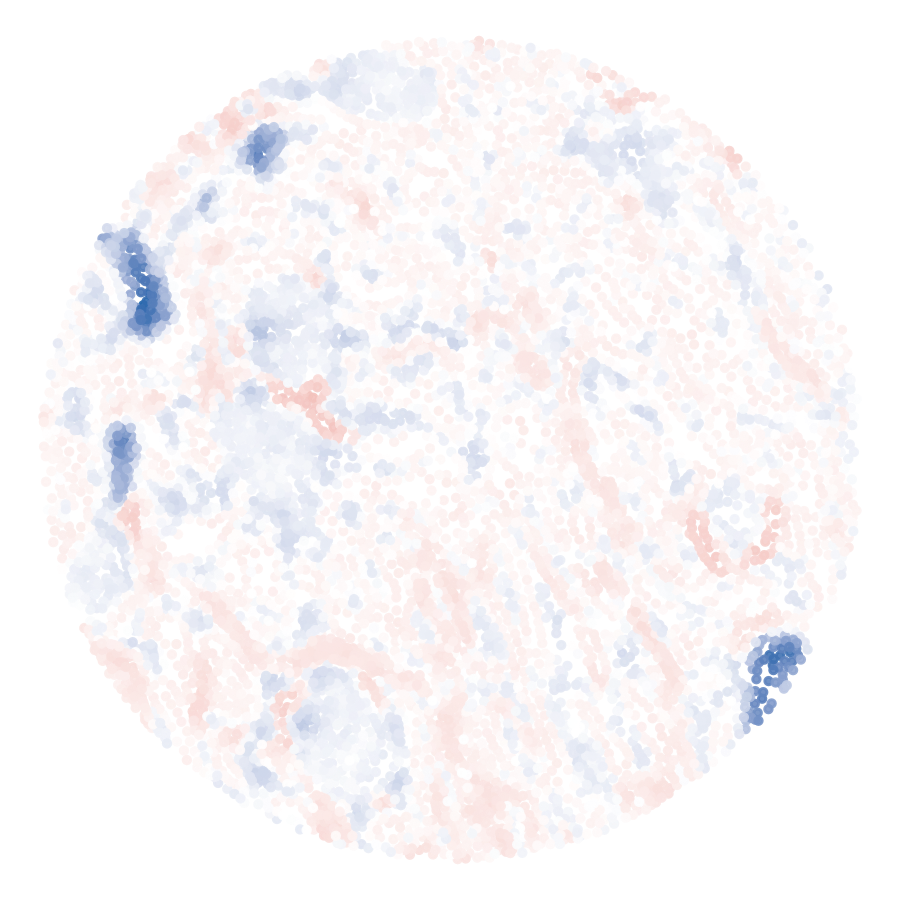}}
\hfill
\subfloat[\centering \scriptsize $\fb_6$]{%
    \includegraphics[width=0.16\textwidth]{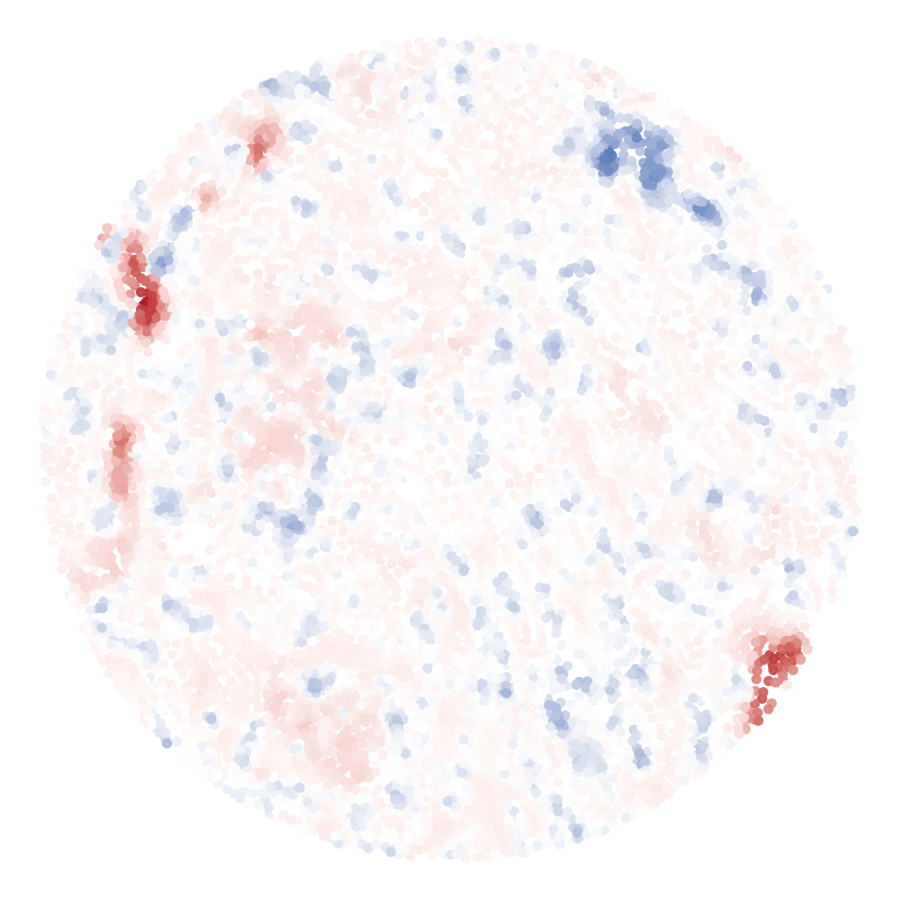}}
  \caption{\footnotesize Spatial Transcriptomics of a Human Kidney Section: Clustering, Marker Genes, and Latent Factor Patterns (a) H\&E-stained image of a Xenium-captured kidney section. (b) Unsupervised clustering on the whole tissue using posterior means of PBSF spatial factors; region of interest highlighted by a black circle.
  (c–f) Clustering within the region of interest via PCA (c), STAGATE (d), GraphST (e), and PBSF (f), using \texttt{mclust} with 8 clusters. (g) Top cluster-specific spatially variable genes identified by PBSF, based on posterior means of $\bF\bLambda$; Each bar plot shows the distribution (mean and 90\% CI) of spatial effects (beyond intercept and noise) across all cells within the cluster. (h–m) Spatial expression patterns of representative marker genes (AQP2, GATM, UMOD, C7, PTGDS, MYH11) selected based on their prominence in panel (g).(n–s) Spatial pattern of six latent spatial factors ($\fb_1$–$\fb_6$) estimated by PBSF in the selected region.}
  \label{fig:jasa_fivepanel}
\end{figure}

Based on the cluster-specific gene signatures summarised in Figure~\ref{fig:jasa_fivepanel}~(g) and the spatial expression patterns in Figure~\ref{fig:jasa_fivepanel}~(h-m), potential biological identities for several clusters can be inferred.  Cluster 2 is characterized by high AQP2 expression (Fig~\ref{fig:jasa_fivepanel}(h)), suggesting collecting ducts. Cluster 3 shows elevated GATM expression (Fig~\ref{fig:jasa_fivepanel}(i)), indicative of proximal tubule regions. Cluster 4 exhibits strong UMOD expression (Fig~\ref{fig:jasa_fivepanel}(j)), characteristic of the thick ascending limb. Cluster 7 displays high PTGDS expression (Fig~\ref{fig:jasa_fivepanel}(l)) and morphology consistent with glomeruli. Other clusters (1, 5, 6, 8) appear less distinct, either lacking a single dominant marker gene or exhibiting profiles similar to adjacent clusters (Fig~\ref{fig:jasa_fivepanel}(g)). These often correspond to boundary regions between major anatomical structures. Finally, Figure~\ref{fig:jasa_fivepanel}~(n-s) visualizes the spatial patterns of the six estimated latent factors $(\fb_1, \ldots, \fb_6)$ in the subregion, each capturing distinct, smooth spatial gene expression structures delineating clustered tissue domains. 

This application highlights several key advantages of the proposed PBSF model. First, the explicit modelling of spatial dependence through latent factors allows PBSF to effectively capture and represent smooth spatial patterns inherent in the tissue architecture (Fig~\ref{fig:jasa_fivepanel}~(n-s)). This spatial smoothing contributes to denoising the expression data, facilitating the identification of biologically relevant domains. Second, the resulting low-dimensional embeddings and estimated spatial effects ($\bF\bLambda$) provide interpretable summaries of complex ST data. Compared to PCA, the PBSF embeddings lead to more spatially coherent clusters (Fig~\ref{fig:jasa_fivepanel}~(c)~vs~\ref{fig:jasa_fivepanel}~(f)). While graph-based methods like STAGATE and GraphST also produce structured clusters, PBSF offers the additional benefits of a generative model, including posterior inference and the direct estimation of spatially varying gene contributions (Fig~\ref{fig:jasa_fivepanel}~(g)). The ability to identify marker genes based on modelled spatial effects, rather than just overall expression levels, provides a powerful tool for interpreting the biological significance of the discovered spatial domains. Thirdly, its probabilistic nature provides a principled framework for uncertainty quantification. The model can provide full posterior distributions not only for the spatial effects of specific genes within identified clusters but also for factor loadings and other model parameters, allowing for the potential incorporation of prior information and offering richer insights than point estimates alone. Computationally, on a MacBook Pro (M2 Max chip, no NVIDIA GPU), STAGATE required 1 hour 35 minutes and GraphST 3 hours 36 minutes. The PBSF MCMC (1000 iterations) completed in 5 hours 8 minutes. However, convergence diagnostics suggest the MCMC chain could potentially have been terminated earlier with satisfactory ESS. Although MCMC-based inference is more computationally intensive, the runtime remains manageable for datasets of this scale. The substantial benefits in model interpretability, principled uncertainty quantification, and direct spatial dependency modelling justify the computational investment.

\section{Conclusion and Future work}\label{sec: summary}
This paper proposed the PBSF models along with the ProjMC\textsuperscript{2} sampling algorithm. Unlike traditional approaches relying on explicit posterior distributions, ProjMC\textsuperscript{2} defines the target distribution implicitly via conditional distributions and a projection step, ensuring convergence through an existence proof. Simulations confirm enhanced convergence, mixing, and identifiability of latent factors and loadings. The inherent QR decomposition naturally orders latent factors without constraining likelihood.
PBSF models readily extend to multi-slice ST data, combining outcomes and design matrices from multiple slices. Incorporating slice-specific indicators allows effective estimation and correction of inter-slice variability, improving inference robustness.
Future extensions include integrating sparsity-inducing priors (e.g., spike-and-slab, Horseshoe) for gene selection, and adapting the framework for spatio-temporal and environmental exposure analyses. While the domain of low-dimensional embedding and pattern recognition is increasingly dominated by auto-encoder-based 'black-box' algorithms, a compelling case remains for principled, regression-based modelling. The pursuit of understanding and elucidating underlying mechanisms is a cornerstone of scientific inquiry, enabling the translation of knowledge into actionable insights. This work underscores the continued relevance and utility of interpretable statistical models in this endeavor.

\section{Acknowledgments}
This work was supported by the National Institutes of Health (NIH) under Grant P20HL176204, P30ES007048, and R01ES031590. Special thanks to Kelly N Street, Jesse A Goodrich, and Jonathan Nelson for their fruitful discussions on applications.

%\section*{Supporting information}\label{sec: SM}
%The spatial transcriptomic data analyzed in Section~\ref{sec: real_data_analy}, and the code implementing our models are available are available at \url{https://github.com/LuZhangstat/ST_SpatialFactor}.

\appendix

\bibliographystyle{ba}  
\bibliography{lubib}

\section{Proofs for Lemma~\ref{thm:Projected-MCMC-kernel}}\label{supp:kernal_proof}

\begin{proof}
In view of step (i) and (ii), we obtain that for any Borel sets $A_1 \in \mathcal{B}
  \bigl( \Omega^g \bigr)$, $A_2 \in \mathcal{B}
  \bigl( \calH^K \bigr)$, $A_3 \in \mathcal{B}
  \bigl( \mathbb{R}^{(p+K)\times q} \bigr)$, $A_4 \in\mathcal{B}
  \bigl( \mathbb{S}_{+}^{q} \bigr)$, when $A = A_1 \times A_2 \times A_3 \times A_4$, the conditional probability of transiting from $\theta_1$ to $A$ is
  \begin{align*}
K(\theta_1,\;A)
  \;=&\;
    \int_{A_3\times A_4}\Bigl[
      \int_{g^{-1}(A_1)} \Bigl\{\int_{A_2} p\bigl(\bpsi \,\mid\, \bF, \bgamma,\bSigma, \bY\bigr)  d\bpsi \Bigr\}
       p\bigl(\bF \,\mid\, \bgamma,\bSigma, \bpsi_1, \bY\bigr)\,
      d\bF
    \Bigr] \\
    &\;p\bigl(\bgamma,\bSigma \,\mid\, \tilde{\bF}_1,\bY\bigr)
  \;\,d\bgamma\,d\bSigma\;.
  \end{align*}
Further, by taking $A$ to be the whole space, $\int_{g^{-1}\{\Omega^g\}} \Bigl\{\int_{\calH^K} p\bigl(\bpsi \,\mid\, \bF, \bgamma,\bSigma, \bY\bigr)  d\bpsi \Bigr\} 
    p\bigl(\bF \,\mid\, \bgamma, \bSigma, \bpsi_1, \bY\bigr)
  \,d\bF  = \int_{\mathbb{R}^{n \times K}}p\bigl(\bF \,\mid\, \bgamma, \bSigma, \bpsi_1 \bY\bigr) d\bF = 1$. Hence, $K(\theta_1, \cdot)$ is a probability measure for any $\theta_1$. Meanwhile, for any $A \in \mathcal{B}(\Theta)$,
$K(\cdot, A)$ is continuous, and thus measurable. Through the definition 4.2.1 of transition kernel in \cite{robert1999monte}, \eqref{eq: transit_kernel} is a valid transition kernel.
\end{proof}

\section{Proofs for Theorem~\ref{thm:converge}}\label{appendix:proof_holder}

\begin{proof}
By Lemma~\ref{thm:Harris_positive}, the Markov chain \((\theta_l)\) is Harris positive. By Lemma~\ref{lem:aperiodic}, \((\theta_l)\) is aperiodic. Then, it follows immediately from \cite[Theorem~4.6.5]{robert1999monte} that Theorem~\ref{thm:converge} holds. %Hence, once irreducibility and aperiodicity are established, convergence to the unique stationary distribution is guaranteed.
\end{proof}

\begin{lemma}[Strong \(\boldsymbol{\Psi}\)-irreducibility]\label{lem:strong_psi_irred}
Let \((\theta_\ell)\) be the Markov chain on \(\Theta\) with transition kernel \(K\) given in \eqref{eq: transit_kernel}, and let \(\boldsymbol{\Psi}\) be the %\(\sigma\)-finite 
measure on \(\Theta\) defined in Section~\ref{subsec:BSF_PMCMC}. %Suppose that the densities \eqref{eq: SLMC_MNIW} and \eqref{eq: SLMC_F_cond_post} are strictly positive over their respective supports. Then, 
For every measurable set \(A\subseteq \Theta\) with \(\boldsymbol{\Psi}(A)>0\), we have
\[
  K(\theta, A) \;>\; 0
  \quad
  \text{for all } \theta \in \Theta.
\]
Hence, \((\theta_\ell)\) is strongly \(\boldsymbol{\Psi}\)-irreducible.
\end{lemma}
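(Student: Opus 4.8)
The plan is to show that $K(\theta,A)$ is a strictly positive double integral whenever $\Psi(A)>0$, by exploiting that both conditional densities appearing in the kernel \eqref{eq: transit_kernel} are strictly positive on the whole of their respective state spaces. First I would rewrite the kernel in a form convenient for general (not just rectangular) measurable sets. Starting from the two-step description (i)--(ii) and the change of variables $\tilde{\bF}_2 = g(\bF)$, one checks that for any $\theta=(\tilde{\bF},\bgamma_0,\bSigma_0)\in\Theta$ and any Borel $A\subseteq\Theta$,
\[
  K(\theta,A)
  = \int \Bigl(\int \mathbf{1}_A\bigl(g(\bF),\bgamma,\bSigma\bigr)\,
      p\bigl(\bF\mid\bgamma,\bSigma,\bY\bigr)\,d\bF\Bigr)\,
      p\bigl(\bgamma,\bSigma\mid\tilde{\bF},\bY\bigr)\,d\bgamma\,d\bSigma ,
\]
since both sides are measures in $A$ that agree on product sets $A_1\times A_2\times A_3$ by the computation in the proof of Lemma~\ref{thm:Projected-MCMC-kernel}, and products form a generating $\pi$-system; here one uses that $g$ is Borel measurable so the integrand is jointly measurable.

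Next I would unpack the hypothesis $\Psi(A)>0$. Because $\Psi = \varphi\times\mathcal{L}^{(p+K)\times q}\times\nu_{\mathbb{S}_{++}^q}$ with $\varphi = \mathcal{L}^{n\times K}\circ g^{-1}$, Tonelli's theorem produces a set $B\subseteq\mathbb{R}^{(p+K)\times q}\times\mathbb{S}_{++}^q$ of positive $\mathcal{L}^{(p+K)\times q}\times\nu_{\mathbb{S}_{++}^q}$-measure such that for every $(\bgamma,\bSigma)\in B$ the section $A_{(\bgamma,\bSigma)}:=\{\tilde{\bF}:(\tilde{\bF},\bgamma,\bSigma)\in A\}$ has $\varphi(A_{(\bgamma,\bSigma)})>0$, i.e.\ $\mathcal{L}^{n\times K}\bigl(g^{-1}(A_{(\bgamma,\bSigma)})\bigr)>0$. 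I would then invoke strict positivity of the two densities. The conditional law of $\bF$ in \eqref{eq: SLMC_F_cond_post} is a nondegenerate Gaussian on $\mathbb{R}^{n\times K}$ — its covariance $(\tilde{\bX}^{\T}\tilde{\bX})^{-1}$ is positive definite because the block $\oplus_{k=1}^K \bL_k^{-\T}$ already makes $\tilde{\bX}$ full column rank — so its density is strictly positive everywhere; hence for each $(\bgamma,\bSigma)\in B$ the inner integral is $\int_{g^{-1}(A_{(\bgamma,\bSigma)})} p(\bF\mid\bgamma,\bSigma,\bY)\,d\bF>0$, being the integral of a strictly positive function over a set of positive Lebesgue measure. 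Similarly, the MNIW density in \eqref{eq: SLMC_MNIW} factors into a matrix-normal density (strictly positive on $\mathbb{R}^{(p+K)\times q}$) times an inverse-Wishart density with scale $\bPsi^\ast=\bPsi+\bS^\ast\succ 0$ and degrees of freedom $\nu^\ast=\nu+n$ (strictly positive on $\mathbb{S}_{++}^q$); one notes that the augmented system \eqref{eq: augment_linear_LMC} keeps $\bX^\ast$ full column rank for \emph{any} $\bF$, so these parameters are well defined and $p(\bgamma,\bSigma\mid\tilde{\bF},\bY)>0$ throughout $\mathbb{R}^{(p+K)\times q}\times\mathbb{S}_{++}^q$. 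Combining, $K(\theta,A)\ge\int_B(\text{strictly positive})\cdot(\text{strictly positive})\,d\bgamma\,d\bSigma>0$ because $B$ has positive measure; as $\theta$ was arbitrary, $(\theta_\ell)$ is strongly $\Psi$-irreducible.

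The hard part will be the measure-theoretic bookkeeping around the pushforward $\varphi$: I must be careful that $g$ is Borel measurable (so $g^{-1}(A_{(\bgamma,\bSigma)})$ is a Borel subset of $\mathbb{R}^{n\times K}$) and that the defining identity $\varphi(A_{(\bgamma,\bSigma)})=\mathcal{L}^{n\times K}\bigl(g^{-1}(A_{(\bgamma,\bSigma)})\bigr)$ is exactly what transfers positivity of $\varphi$-measure on the projected space $\Omega^g$ into positivity of Lebesgue measure on $\mathbb{R}^{n\times K}$, the arena in which the strict positivity of the Gaussian density \eqref{eq: SLMC_F_cond_post} can be applied. Everything else — strict positivity of the matrix-normal and inverse-Wishart densities, the full-rank checks on $\tilde{\bX}$ and $\bX^\ast$, and the Tonelli argument — is routine.
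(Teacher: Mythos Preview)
Your proposal is correct and follows essentially the same approach as the paper: both arguments rest on the strict positivity of the two conditional densities $p(\bF\mid\bgamma,\bSigma,\bY)$ and $p(\bgamma,\bSigma\mid\tilde{\bF},\bY)$ throughout their supports, so that integrating over any set of positive $\Psi$-measure yields a positive result. Your version is more rigorous---you explicitly verify the full-rank conditions, carry out the Tonelli/section argument, and handle the pushforward $\varphi$ carefully---whereas the paper simply asserts ``positive integrand'' and concludes.
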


%\noindent The proof of Lemma~\ref{lem:strong_psi_irred}, Lemma~\ref{lem:aperiodic} and Lemma~\ref{thm:Harris_positive} are deferred to Appendix~\ref{appendix:proof_holder}.

%\begin{comment}
\begin{proof}
By construction, for each \(\theta = \{\tilde{\bF}, \bpsi, \bgamma,\bSigma\}\in \Theta\), the kernel \(K(\theta,\cdot)\) assigns mass to any Borel set \(A\) according to
\begin{align*}
  K(\theta,A)
  \;=&\;
  \int_{A}
    \biggl\{
      \int_{g^{-1}(\tilde{\bF}')}
        p(\bpsi' \mid \bF, \bgamma',\bSigma', \bY) p(\bF \mid \bgamma',\bSigma', \bpsi, \bY)\, d\bF
    \biggr\}\,\\
    &\,p(\bgamma',\bSigma' \mid \tilde{\bF},\bY)
  \;d\tilde{\bF}'\, d\bpsi'\,d\bgamma'\,d\bSigma'.
\end{align*}
Since the densities \(p(\bpsi' \mid \bF, \bgamma',\bSigma', \bY)\), \(p(\bF \mid \bgamma,\bSigma, \bpsi, \bY)\), and \(p(\bgamma,\bSigma \mid \bF,\bY)\) are strictly positive over their supports, for any set \(A\) with \(\boldsymbol{\Psi}(A)>0\), it follows that
\[
  K(\theta,A)
  \;=\;
  \int_{A}
    \text{(positive integrand)}
  \;>\; 0,
  \quad
  \forall\, \theta\in\Theta.
\]
By definition (see, e.g., \cite[Definition 4.3.1]{robert1999monte}), if \(K(\theta,A)>0\) for all \(\theta\) whenever \(\boldsymbol{\Psi}(A)>0\), the chain is strongly \(\boldsymbol{\Psi}\)-irreducible. 
\end{proof}
%\end{comment}

\begin{lemma}[Aperiodicity]\label{lem:aperiodic}
Under the same conditions as in Lemma~\ref{lem:strong_psi_irred}, the Markov chain \((\theta_\ell)\) is aperiodic.
\end{lemma}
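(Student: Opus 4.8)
The plan is to argue by contradiction using the cyclic (period) decomposition available for any $\Psi$-irreducible chain. By Lemma~\ref{lem:strong_psi_irred}, $(\theta_\ell)$ is strongly $\Psi$-irreducible, hence in particular $\Psi$-irreducible, so it possesses a well-defined period $d\ge 1$ (see \cite{robert1999monte}); aperiodicity is precisely the assertion $d=1$. First I would recall that, when $d\ge 2$, $\Psi$-irreducibility furnishes a decomposition of the state space (up to a $\Psi$-null set) into disjoint measurable cyclic classes $\Theta = D_0 \sqcup D_1 \sqcup \cdots \sqcup D_{d-1}$ with $\Psi(D_i)>0$ for every $i$ and $K(\theta, D_{(i+1)\bmod d}) = 1$ for every $\theta \in D_i$.

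Next I would exploit the strength of Lemma~\ref{lem:strong_psi_irred}, namely that $K(\theta,A)>0$ for \emph{all} $\theta\in\Theta$ whenever $\Psi(A)>0$. Fix any $\theta\in D_0$. Since $D_0$ and $D_1$ are disjoint, $D_0 \subseteq \Theta\setminus D_1$, so $K(\theta, D_0) \le 1 - K(\theta, D_1) = 0$. On the other hand $\Psi(D_0)>0$, and Lemma~\ref{lem:strong_psi_irred} then gives $K(\theta, D_0)>0$. This contradiction forces $d=1$, i.e., $(\theta_\ell)$ is aperiodic.

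I would also note an equivalent density-level phrasing that may read more transparently in context: as already used in the proofs of Lemma~\ref{thm:Projected-MCMC-kernel} and Lemma~\ref{lem:strong_psi_irred}, the kernel $K(\theta,\cdot)$ admits a strictly positive density $k(\theta,\cdot)$ with respect to $\Psi$, because the conditional posterior densities in \eqref{eq: SLMC_MNIW} and \eqref{eq: SLMC_F_cond_post} are strictly positive over their supports; hence $K(\theta,D_0) = \int_{D_0} k(\theta,\theta')\,\Psi(d\theta')$ cannot vanish when $\Psi(D_0)>0$, which again rules out $d\ge 2$. Alternatively, one could instead exhibit a $\nu_1$-small set $C$ with $\nu_1(C)>0$ and invoke the strong-aperiodicity criterion.

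The routine parts (disjointness of the cyclic classes and their defining transition identity, or the density representation of $K$) follow immediately from what has already been established. The one point that needs care is the correct invocation of the cyclic-decomposition theorem for $\Psi$-irreducible chains, in particular the fact that each class $D_i$ has strictly positive $\Psi$-measure — this is exactly where the strong $\Psi$-irreducibility of Lemma~\ref{lem:strong_psi_irred} is essential. If one prefers the small-set route, the main obstacle shifts to establishing a uniform-in-$\theta$ lower bound for $k(\theta,\cdot)$ on a suitable rectangle, which requires a mild continuity/compactness argument on the kernel density.
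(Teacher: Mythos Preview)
Your proposal is correct and follows essentially the same approach as the paper: both argue by contradiction via the cyclic decomposition of a $\Psi$-irreducible chain (citing \cite{meyn2012markov}), then derive a contradiction from the strong $\Psi$-irreducibility of Lemma~\ref{lem:strong_psi_irred}. The only cosmetic difference is that the paper applies the positivity of $K(\theta,\cdot)$ to the complement $\Theta\setminus D_{(i+1)\bmod d}$, whereas you apply it to the class $D_0$ itself; either choice yields the same contradiction.
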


%\begin{comment}
\begin{proof}
Based on \citet[][Thm 5.4.4]{meyn2012markov}, an irreducible Markov chain on a general state space \(\Theta\) is said to be 
\emph{$d$-cycle with \(d>1\)} if there exists \(d\) disjoint Borel sets \(\{D_0, D_1, \ldots, D_{d-1}\}\) such that 
\begin{enumerate}
    \item[(i) ] $K(\theta, D_{(i+1)\bmod d}) \;=\; 1
   \quad
   \text{for all}~\theta \in D_i, 
   \quad i=0,\dots,d-1$.
\item[(ii) ] The set $N = [\cup_{i = 0}^{d - 1} D_i]^c$ is $\boldsymbol{\Psi}$-null. And $\boldsymbol{\Psi}(D_i) > 0$ for $i = 0, \ldots, d-1$
\end{enumerate}

Suppose, for contradiction, $d>1$. Pick any \(\theta \in D_i\). By definition,
\[
  K\bigl(\theta,\,D_{(i+1)\bmod d}\bigr) \;=\; 1,
  \quad
  \text{meaning}
  \quad
  K\bigl(\theta,\,\Theta \setminus D_{(i+1)\bmod d}\bigr) \;=\; 0.
\]
However, under our assumptions (strict positivity of densities and hence full support), 
for \emph{every} \(\theta\in \Theta\) and \emph{every} nonempty Borel set \(B\subseteq\Theta\), 
we have \(K(\theta,B)>0\).  In particular, if we choose 
\(B = \Theta \setminus D_{(i+1)\bmod d}\), then 
\(\theta\in D_i\) implies \(K(\theta, B) > 0\).  
This contradicts \(K(\theta,B)=0\).  By definition, this implies that \((\theta_\ell)\) has $d = 1$ and is thus \emph{aperiodic} (see, e.g. the definition in \citep[Chp 5.]{meyn2012markov}.
\end{proof}
%\end{comment}

According to Definition 4.4.8 in \cite{robert1999monte}, the chain $(\theta_l)$ is \textit{Harris recurrent} if there exists a measure $\boldsymbol{\Psi}$ such that 
    \begin{enumerate}
        \item[(i) ] $(\theta_l)$ is $\boldsymbol{\Psi}$-irreducible\;, and
        \item[(ii) ] for every set $A$ with $\boldsymbol{\Psi}(A) > 0$, $A$ is \textit{Harris recurrent}.
    \end{enumerate}
    A set $A$ is \textit{Harris recurrent} if $P_\theta(\eta_A = \infty) = 1$ for all $\theta \in A$, where $\eta_A$ is the number of passages of $(\theta_l)$ in $A$ and $P_\theta(\eta_A = \infty)$ is the probability of visiting $A$ an infinite number of times starting from the initial point $\theta_1 = \theta$. 
\begin{lemma}[Recurrence]\label{thm:recurr}
The chain $(\theta_l)$ is \textit{Harris recurrent}.
\end{lemma}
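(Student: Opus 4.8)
The plan is to verify the two defining properties of Harris recurrence recalled just above. Property~(i), $\Psi$-irreducibility, is immediate from Lemma~\ref{lem:strong_psi_irred}, since strong $\Psi$-irreducibility implies $\Psi$-irreducibility. The substance is property~(ii): for every measurable $A$ with $\Psi(A)>0$ and every $\theta\in A$ one needs $P_\theta(\eta_A=\infty)=1$. I would obtain this in two stages: (a) show $(\theta_\ell)$ is recurrent in the classical sense, and (b) upgrade recurrence to full Harris recurrence on all of $\Theta$. For stage~(b) the crucial point is that the kernel \eqref{eq: transit_kernel} carries an everywhere-defined density with respect to the base measure underlying $\Psi$, so $K(\theta,\cdot)\ll\Psi$ for \emph{every} $\theta\in\Theta$. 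Applying the standard recurrence dichotomy for $\Psi$-irreducible chains \citep[see, e.g.,][Ch.~9]{meyn2012markov}, $\Theta$ splits as $H\sqcup N$ with $N$ being $\Psi$-null, $H$ absorbing, and $P_\theta(\eta_A=\infty)=1$ for all $\theta\in H$ and all $\Psi$-positive $A$; since $K(\theta,N)=0$ for every $\theta$, the chain enters $H$ in one step almost surely from any start, so counting the at most one visit to $A\cap N$ at time $0$ together with the infinitely many returns after entry into $H$ gives $P_\theta(\eta_A=\infty)=1$ for every $\theta\in A$, which is exactly~(ii).

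For stage~(a) I would exploit that a single ProjMC$^2$ iteration sends the factor block into $\Omega^g$, a \emph{compact} subset of $\mathbb{R}^{n\times K}$ (a centred, $\sqrt n$-scaled slice of a Stiefel manifold), after which $(\bgamma,\bSigma)$ is refreshed from \eqref{eq: SLMC_MNIW} with hyperparameters depending continuously on $\tilde{\bF}_1\in\Omega^g$. Two routes are then available. The first is a Foster--Lyapunov drift: with $V(\bgamma,\bSigma)=\|\bgamma\|^2+\operatorname{tr}(\bSigma)+\operatorname{tr}(\bSigma^{-1})$, the one-step expectation $\int K(\theta,d\theta')\,V(\theta')$ is bounded uniformly in $\theta$ precisely because the conditioning value ranges over the compact set $\Omega^g$, so once a sublevel set $\{V\le b\}$ is shown to be a small set one gets recurrence (indeed more) from the usual drift criterion. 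The second route verifies that the $\tilde{\bF}$-marginal chain is Feller on the compact space $\Omega^g$, produces an invariant probability measure for it by a Krylov--Bogolyubov argument, lifts it through the conditional updates to an invariant probability $\pi$ on $\Theta$, and concludes positive recurrence—hence recurrence—from $\Psi$-irreducibility together with finiteness of $\pi$ \citep[e.g.,][Ch.~10]{meyn2012markov}.

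The main obstacle is the measure-theoretic bookkeeping in stage~(a). For the drift route, the delicate point is the small-set verification: the transition density is \emph{not} uniformly bounded below over all of $\Omega^g$, since $\Omega^g$ carries infinite mass under the pushforward $\varphi$, so one must choose the minorizing measure supported away from the degenerate part of $\Omega^g$ and possibly pass to a two-step kernel. For the invariant-measure route, one must check weak continuity of $\tilde{\bF}_1\mapsto K(\{\tilde{\bF}_1,\bgamma,\bSigma\},\cdot)$ and the continuity and strict positivity of the marginal density $\int_{g^{-1}(\tilde{\bF})}p(\bF\mid\bgamma,\bSigma,\bY)\,d\bF$ that results from integrating the Gaussian conditional \eqref{eq: SLMC_F_cond_post} over the QR-fibres $g^{-1}(\tilde{\bF})$; I would handle this by writing $g$ in local coordinates on the full-rank locus and using a co-area/disintegration formula, and I would also note that $\pi$ here is genuinely different from the pushforward of the BSF posterior because the MNIW prior is not invariant under the fibre action $\bF\mapsto\bF R+1_n\mu_f^\top$, $\bLambda\mapsto R^{-1}\bLambda$, so one cannot simply transport the original posterior.
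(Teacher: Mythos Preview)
Your two-stage plan is viable but substantially more elaborate than needed. The paper's route hinges on one structural observation you mention but do not exploit to its full strength: by the form of \eqref{eq: transit_kernel}, $K(\theta,A)$ depends on $\theta=(\tilde{\bF},\bgamma,\bSigma)$ \emph{only through} $\tilde{\bF}\in\Omega^g$. Since $\tilde{\bF}\mapsto K(\tilde{\bF},A)$ is continuous (noted in Lemma~\ref{thm:Projected-MCMC-kernel}) and strictly positive (Lemma~\ref{lem:strong_psi_irred}), and $\Omega^g$ is compact, one obtains $\rho:=\inf_{\theta}K(\theta,A)>0$ for any fixed $A$ with $\Psi(A)>0$. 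A one-line induction then gives $P_\theta(\theta_2\notin A,\ldots,\theta_h\notin A)\le(1-\rho)^{h-1}$, so $P_\theta(\tau_A<\infty)=1$ for \emph{every} $\theta\in\Theta$, and \cite[Prop.~4.4.9]{robert1999monte} converts this directly into $P_\theta(\eta_A=\infty)=1$. That dispatches both your stages~(a) and~(b) at once, with no Foster--Lyapunov function, no small-set identification, and no $H\sqcup N$ decomposition.

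Your concern that ``the transition density is not uniformly bounded below over all of $\Omega^g$, since $\Omega^g$ carries infinite mass under $\varphi$'' is correct as a statement about densities but a red herring for this argument: the paper never needs a pointwise density minorization, only the uniform bound on the \emph{probability} $K(\theta,A)$ for each fixed $A$, and that follows from continuity on a compact set. Your Route~2 (Feller marginal on compact $\Omega^g$, Krylov--Bogolyubov invariant measure, then positive recurrence) is correct in spirit and is essentially the reasoning the paper deploys for Lemma~\ref{thm:Harris_positive}---but the paper sequences that \emph{after} the present lemma, first establishing recurrence by the direct hitting-time bound above and only then constructing the invariant measure; inverting the order as you propose works but adds weak-Feller and co-area verifications that the direct argument sidesteps entirely.
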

\begin{proof}
    Suppose $A$ is a set with $\boldsymbol{\Psi}(A) > 0$. Through the definition of the transition kernel in Lemma~\ref{thm:Projected-MCMC-kernel}, we observe that $K(\theta_1, A)$ for $\theta_1 = \{\tilde{\bF}_1, \bpsi_1, \bgamma_1, \bSigma_1\}$ only depends on $\tilde{\bF}_1$ and $\bpsi_1$. Since $\tilde{\bF}_1$ and $\bpsi$ belong to compact sets $\Omega^g$ and $\calH^K$, respectively, the infimum $\inf_{\theta} \int_{A} K(\theta, y) dy = \rho > 0$. The probability of the chain $(\theta_l)$ not reaching set $A$ in $h$ iterations is 
    \begin{align*}
        &\quad P(\theta_2 \notin A, \ldots, \theta_h \notin A) = \int_{A^C} \cdots \int_{A^C} K(\theta_1, \theta_2) \cdots K(\theta_{h-1}, \theta_{h})d\theta_h \cdots d\theta_2\\
        &= \underbrace{\int_{A^C} \cdots \int_{A^C}}_{h-2} K(\theta_1, \theta_2) \cdots K(\theta_{h-2}, \theta_{h-1}) \left\{ \int_{A^C} K(\theta_{h-1}, \theta_{h}) d\theta_{h} \right\}d\theta_{h-1} \cdots d\theta_{2} \\
        &\leq  \underbrace{\int_{A^C} \cdots \int_{A^C}}_{h-2} K(\theta_1, \theta_2) \cdots K(\theta_{h-2}, \theta_{h-1}) \underbrace{\left\{ 1 - \inf_{\theta_{h-1}} \int_A K(\theta_{h-1}, \theta_{h}) d\theta_{h} \right\}}_{1-\rho}d\theta_{h-1} \cdots d\theta_{2} \\
        &\leq (1-\rho)^{h-1}\;.
    \end{align*}
    Define the stopping time at $A$ as $\tau_A = \inf\{l > 1; \theta_l \in A\}$, we have that 
    $$p_{\theta}(\tau_A < \infty) = 1 - \lim_{h \rightarrow \infty} \{p_{\theta}(\tau_A > h)\} = 1$$ for any initial point $\theta$. 
    By Proposition 4.4.9 in \cite{robert1999monte}, we can show that $P_\theta(\eta_A = \infty) = 1$ when $P_\theta(\tau_A < \infty)$ for every $\theta \in A$. Therefore, $A$ is Harris recurrent. With the irreducibility given in Lemma~\ref{lem:strong_psi_irred}, we complete the proof.
\end{proof}

\begin{lemma}\label{thm:Harris_positive}[Invariant Finite Measure]
There exists an invariant finite measure $\pi(\cdot)$ for the chain $(\theta_l)$, and, hence, $(\theta_l)$ is Harris positive.
\end{lemma}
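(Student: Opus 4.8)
The target is a \emph{finite} invariant measure. By Lemma~\ref{thm:recurr} the chain is already Harris recurrent, and a Harris recurrent chain always admits an invariant measure that is unique up to a multiplicative constant (see, e.g., \cite[Ch.~4]{robert1999monte} or \cite{meyn2012markov}); a recurrent chain is then positive precisely when this measure is finite. So the only thing that needs to be verified is finiteness. The plan is to show that the \emph{entire} state space $\Theta$ is a small set, from which positivity is immediate: either invoke the standard criterion that a $\Psi$-irreducible Harris recurrent chain whose whole state space is small is positive Harris, or note that $\Theta$ being $(1,\nu)$-small is exactly Doeblin's condition, so $\mu\mapsto\mu K$ is a strict total-variation contraction on $\mathcal{P}(\Theta)$ and its unique fixed point is the stationary probability distribution $\pi(\cdot)$ referenced in Theorem~\ref{thm:converge}.

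The key structural fact --- already used in the proof of Lemma~\ref{thm:recurr} --- is that by \eqref{eq: transit_kernel} the kernel $K(\theta_1,\cdot)$ depends on $\theta_1=\{\tilde{\bF}_1,\bgamma_1,\bSigma_1\}$ only through $\tilde{\bF}_1$, which ranges over $\Omega^g$, a compact set (a closed subset of the $\sqrt n$-scaled Stiefel manifold with the extra constraint that the columns are orthogonal to $1_n$). I would then argue that $(\bF,\bgamma,\bSigma)\mapsto p(\bgamma,\bSigma\mid\bF,\bY)$, the MNIW density in \eqref{eq: SLMC_MNIW}, is jointly continuous and strictly positive and does not degenerate as $\bF$ varies over $\Omega^g$: through the augmented system \eqref{eq: augment_linear_LMC} the parameters $(\bmu^\ast,\bV^\ast,\bPsi^\ast,\nu^\ast)$ depend continuously on $\bF$; $\bX^{\ast\top}\bX^{\ast}$ is invertible for every $\bF$ because the prior block $[\bL_{\bbeta}^{-1}\ \mathbf{0};\ \mathbf{0}\ \bL_{\bLambda}^{-1}]$ already has full column rank; and $\bPsi^\ast=\bPsi+\bS^\ast\succeq\bPsi\succ0$. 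Fixing any compact $C\subset\mathbb{R}^{(p+K)\times q}\times\mathbb{S}_{++}^q$ with nonempty interior, continuity and positivity on the compact set $\Omega^g\times C$ yield $\varepsilon:=\inf_{\tilde{\bF}\in\Omega^g,\,(\bgamma,\bSigma)\in C}p(\bgamma,\bSigma\mid\tilde{\bF},\bY)>0$. Substituting into \eqref{eq: transit_kernel} gives, for every $\theta_1\in\Theta$ and every Borel $A$,
\[
  K(\theta_1,A)\;\ge\;\varepsilon\int_{A\cap(\Omega^g\times C)}\Bigl[\int_{g^{-1}(\tilde{\bF}_2)}p(\bF\mid\bgamma_2,\bSigma_2,\bY)\,d\bF\Bigr]\,d\tilde{\bF}_2\,d\bgamma_2\,d\bSigma_2\;=:\;c\,\nu(A),
\]
where the bracketed fiber integral integrates to $1$ over $\tilde{\bF}_2\in\Omega^g$ (the computation in the proof of Lemma~\ref{thm:Projected-MCMC-kernel} with $A_1=\Omega^g$), so the right-hand measure has finite mass $c=\varepsilon\,\nu_C$ equal to $\varepsilon$ times the base measure of $C$, and $\nu$ is the induced probability measure. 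Hence $\Theta$ is a $(1,\nu)$-small set.

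Combining this with Lemma~\ref{lem:strong_psi_irred} and Lemma~\ref{thm:recurr} gives that the chain is positive Harris, so the essentially unique invariant measure is finite and, after normalization, equals $\pi(\cdot)$; this also supplies the invariant probability measure used in the proof of Theorem~\ref{thm:converge}. I expect the main obstacle to be the middle step: carefully checking that $\Omega^g$ is compact and, more delicately, that the MNIW conditional density does not degenerate along $\Omega^g$ --- i.e., that $\bX^{\ast\top}\bX^{\ast}$ stays uniformly bounded and uniformly invertible and $\bPsi^\ast$ uniformly positive-definite --- so that the infimum $\varepsilon$ is genuinely strictly positive. The remaining bookkeeping (that $\nu$ is a bona fide probability measure, and that "whole space small" implies positivity) is routine.
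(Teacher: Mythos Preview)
Your proposal is correct and takes a genuinely different route from the paper. The paper works with the \emph{marginal} chain $(\tilde{\bF}_l)$: since $K(\theta_1,\cdot)$ depends only on $\tilde{\bF}_1$, the $\tilde{\bF}$-coordinate is itself a Harris recurrent Markov chain on $\Omega^g$; general theory (\cite[Thm.~4.5.4]{robert1999monte}) then supplies a $\sigma$-finite invariant measure $\pi_{\tilde{\bF}}$, the paper argues finiteness from compactness of $\Omega^g$, and finally lifts to $\Theta$ via $\pi(B)=\int_{\Omega^g}K(\tilde{\bF},B)\,\pi_{\tilde{\bF}}(d\tilde{\bF})$, verifying invariance and finiteness of $\pi$ directly. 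You instead establish a one-step Doeblin minorization $K(\theta,\cdot)\ge c\,\nu(\cdot)$ uniform in $\theta$, which is strictly stronger---it delivers uniform ergodicity outright, not merely positivity---and bypasses the marginal-chain construction altogether. Both arguments rest on the same two structural facts (the kernel depends on $\theta$ only through $\tilde{\bF}$, and $\Omega^g$ is compact), but the paper exploits them to reduce to a compact state space while you exploit them to lower-bound the MNIW conditional density uniformly over $\Omega^g\times C$. Your route is arguably cleaner: the paper's step ``$\Omega^g$ compact $\Rightarrow$ any $\sigma$-finite measure on $\Omega^g$ is finite'' is not valid as a bare measure-theoretic statement and would need additional justification, whereas in your argument the finiteness of the minorizing measure is transparent.
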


%\begin{comment}
\begin{proof}
By Lemma~\ref{thm:recurr}, $(\theta_l) = (\tilde{\bF}_l, \bpsi_l, \bgamma_l, \bSigma_l)$ is a recurrent chain, and, therefore, $(\tilde{\bF}_l, \bpsi_l)$ is also a recurrent chain. By Thm 4.5.4 in \citet{robert1999monte}, there exists an invariant $\sigma$-finite measure $\pi_{\tilde{\bF}, \bpsi}(\cdot)$ (unique up to a multiplicative factor) for chain $(\tilde{\bF}_l, \bpsi_l)$ on space $\Omega^g \times \calH^K$. Further, since $\Omega^g \times \calH^K$ is compact, any $\sigma$-finite measure on $\Omega^g \times \calH^K$ must be finite. In particular, \(\pi_{\tilde{\bF}, \bpsi}(\Omega^g \times \calH^K)<\infty\). Now, for any $B \in \mathcal{B}(\Theta)$, given the fact that $K(\theta, B) = K\bigl( \{\tilde{\bF}, \bpsi\} , B\bigr)$, we can define a measure $\pi(\cdot)$ on $\Theta$ through
\begin{equation}\label{eq:pi_construct}
\pi(B) = \int_{\Omega^g \times \calH^K} K\bigl(\{\tilde{\bF}, \bpsi\}, B\bigr) \pi_{\tilde{\bF}, \bpsi}(d\tilde{\bF}d\bpsi)
\end{equation}
As shown in Lemma~\ref{thm:Projected-MCMC-kernel}, $K\bigl(\{\tilde{\bF}, \bpsi\}, \Theta\bigr) = 1$, and, hence, 
\[\pi(\Theta) = \int_{\Omega^g} K\bigl(\{\tilde{\bF}, \bpsi\}, \Theta\bigr) \pi_{\tilde{\bF}, \bpsi}(d\tilde{\bF}d\bpsi) = \pi_{\tilde{\bF}, \bpsi}(\Omega^g \times \calH^K) < \infty\;.\]
Moreover, since $\pi_{\tilde{\bF}, \bpsi}(\cdot)$ is the unique invariant measure for $(\tilde{\bF}_l, \bpsi_l)$, 
\begin{align*}
\pi_{\tilde{\bF}, \bpsi}(A) &=\int_{\Omega^g \times \bpsi} K\Bigl(\{\tilde{\bF}, \bpsi\},\{ \theta': (\tilde{\bF}', \bpsi')\in A\}\Bigr)\, \pi_{\tilde{\bF}, \bpsi}(d\tilde{\bF}d\bpsi) \\
&= \pi\bigl(\{(\tilde{\bF}, \bpsi, \bgamma, \bSigma)\in\Theta: (\tilde{\bF}, \bpsi)\in A\}\bigr)
\end{align*}
for any \(A \in \mathcal{B}(\Omega^g \times \calH^K)\). Therefore, \eqref{eq:pi_construct} can be written as 
$$
\pi(B) = \int_{\Theta} K(\theta, B) \pi(d\theta)
$$
Hence, $\pi(\cdot)$ is the invariant finite measure for the chain $(\theta_l)$, and, therefore, $(\theta_l)$ is Harris positive.
\end{proof}

\section{Prediction through PBSF models}\label{supp: predict}
PBSF models allow prediction of missing outcomes and outcomes at unobserved locations under an additional assumption. Let \(\mathcal{U}\) denote a set of unobserved locations. Extending the latent factor matrix \(\bF\) to include rows corresponding to \(\mathcal{U}\), the ProjMC\(^2\) algorithm can be directly adapted to generate posterior samples of latent factors at unobserved sites. Denote by \(\bY_u\) the missing or unobserved outcomes to be predicted, and assume that \(\bY_u\) is conditionally independent of the observed outcomes \(\bY\) given \((\bF, \bgamma, \bSigma)\). Under this assumption, the posterior predictive distribution of \(\bY_u\) is obtained by integrating model~\eqref{eq: spatial_factor} with respect to the posterior distribution of \((\bF, \bgamma, \bSigma)\). Accordingly, posterior predictive samples of \(\bY_u\) can be generated from~\eqref{eq: spatial_factor} for each draw \((\bgamma^{(l)}, \bSigma^{(l)}, \bF^{(l)})\). Nevertheless, as emphasized in the motivation for the PBSF model, prediction is more naturally carried out under the original BSF formulation, which affords greater flexibility and can yield improved predictive performance. For this reason, detailed implementation and illustration of prediction are omitted here for brevity.

\newpage

\section{Algorithm of NNGP based PBSF models}\label{SM: BSLMC_NNGP_alg}
%====================== DRIVER ======================%
\begin{algorithm}[H]
%\setstretch{1.0}
\caption{ProjMC$^2$ for NNGP-based PBSF models with diagonal $\bSigma$ (prefixed or updated $\bpsi$) }
\label{alg:projmc2_driver}
\begin{algorithmic}[1]

\State \textbf{Input:} Design matrix $\bX$, outcomes $\bY$, set of spots with at least one observation $\calS$, prior parameters $\bmu_{\bbeta},\bV_{\bbeta}, \bmu_{\bLambda},\bV_{\bLambda}$, IG hyperparameters $a,\{b_i\}_{i=1}^q$, neighbor size $m$, MCMC length $L$, warm-up $L_{\text{warm}}$, and either prefixed $\{\psi_k\}_{k=1}^K$ or bounds $\{[a_{\psi_k},b_{\psi_k}]\}_{k=1}^K$ for slice updates.

\State \textbf{initialisation}
\Statex \quad $(\bbeta^{(0)},\bLambda^{(0)},\bSigma^{(0)}) \gets \textsc{InitializeParams}(\bX,\bY, K)$ (and $\bpsi^{(0)}$ if update $\bpsi$)

\State \textbf{Precomputation \& Preallocation}
\Statex \quad  Calculate Cholesky decompositions $\bV_{\bLambda} = \bL_{\bLambda}\bL_{\bLambda}^\top$ and $\bV_{\bbeta} = \bL_{\bbeta}\bL_{\bbeta}^\top$ \Comment{{\scriptsize $\mathcal{O}(p^3 + K^3)$}}

\Statex \quad $(\{\bA_{\rho_k}\}_{k = 1}^K,\{\bD_{\rho_k}\}_{k = 1}^K) \gets \textsc{PrecomputeNNGP}(\calS,\bpsi,m)$  (if update $\bpsi$ use $\bpsi^{(0)}$)\Comment{{\scriptsize $\mathcal{O}(n\log n)+\mathcal{O}(Knm^3)$}}

\Statex \quad \textsc{PreallocateBuffers}$(L,n,p,q,K)$ \Comment{{\scriptsize storage $\mathcal{O}(L\cdot\{nK+(p+K)q\})$}}

\For{$l=1$ \textbf{to} $L$}
    \State $(\tilde{\bX},\tilde{\bY}) \gets \textsc{BuildTildeXY}(\bSigma^{(l-1)},\bbeta^{(l-1)},\bLambda^{(l-1)},\{\bA_{\rho_k}\},\{\bD_{\rho_k}\})$ \Comment{{\scriptsize $\mathcal{O}((p{+}K)nq)$}}
    \State $\bF^{(l)} \gets \textsc{UpdateF\_LSMR}(\tilde{\bX},\tilde{\bY})$ \Comment{{\scriptsize LSMR solve cost}}
    \If{\textit{update $\bpsi$}}
        \State $\bpsi^{(l)} \gets \textsc{UpdatePsi\_Slice}(\bpsi^{(l-1)},\{[a_{\psi_k},b_{\psi_k}]\},L_{\text{warm}},...)$
        \State Refresh $(\bA_{\rho_k},\bD_{\rho_k})$ for any updated $\psi_k$
    \EndIf
    \State $\tilde{\bF}^{(l)} \gets \textsc{ProjectEmbeddings}(\bF^{(l)})$ \Comment{{\scriptsize $\mathcal{O}(nK^2)$}}
    \State $(\bbeta^{(l)},\bLambda^{(l)},\bSigma^{(l)}) \gets \textsc{UpdateMNIW}(\tilde{\bF}^{(l)},\bX,\bY,\bmu_{\bbeta},\bV_{\bbeta},\bmu_{\bLambda},\bV_{\bLambda},a,\{b_i\})$
\EndFor

\State \textbf{Output:} Posterior draws $\{\tilde{\bF}^{(l)},\bpsi^{(l)},\bbeta^{(l)},\bLambda^{(l)}\}_{l=1}^L$ (retain post-warm-up; optional thinning).

\end{algorithmic}
\end{algorithm}

%====================== S0: UPDATE PSI BY SLICE (OPTIONAL) ======================%
\begin{submodule}[H]
\small
\setstretch{1.0}
\caption{\textsc{UpdatePsi\_Slice} (adaptive logit–slice for bounded $\psi_k$)}
\label{alg:update_psi_slice}
\begin{algorithmic}[1]
\Require Current $\{\psi_k\}_{k=1}^K$, bounds $\{[a_{\psi_k},b_{\psi_k}]\}_{k=1}^K$, warm-up length $L_{\text{warm}}$, adaptation targets $(\tau_e,\tau_s)$, weight $\lambda$, and learning-rate schedule $\{\eta_t\}$ {\scriptsize \textbf{Adaptation statistics (per iteration $t$ and parameter $\psi_k$):} 
$e_t$ = total \emph{step-out expansions} (left$+$right) taken while bracketing the slice; 
$\tau_e$ = target mean expansions (e.g., $1$–$2$).
$s_t$ = \emph{shrinkage rejections} (failed uniform draws inside $[L,R]$ before acceptance); 
$\tau_s$ = target mean shrink steps (e.g., $1$–$3$). 
$\lambda$ (e.g., $0.25$) tempers the shrink term; $\eta_t$ diminishes during warm-up.}
\For{$k=1$ \textbf{to} $K$}
    \State \textbf{Transform} to $\mathbb{R}$: $\xi \gets \log\!\big((\psi_k-a_{\psi_k})/(b_{\psi_k}-\psi_k)\big)$
    \State \textbf{Define log-target} in $\xi$:
    \Statex \quad $\ell_k(\xi) \gets \log p(\bF_k \mid \psi_k(\xi)) + \log p(\psi_k(\xi)) + \log\!\left|\frac{d\psi_k}{dy}\right|$,
    \Statex \quad $\psi_k(\xi)=a_{\psi_k}+\dfrac{b_{\psi_k}-a_{\psi_k}}{1+e^{-\xi}}$, \; $\log\!\left|\frac{d\psi_k}{dy}\right|=\log(b_k{-}a_k)-\log(1{+}e^\xi)-\log(1{+}e^{-\xi})$.
    \Statex \quad {\small (\textit{Note: $\bF_k \mid \psi_k(\xi)$ is modelled using NNGP. For efficiency and storage management, it is preferable to preallocate }}
    \Statex \quad {\small \textit{$(\bA'_{\rho_k}, \bD'_{\rho_k})$ for computing $\log p(\bF_k \mid \psi_k(\xi))$ and update them as needed.)}}
    \State \textbf{Slice level}: $h \gets \ell_k(\xi) + \log U$, \; $U \sim \mathrm{Unif}(0,1)$
    \State \textbf{Stepping-out}: initialize $[L,R]$ of width $w_k$ around $\xi$; set $e_t\gets 0$
    \While{$\ell_k(L) \ge h$} \State $L \gets L - w_k$; \;$e_t \gets e_t + 1$ \EndWhile
    \While{$\ell_k(R) \ge h$} \State $R \gets R + w_k$; \;$e_t \gets e_t + 1$ \EndWhile
    \State \textbf{Shrinkage}: set $s_t\gets 0$
    \While{true}
        \State $\xi' \sim \mathrm{Unif}(L,R)$
        \If{$\ell_k(\xi') \ge h$} 
            \State \textbf{break} 
        \ElsIf{$\xi' < \xi$}
            \State $L \gets \xi'$ 
        \Else
            \State $R \gets \xi'$
        \EndIf
        \State $s_t \gets s_t + 1$
    \EndWhile
    \State \textbf{Map back}: $\psi_k \gets a_k + \dfrac{b_k-a_k}{1+e^{-\xi'}}$
    \State \textbf{Warm-up adaptation of width $w_k$}:
    \If{$t \le L_{\text{warm}}$}
        \State $\log w_k \leftarrow \log w_k + \eta_t\Big[(e_t-\tau_e)+\lambda\, (s_t-\tau_s)\Big]$ \Comment{{\scriptsize increases $w_k$ if expansions are frequent; decreases if shrinkage is excessive}}
    \Else
        \State \textbf{Freeze} $w_k$ (no further adaptation to preserve stationarity/ergodicity)
    \EndIf
\EndFor
\State \textbf{Return} updated $\{\psi_k\}_{k = 1}^K$ (and $\{w_k\}_{k = 1}^K$)
\end{algorithmic}
\end{submodule}

%====================== S1: PRECOMPUTE NNGP ======================%
\begin{submodule}[H]
\setstretch{1.0}
\caption{\textsc{PrecomputeNNGP}$(\calS,\bpsi,m)$}
\label{alg:precompute_nngp}
\begin{algorithmic}[1]
\State Construct the maximin ordering of $\calS$ \Comment{{\scriptsize $\mathcal{O}(n\log n)$}}

\State Build the nearest neighbor for $\calS$ with the new ordering using K-D tree \citep{bentley1975multidimensional} \Comment{{\scriptsize $\mathcal{O}(n\log n)$}}

%\State Calculate Cholesky decompositions $\bV_{\bLambda} = \bL_{\bLambda}\bL_{\bLambda}^\top$ and $\bV_{\bbeta} = \bL_{\bbeta}\bL_{\bbeta}^\top$ \Comment{{\scriptsize $\mathcal{O}(p^3 + K^3)$}}

\For{$k=1$ \textbf{to} $K$ }
    \State Construct $\{\bA_{\rho_k}\}_{k = 1}^K$ and $\{\bD_{\rho_k}\}_{k = 1}^K$ as described, for example, in \citet{finley2019efficient}
    \Comment{{\scriptsize $\mathcal{O}(nm^3)$ per $k$}}
\EndFor
\State \textbf{Return} $\{\bA_{\rho_k}\}_{k = 1}^K,\{\bD_{\rho_k}\}_{k = 1}^K$
\end{algorithmic}
\end{submodule}

%====================== S2: PREALLOCATE BUFFERS ======================%
\begin{submodule}[h]
\setstretch{1.0}
\caption{\textsc{PreallocateBuffers}$(L,n,p,q,K)$}
\label{alg:preallocate}
\begin{algorithmic}[1]
\State Allocate storage for MCMC chains $\tilde{\bF}^{(l)}$, $\bbeta^{(l)}$, $\bLambda^{(l)}$ (and $\bpsi^{(l)}$ if update $\bpsi$) \Comment{{\scriptsize storage: $\mathcal{O}(L\cdot\{nK+(p+K)q\})$}}
\State Allocate work arrays for $\bbeta$, $\bLambda$ and $\bSigma$ update: $\bX^\ast$, $\bY^\ast$, $\bmu^\ast$, $\bL^\ast$ the Cholesky decomposition of $\bV^\ast$, vector $\bu$ with length $(p+K) q$, matrix for storing residual $Y_{Xm} = \bY^\ast - \bX^\ast\mu^\ast$, and vector for storing updated $b^\ast$ for all outcomes. 
\Comment{{\scriptsize storage: $\mathcal{O}(nq)$}}
\State Allocate for $\tilde{\bF}$ update: $\tilde{\bX}$, $\tilde{\bY}$, $\bF$, QR decomposition of $\bF$, diagonal matrix $D_{\bSigma}$, $f_m$ that store column mean of $\bF$, and vector $\bv$ with length $n(K+q)$. {(If misalignment is present, allocate according to the observed locations for each outcome)} \Comment{{\scriptsize storage: $\mathcal{O}(nq)$}}
\end{algorithmic}
\end{submodule}

%====================== S3: INITIALIZE PARAMS ======================%
\begin{submodule}[H]
\setstretch{1.0}
\caption{\textsc{InitializeParams}$(\bX,\bY,K)$}
\label{alg:init_params}
\begin{algorithmic}[1]
\State $\bbeta^{(0)} \gets (\bX^\top\bX)^{-1}\bX^\top\bY$ {(If misalignment is present, initialize each column using the observed locations for the corresponding outcome)} \Comment{{\scriptsize $\mathcal{O}(np^2{+}npq)$}}
\State Conduct randomized SVD of $R = \bY - \bX\bbeta$ with $K$ principle components, store the loading matrix as $\bLambda^{(0)}$ and update remaining residual $R$. {(If misalignment is present, use only locations with complete observations)}  \Comment{{\scriptsize $\mathcal{O}(nqK)$}}
\State For $i = 1, \ldots, q$, initialize $\sigma_i^{2(0)}$ with variance of the the remaining residual. \Comment{{\scriptsize $\mathcal{O}(nq)$}}
\State \textbf{Return} $(\bbeta^{(0)},\bLambda^{(0)},\bSigma^{(0)})$
\end{algorithmic}
\end{submodule}

%====================== S4: BUILD TILDE X,Y ======================%
\begin{submodule}[H]
\setstretch{1.0}
\caption{\textsc{BuildTildeXY}$(\bSigma,\bbeta,\bLambda,\{\bA_{\rho_k}\},\{\bD_{\rho_k}\})$}
\label{alg:build_tilde}
\begin{algorithmic}[1]
\If{$l=1$}
    \State Initialize $\tilde{\bY}$ and $\Tilde{\bX}$ in \eqref{eq: SLMC_F_cond_post} with $\bL_k^{-1} = \bD_{\rho_k}^{-\frac{1}{2}}(\bI - \bA_{\rho_k})$ \Comment{{\scriptsize $\mathcal{O}(nq(p{+}K))+\mathcal{O}(nmK) \approx \mathcal{O}(nq)$}}
\EndIf
\State Update $D_{\bSigma}$ with diagonal elements from $\bSigma^{(l-1)}$  
\State Update $\Tilde{\bX}$ and $\Tilde{\bY}$ in \eqref{eq: SLMC_F_cond_post} with $\bSigma = D_{\bSigma}$, $\bbeta = \bbeta^{(l-1)}$ and $\bLambda = \bLambda^{(l-1)}$. \Comment{{\scriptsize $\mathcal{O}((p{+}K)nq)$}}
\State \textbf{Return} $(\tilde{\bX},\tilde{\bY})$
\end{algorithmic}
\end{submodule}

%====================== S5: UPDATE F VIA LSMR ======================%
\begin{submodule}[H]
\setstretch{1.0}
\caption{\textsc{UpdateF\_LSMR}$(\tilde{\bX},\tilde{\bY})$}
\label{alg:updateF_lsmr}
\begin{algorithmic}[1]
\State Sample $\bv \sim \mathrm{N}(\mathbf{0},\bI_{Kn})$ \Comment{{\scriptsize $\mathcal{O}(nK)$}}
\State Solve $\tilde{\bX}\,\mathrm{vec}(\bF) = \tilde{\bY}+\bv$ by LSMR; reshape to $\bF \in \mathbb{R}^{n\times K}$
\State \textbf{Return} $\bF$
\end{algorithmic}
\end{submodule}

%====================== S6: PROJECT EMBEDDINGS ======================%
\begin{submodule}[H]
\setstretch{1.0}
\caption{\textsc{ProjectEmbeddings}$(\bF)$}
\label{alg:project_embeddings}
\begin{algorithmic}[1]
\State Record column mean of $\bF$ in $f_m$ and update $\bF = \bF - \mathbf{1}_n  f_m^\top$ \Comment{{\scriptsize $\mathcal{O}(nq)$}}
\State Compute the thin Q matrix of $\bF$ using Modified Gram-Schmidt (MGS) and store it as $\tilde{\bF}$  \Comment{{\scriptsize $\mathcal{O}(nK^2)$}}
\State Update $\tilde{\bF} = \sqrt{n} \cdot Q$ \Comment{{\scriptsize $\mathcal{O}(nK)$}}
\State \textbf{Return} $\tilde{\bF}$
\end{algorithmic}
\end{submodule}

%====================== S7: UPDATE (β,Λ,Σ) VIA MNIW ======================%
\begin{submodule}[H]
\setstretch{1.0}
\caption{\textsc{UpdateMNIW}$(\tilde{\bF},\bX,\bY,\bmu_{\bbeta},\bV_{\bbeta},\bmu_{\bLambda},\bV_{\bLambda},a,\{b_i\})$}
\label{alg:update_mniw}
\begin{algorithmic}[1]
\State Construct $\bX^{\ast}$ and $\bY^{\ast}$ in \eqref{eq: augment_linear_LMC} with $\bF$ replaced by $\tilde{\bF}^{(l)}$ 
\State Calculate the Cholesky decomposition $\bL^\ast$ of $\bV^{\ast-1} = \bX^{\ast\top} \bX^{\ast} = \bL^\ast \bL^{\ast\top}$.  \Comment{{\scriptsize $\mathcal{O}((n+p+K)(p+K)^2)$ }}

\If{no misalignment}
\State Calculate $\bmu^\ast = \bV^{\ast} [\bX^{\ast\top} \bY^{\ast}]$ and update $Y_{Xm} = \bY^\ast - \bX^\ast \bmu^\ast$  \Comment{{\scriptsize$\mathcal{O}((n+p+K)(p+K)q)$}}
\State Sample $(\bgamma^{(l)}, \bSigma^{(l)})$ from \eqref{eq: SLMC_MNIW} {\small \textit{This step applies to both unrestricted and diagonal forms of $\bSigma$. For clarity, we present the technical details under the diagonal $\bSigma$ setting.}}
\Statex \quad (i) Sample elements of $\bSigma^{(l)}$ from $\mathrm{IG}\bigl(a^\ast, b_i^\ast\bigr)$ with $a^\ast = a + \frac{n}{2}$, $b_i^\ast = b_i + \frac{1}{2} Y_{Xmi}'Y_{Xmi}$. Here $Y_{Xmi}$ denotes the $i$-th column of $Y_{Xm}$. 
\Comment{{\scriptsize$\mathcal{O}((n+p+K)q)$}}
\Statex \quad (ii) Sample $\bgamma^{(l)} = [\bbeta^{(l)\top}, \bLambda^{(l)^\top}]^\top$ from $\mbox{MN}(\bmu^\ast, \bV^\ast, \bSigma^{(l)})$ 
\Statex \quad \quad (ii.a) Sample $\bu \sim \mbox{MN}(\mathbf{0}, \bI_{p+K}, \{\bSigma^{(l)}\}^{1/2} )$ \Comment{{\scriptsize$\mathcal{O}((p+K)q)$}}

\Statex \quad \quad (ii.b) Generate $\bgamma^{(l)} = \bmu^\ast +\bL^{\ast-\top}\bu$ \Comment{{\scriptsize$\mathcal{O}((p+K)^2q)$}}
\Else
\For{i in 1:q}
\State Calculate the Cholesky decomposition $\bL_i^\ast$ of $\bV_i^{\ast-1} = \bX_i^{\ast\top} \bX_i^{\ast} = \bL_i^\ast \bL_i^{\ast\top}$ for $\bX_i^{\ast}$ using data
\Statex \quad \quad \quad \quad at the $n_i$ locations $\calS_i$ where the $i$-th outcome is observed.  

\State Compute $\bmu^\ast =\bV_i^{\ast} \bX_i^{\ast\top}\bY^{\ast}_i$, and set $Y_{Xm} = \bY^{\ast}_i - \bX^{\ast}_i\bmu^\ast$ \Comment{{\scriptsize$\mathcal{O}((n_i+p+K)(p+K))$}}
\State Sample $(\bgamma^{(l)}_i, \sigma^{2(l)}_i)$ from \eqref{eq: SLMC_NIG_misaligned}, where $\bgamma^{(l)}_i$ is the $i$-th column of $\bgamma^{(l)}$
\Statex \quad \quad  \quad (i) Sample $\sigma_i^{2(l)}$ from $\mathrm{IG}\bigl(a^\ast, b_i^\ast\bigr)$ with $a^\ast = a + \frac{n_i}{2}$, $b_i^\ast = b_i + \frac{1}{2} Y_{Xmi}^\top Y_{Xmi}$. \Comment{{\scriptsize$\mathcal{O}(n_i+p+K)$}}
\Statex \quad \quad \quad (ii) Sample $\bgamma^{(l)}_i$ from $\mathrm{N}(\bmu^\ast, \sigma_i^{2(l)} \bV_i^{\ast})$  \Comment{{\scriptsize$\mathcal{O}((p+K)^2)$}}

\EndFor
\EndIf
\State \textbf{Return} $(\bbeta,\bLambda,\bSigma)$
\end{algorithmic}
\end{submodule}

\section{Values of parameters in simulation examples}\label{sm: values_examples}

\subsection{Values of parameters to generate simulations in simulation example 1}
\begin{align*}
\bbeta &= \begin{bmatrix} 1.0 & -1.0 & 1.0 & -0.5 & 2.0 & -1.5 & 0.5 & 0.3 & -2.0 & 1.5\\
     -3.0 & 2.0 & 2.0 & -1.0 & -4.0 & 3.0 & 4.0 & -2.5 & 5.0 & -3.0 \end{bmatrix} \\
\bLambda &= \begin{bmatrix}  0.81 & 0.49 & -0.49 & -0.15 & -0.8 & 0.38 & -0.94 & 0.86 & 0.16 & -0.76 \\ -0.11 & 0.02 & -0.33 & 0.74 & -0.75 & -0.73 & -0.3 & 0.92 & -0.38 & -0.59 \end{bmatrix}\\
\mbox{Diagonal}(\bSigma) & = \begin{bmatrix} 0.5 & 1 & 0.4 & 2 & 0.3 & 2.5 & 3.5 & 0.45 & 1.5 & 0.5 \end{bmatrix}
\end{align*}
$$
\phi_1 = 6.0\;, \; \phi_2 = 9.0
$$

\section{Simulation II results}\label{sm: sim2_results}
The second study focused on evaluating the performance of the proposed algorithms under prefixed hyperparameters. For convergence and mixing assessment, three algorithms were considered: (i) the original blocked Gibbs sampler, (ii) the blocked Gibbs sampler with post-processing, and (iii) the proposed ProjMC$^2$ method. Table~\ref{tab:ess_comparison_fix} summarizes the effective sample sizes (ESS) for all model parameters across these approaches, and Figure~\ref{fig:trace_compare_fix} presents trace plots for weakly identifiable parameters, including the intercepts ($\bbeta_0$) and the loadings (elements of $\bLambda$). As shown in Table~\ref{tab:ess_comparison_fix}, sampling efficiency for weakly identifiable parameters, namely the loading matrix $\bLambda$ and the factor matrix $\bF$, improved under prefixed hyperparameters. In contrast, the identifiable parameters---the regression coefficients ($\bbeta_1$) and the noise covariance matrix $\bSigma$---were consistently well estimated by all methods, each yielding minimum ESS values exceeding 5,000.

\begin{table}[ht]
\centering
\begin{tabular}{lccccccc}
\toprule
 & 
\multicolumn{2}{c}{\textbf{Gibbs}} & 
\multicolumn{2}{c}{\textbf{Gibbs + Post}} & 
\multicolumn{2}{c}{\textbf{ProjMC\textsuperscript{2}}} \\
\cmidrule(lr){2-3} \cmidrule(lr){4-5} \cmidrule(lr){6-7} \textbf{ESS}
&(min/mean/med)&  <100 
&(min/mean/med)& <100 
&(min/mean/med)& <100 \\
\midrule
$\bbeta_0$ & 37/41/42 & 100\% & 7332/11827/12398 & 0\% & 8675/12383/12884	& 0\% \\
$\bbeta_1$ & 6650/11261/12001 & 0\% & 6650/11261/12001 & 0\% & 8075/11990/12600 & 0\% \\
$\bLambda$ & 36/46/41	 & 95\%  & 36/43/38  & 100\% & 191/2591/356 & 0\% \\
$\bF$ & 41/67/68 & 100\% & 44/2199/242	 & 23\% & 516/9975/11858 & 0\% \\
$\bSigma$ & 5559/10540/12226	 & 0\%  & 5559/10540/12226 & 0\% & 7600/11563/12756 & 0\% \\
\bottomrule
\end{tabular}
\caption{Comparison of effective sample size (ESS)—reported as minimum, mean, and median, and the proportion of variables with low ESS values (ESS < 100)— across three methods: the original blocked Gibbs sampler (Gibbs), Gibbs with post-processing (Gibbs + Post), and the proposed Projected MCMC (ProjMC$^2$). Results are shown for the intercepts $\bbeta_0$, regression coefficients $\bbeta_1$, loading matrix $\bLambda$, matrix of latent factors $\bF$, and the noise covariance matrix $\bSigma$, based on MCMC chains of 20{,}000 iterations with the first 5{,}000 iterations discarded as warm-up. }
\label{tab:ess_comparison_fix}
\end{table}

For the intercepts ($\bbeta_0$), substantial improvements in convergence and mixing are achieved through recentering latent spatial factors, whether via post-processing or ProjMC$^2$. Post-processing modestly improves the mixing rate of certain spatial factor components ($\bF$), elevating the minimum and median ESS from 41 and 68 to 44 and 242, respectively, across $2 \times 2,000$ parameters. Analogous to study one, for the loading matrix ($\bLambda$) and latent spatial factors ($\bF$), stable and efficient MCMC chains were only obtained through the ProjMC$^2$ method, and trace plots in Figure~\ref{fig:trace_compare_fix} demonstrate rapid convergence of all MCMC chains produced by ProjMC$^2$. 

\begin{figure}[!ht]
  \centering
  \subfloat[Intercepts from Gibbs sampler \label{subfig:trace_gibbs_beta}]{%
    \includegraphics[width=0.33\textwidth]{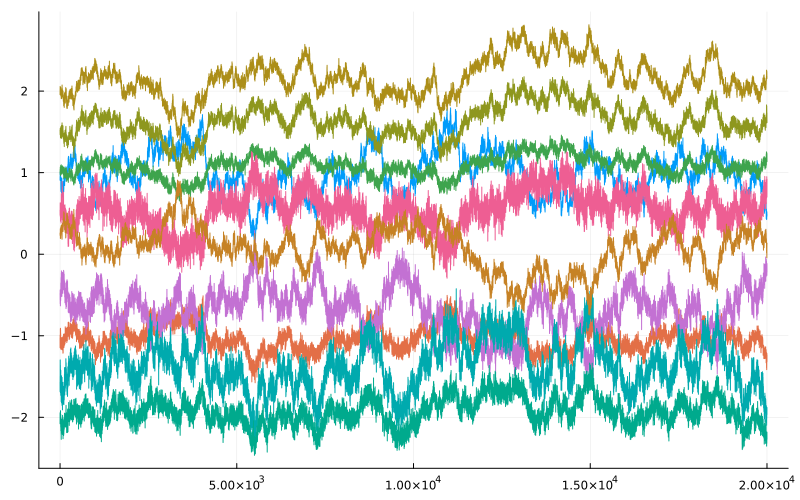}
  }
  \subfloat[Intercepts from Gibbs + Post%-processing 
  \label{subfig:trace_post_beta}]{%
    \includegraphics[width=0.33\textwidth]{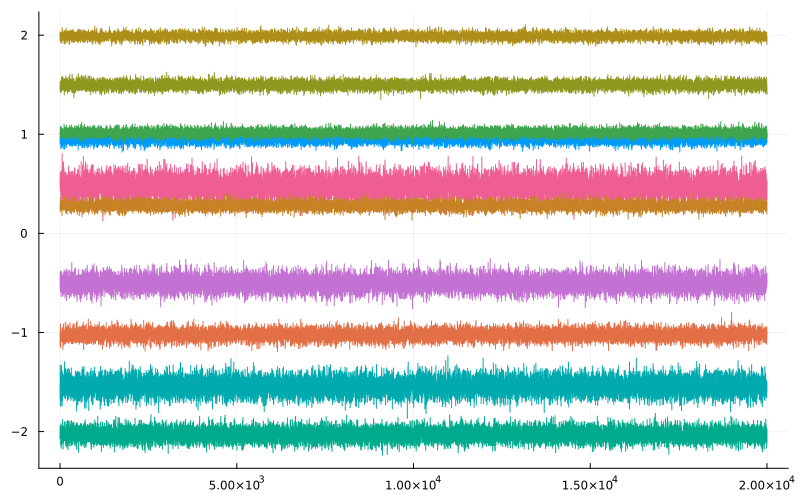}
  }
  \subfloat[Intercepts from ProjMC\textsuperscript{2} \label{subfig:trace_projmc2_beta}]{%
    \includegraphics[width=0.33\textwidth]{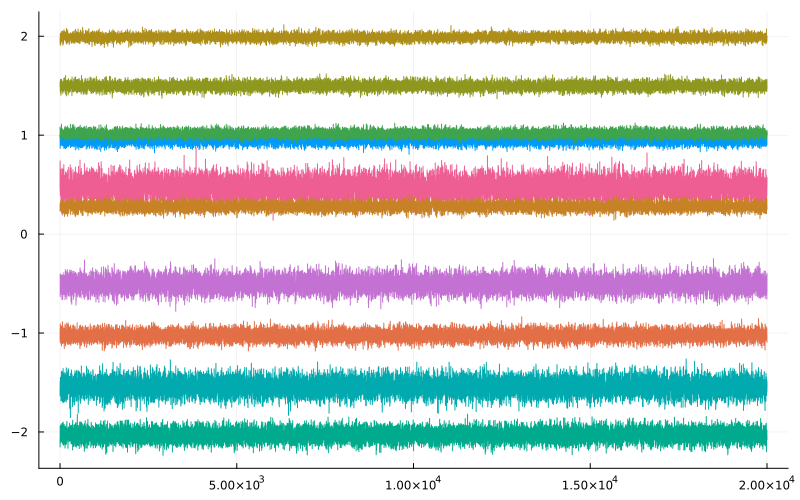}
  }\\[1ex]
  \subfloat[Loadings from Gibbs sampler \label{subfig:trace_gibbs_lambda}]{%
    \includegraphics[width=0.33\textwidth]{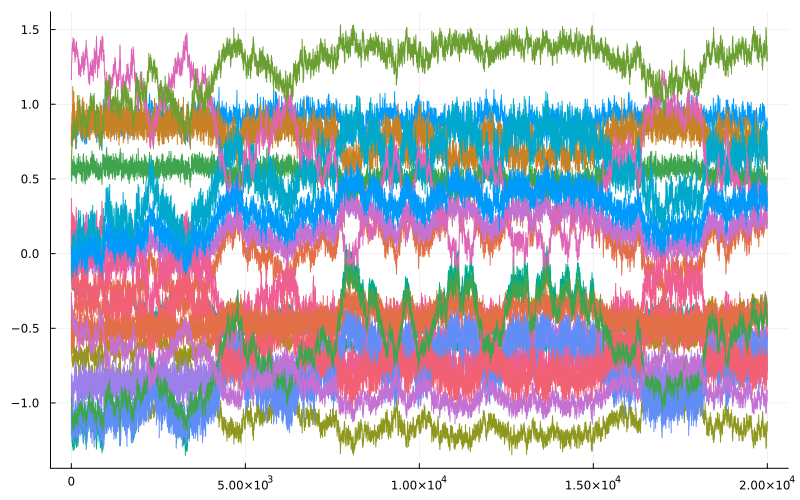}
  }
  \subfloat[Loadings from Gibbs + Post%-processing 
  \label{subfig:trace_post_lambda}]{%
    \includegraphics[width=0.33\textwidth]{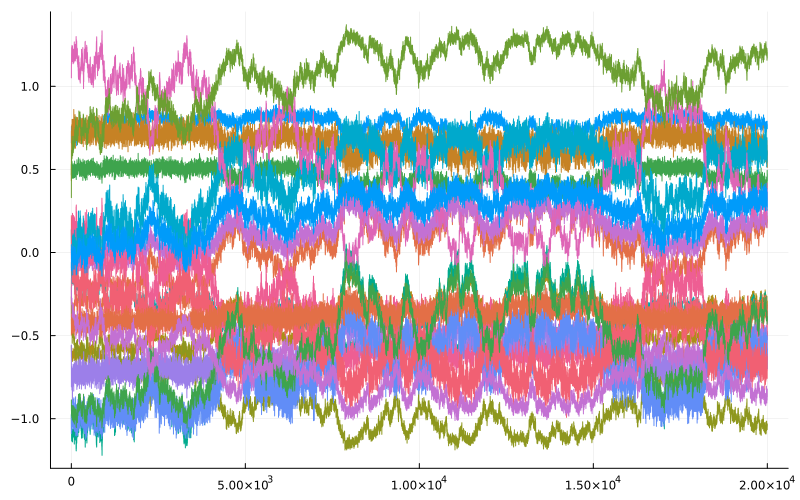}
  }
  \subfloat[Loadings from ProjMC\textsuperscript{2} \label{subfig:trace_projmc2_lambda}]{%
    \includegraphics[width=0.33\textwidth]{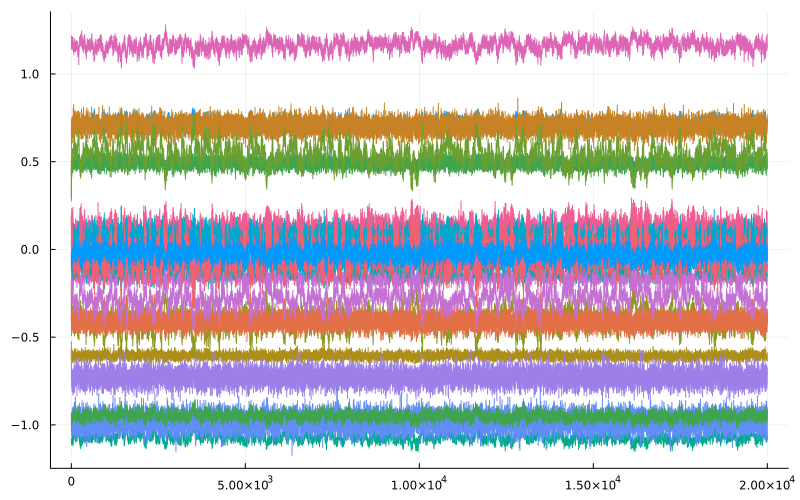}
  }
  \caption{Trace plots of MCMC chains for weakly identifiable parameters: the intercepts (top row) and loading matrix $\bLambda$ (bottom row). Columns correspond to results from the blocked Gibbs sampler (left), Gibbs sampler with post-processing (middle), and ProjMC\textsuperscript{2} (right).}
  \label{fig:trace_compare_fix}
\end{figure}

\paragraph{Inference Accuracy}
Guided by the previous findings, this evaluation focused on the post-processed blocked Gibbs sampler (hereafter, Gibbs+Post) and the proposed ProjMC$^2$ algorithm. Both the true factors and all posterior samples were projected onto the sphere of radius $\sqrt{n-1}$. For each factor, the point estimate was obtained using the Fr\'echet mean (or mean direction) of its posterior samples \cite{mardia2009directional}. Figure~\ref{fig:f_compare_fixphi} presents a visual comparison between the true latent spatial factors and the corresponding point estimates from each method. It is important to note that the decay parameters were deliberately chosen to be smaller than their true values, which predisposed the recovered factors toward smoother spatial patterns. A sensitivity analysis examining the impact of less restrictive priors is provided in Section~\ref{subsec: sens}. Visual inspection of Figure~\ref{fig:f_compare_fixphi} indicates that both Gibbs+Post and ProjMC$^2$ successfully captured the dominant spatial patterns in $\fb_1$ and $\fb_2$. Consistent with the first study, factors estimated via ProjMC$^2$ tended to accentuate patterns specific to each factor, whereas Gibbs+Post occasionally produced estimates in which subtle patterns appeared similar across factors.

\begin{figure}[!ht]
  \centering
  \subfloat[\centering True $f_1$ \label{subfig:MCMC_true_f1_phifix}]{%
    \includegraphics[width=0.25\textwidth]{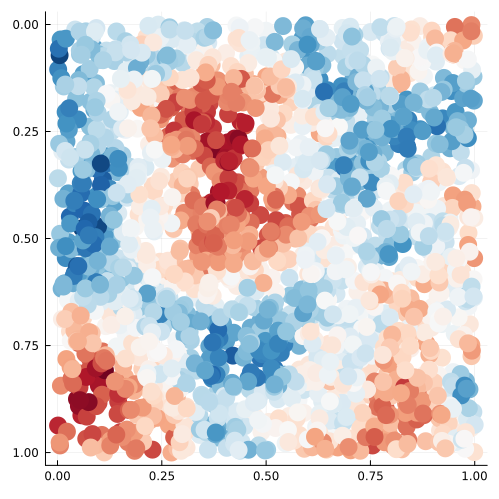}
  }
  \subfloat[\centering Estimated $f_1$ (Gibbs + Post) \label{subfig:MCMC_est_f1_phifix}]{%
    \includegraphics[width=0.25\textwidth]{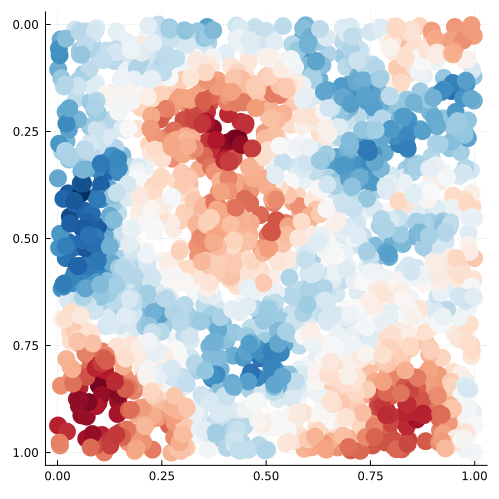}
  }
  \subfloat[\centering Estimated $f_1$ (ProjMC\textsuperscript{2})\label{subfig:PMCMC_est_f1_fixphi}]{%
    \includegraphics[width=0.25\textwidth]{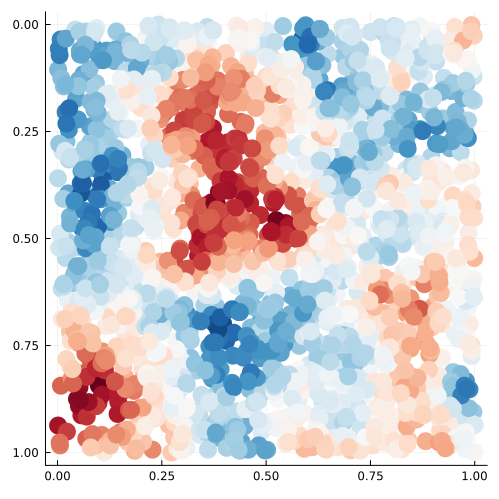}
  }\\[1ex]
  \subfloat[\centering True $f_2$ \label{subfig:MCMC_true_f2_fixphi}]{%
    \includegraphics[width=0.25\textwidth]{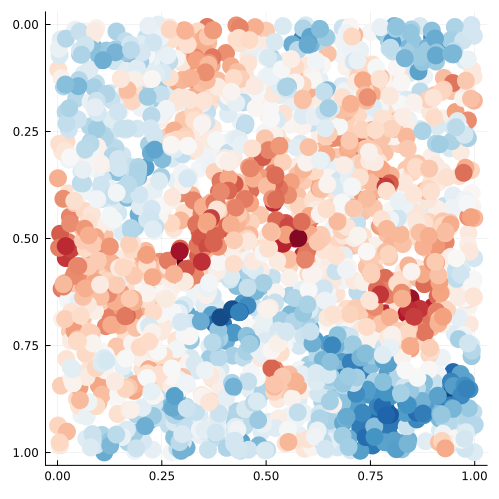}
  }
  \subfloat[\centering Estimated $f_2$ (Gibbs + Post)\label{subfig:MCMC_est_f2_fixphi}]{%
    \includegraphics[width=0.25\textwidth]{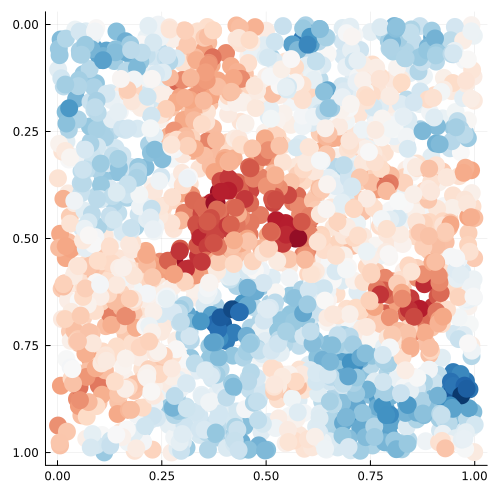}
  }
  \subfloat[\centering Estimated $f_2$ (ProjMC\textsuperscript{2})\label{subfig:PMCMC_est_f2_fixphi}]{%
    \includegraphics[width=0.25\textwidth]{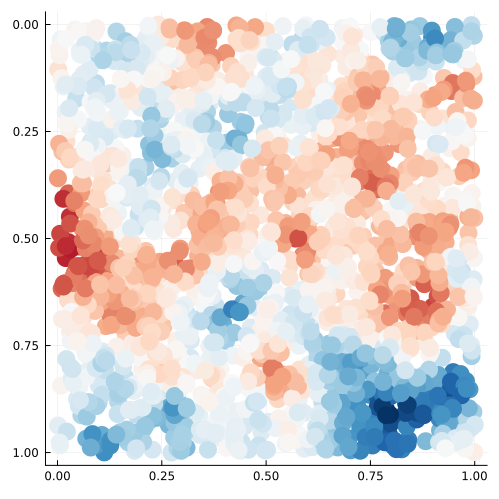}
  }
  \caption{Scatter plots of the true and estimated posterior means for the two latent spatial factors, $\fb_1$ (top row) and $\fb_2$ (bottom row), from the second simulation study. Dot locations indicate spatial positions, and colors represent latent factor values. Columns correspond to the true factors, Gibbs sampler with post-processing, and the proposed ProjMC\textsuperscript{2} method. All factor estimates are centered at zero and rescaled to lie on the sphere of radius $\sqrt{n-1}$, and results for the same factor share a common color scale for visual comparison.}
  \label{fig:f_compare_fixphi}
\end{figure}

Quantitative assessment of inference accuracy is reported in Table~\ref{tab:latent_diag_compare_fixphi}. The two algorithms performed comparably, with Gibbs+Post yielding slightly more accurate point estimates, as indicated by marginally lower Euclidean distances. It should be noted, however, that the posterior distributions approximated by the two methods differ inherently. Consequently, the observed accuracy and variance metrics reflect a complex interplay among the flexibility of the posterior representation, the algorithm's convergence behavior, and its mixing efficiency during posterior sampling.

\begin{table}[ht]
\centering
\begin{tabular}{lcccc}
\toprule
\textbf{Latent Factor} & 
\multicolumn{2}{c}{\textbf{Gibbs + Post}} & 
\multicolumn{2}{c}{\textbf{ProjMC\textsuperscript{2}}} \\
\cmidrule(lr){2-3} \cmidrule(lr){4-5}
& Euclidean Dist. & Sphere Var. & Euclidean Dist. & Sphere Var. \\
\midrule
$f_1$ & 17.37 & 254.8 & 18.70 & 97.5 \\
$f_2$ & 22.18 & 252.0 & 24.66 & 276.1 \\
\bottomrule
\end{tabular}
\caption{Comparison of posterior summaries for the two latent spatial factors ($\fb_1$ and $\fb_2$) across two methods: Gibbs sampler with post-processing (Gibbs + Post) and the proposed Projected MCMC (ProjMC\textsuperscript{2}). Each method is evaluated using two diagnostics: Euclidean distance between the true and estimated factor, and the spherical variance of the posterior samples.}
\label{tab:latent_diag_compare_fixphi}
\end{table}

Finally, the posterior inference for the identifiable model parameters are presented alongside the true parameter values in Table~\ref{tab: sim_infer_sum_fixphi}. Similar to study one, table~\ref{tab: sim_infer_sum_fixphi} reveals that the posterior inference for the regression coefficients $\bbeta$ is almost indistinguishable between the two methods. For the noise variance parameters (diagonal elements of $\bSigma$), the estimates derived from both algorithms are similarly accurate and closely aligned with the true values. Nonetheless, a consistent pattern emerges where ProjMC\textsuperscript{2} produces marginally higher posterior mean estimates for these variances compared to the Gibbs+Post sampler. This slight discrepancy may be attributable to the more constrained sampling space imposed in ProjMC\textsuperscript{2}; limiting the variation captured by the latent factors might necessitate attributing a slightly larger proportion of the residual variance to the noise term $\bSigma$. In summary, both algorithms demonstrate robust and highly comparable performance in estimating the identifiable parameters of the model.

\begin{table}[ht]
\centering
\setlength{\tabcolsep}{3pt}  % <-- this reduces the space between columns
\begin{tabular}{cccccccccccc}
\toprule
 &  & 
\multicolumn{2}{c}{\textbf{Gibbs + Post}} & 
\multicolumn{2}{c}{\textbf{ProjMC\textsuperscript{2}}} &
 &  & 
\multicolumn{2}{c}{\textbf{Gibbs + Post}} & 
\multicolumn{2}{c}{\textbf{ProjMC\textsuperscript{2}}}\\
\cmidrule(lr){3-4} \cmidrule(lr){5-6} \cmidrule(lr){9-10} \cmidrule(lr){11-12}
& & mean & 95\%CI &  mean & 95\%CI & & & mean & 95\%CI &  mean & 95\%CI\\
\midrule
$\bbeta_{[1, 1]}$ & 1.0 & 0.95 & (0.88, 1.02) & 0.95 & (0.88, 1.01) &$\bbeta_{[1, 6]}$ & -1.5 & -1.54 & (-1.69, -1.4) & -1.54 & (-1.68, -1.4) \\
$\bbeta_{[1, 2]}$ & -1.0 & -1.02 & (-1.11, -0.93) & -1.02 & (-1.11, -0.93) &$\bbeta_{[1, 7]}$ & 0.5 & 0.47 & (0.3, 0.64) & 0.47 & (0.31, 0.64) \\
$\bbeta_{[1, 3]}$ & 1.0 & 1.02 & (0.96, 1.08) & 1.02 & (0.96, 1.07) &$\bbeta_{[1, 8]}$ & 0.3 & 0.28 & (0.21, 0.35) & 0.28 & (0.21, 0.34) \\
$\bbeta_{[1, 4]}$ & -0.5 & -0.5 & (-0.63, -0.37) & -0.51 & (-0.64, -0.37) &$\bbeta_{[1, 9]}$ & -2.0 & -2.03 & (-2.14, -1.92) & -2.03 & (-2.14, -1.92) \\
$\bbeta_{[1, 5]}$ & 2.0 & 1.99 & (1.93, 2.05) & 1.99 & (1.94, 2.05) &$\bbeta_{[1, 10]}$ & 1.5 & 1.5 & (1.43, 1.56) & 1.5 & (1.43, 1.56) \\
\hline
$\bbeta_{[2, 1]}$ & -3.0 & -2.9 & (-3.02, -2.78) & -2.9 & (-3.01, -2.78) &$\bbeta_{[2, 6]}$ & 3.0 & 3.05 & (2.8, 3.31) & 3.05 & (2.8, 3.3) \\
$\bbeta_{[2, 2]}$ & 2.0 & 2.02 & (1.87, 2.17) & 2.02 & (1.87, 2.17) &$\bbeta_{[2, 7]}$ & 4.0 & 4.01 & (3.71, 4.3) & 4.0 & (3.71, 4.29) \\
$\bbeta_{[2, 3]}$ & 2.0 & 2.01 & (1.91, 2.11) & 2.01 & (1.91, 2.11) &$\bbeta_{[2, 8]}$& -2.5 & -2.49 & (-2.61, -2.37) & -2.49 & (-2.61, -2.37) \\
$\bbeta_{[2, 4]}$ & -1.0 & -1.02 & (-1.25, -0.8) & -1.02 & (-1.25, -0.79) &$\bbeta_{[2, 9]}$ & 5.0 & 5.07 & (4.88, 5.26) & 5.07 & (4.88, 5.26) \\
$\bbeta_{[2, 5]}$ & -4.0 & -3.98 & (-4.08, -3.88) & -3.98 & (-4.08, -3.88) &$\bbeta_{[2, 10]}$ & -3.0 & -2.98 & (-3.1, -2.87) & -2.99 & (-3.1, -2.87) \\
\hline
$\bSigma_{[1, 1]}$ & 0.5 & 0.51 & (0.48, 0.55) & 0.52 & (0.49, 0.56) &$\bSigma_{[6, 6]}$ & 2.5 & 2.61 & (2.45, 2.78) & 2.62 & (2.46, 2.8) \\
$\bSigma_{[2, 2]}$ & 1.0 & 1.01 & (0.95, 1.08) & 1.02 & (0.95, 1.08) &$\bSigma_{[7, 7]}$ & 3.5 & 3.57 & (3.35, 3.8) & 3.57 & (3.36, 3.8) \\
$\bSigma_{[3, 3]}$ & 0.4 & 0.41 & (0.39, 0.44) & 0.41 & (0.39, 0.44) &$\bSigma_{[8, 8]}$ & 0.45 & 0.46 & (0.43, 0.5) & 0.49 & (0.45, 0.52) \\
$\bSigma_{[4, 4]}$ & 2.0 & 2.12 & (1.99, 2.26) & 2.13 & (2.0, 2.27) &$\bSigma_{[9, 9]}$ & 1.5 & 1.57 & (1.47, 1.67) & 1.58 & (1.48, 1.68) \\
$\bSigma_{[5, 5]}$ & 0.3 & 0.31 & (0.29, 0.34) & 0.33 & (0.31, 0.36) &$\bSigma_{[10, 10]}$ & 0.5 & 0.5 & (0.47, 0.54) & 0.51 & (0.48, 0.55) \\
\bottomrule
\end{tabular}
\caption{Posterior inference for identifiable model parameters. Comparison of posterior means and 95\% credible intervals for regression coefficients ($\bbeta$) and noise variances (diagonal elements of $\bSigma$) obtained using the Gibbs sampler with post-processing (Gibbs+Post) and the proposed ProjMC\textsuperscript{2} algorithm, referenced against the true parameter values.}
\label{tab: sim_infer_sum_fixphi}
\end{table}

\label{lastpage}

\end{document}